\newcommand{\nosemic}{\renewcommand{\@endalgocfline}{\relax}}
\newcommand{\dosemic}{\renewcommand{\@endalgocfline}{\algocf@endline}}
\theoremstyle{plain}
\newtheorem{theorem}{Theorem}[section]
\newtheorem{lemma}[theorem]{Lemma}
\newtheorem{claim}[theorem]{Claim}
\theoremstyle{definition}
\newtheorem{definition}[theorem]{Definition}
\theoremstyle{remark}
\begin{document}

\lightertitle{Faster Algorithm for Minimum Ply Covering of Points with Unit Squares}
\lighterauthor{Siddhartha Sarkar}
\lighteraffil{Computer Science and Automation, Indian Institute of Science, Bengaluru}
\lighteremail{siddharthas1@iisc.ac.in}

\begin{abstract}
Biedl et al. introduced the minimum ply cover problem in CG 2021 following the seminal work of Erlebach and van Leeuwen in SODA 2008. They showed that determining the minimum ply cover number for a given set of points by a given set of axis-parallel unit squares is NP-hard, and gave a polynomial time $2$-approximation algorithm for instances in which the minimum ply cover number is bounded by a constant. Durocher et al. recently presented a polynomial time $(8 + \epsilon)$-approximation algorithm for the general case when the minimum ply cover number is $\omega(1)$, for every fixed $\epsilon > 0$. They divide the problem into subproblems by using a standard grid decomposition technique. They have designed an involved dynamic programming scheme to solve the subproblem where each subproblem is defined by a unit side length square gridcell. Then they merge the solutions of the subproblems to obtain the final ply cover. We use a horizontal slab decomposition technique to divide the problem into subproblems. Our algorithm uses a simple greedy heuristic to obtain a $(27+\epsilon)$-approximation algorithm for the general problem, for a small constant $\epsilon>0$. Our algorithm runs considerably faster than the algorithm of Durocher et al. We also give a fast $2$-approximation algorithm for the special case where the input squares are intersected by a horizontal line. The hardness of this special case is still open. Our algorithm is potentially extendable to minimum ply covering with other geometric objects such as unit disks, identical rectangles etc.
\end{abstract}
\section{Introduction}
Set Cover is a fundamental problem in combinatorial optimization. Given a range space $(X, \mathcal{R})$ consisting of a set $X$ and a family $\mathcal{R}$ of subsets of $X$ called the ranges, the goal is to compute a minimum cardinality subset of $\mathcal{R}$ that covers all the points of $X$. It is NP-hard to approximate the minimum set cover within a logarithmic factor \cite{Raz_setcover_hard_97, feige_1998}. When the ranges are derived from geometric objects, it is called the Geometric Set Cover problem. Computing the minimum cardinality set cover remains NP-hard even for simple 2D objects, such as unit squares on the plane \cite{geom_set_cover_2D_hardness}. There is a rich literature on designing approximation algorithms for various geometric set cover problems (see \cite{GeomSetCover_agarwal_pan_2014, Clarkson_Varadarajan_geomsetcover, Hochbaum_Maass_covering, Chan_Grant_pack_cover, raman_geom_setcover_focs2014, Erlebach_wted_set_cover}). More often than not, the geometric versions of the covering problems are efficiently solvable or approximated well. Many variants of the Geometric Set Cover problem find applications in facility location, interference minimization in wireless networks, VLSI design, etc \cite{Demaine_application_of_coverage, geomsetcover_wireless}. In this paper, we investigate the following covering problem.

\textit{Problem Statement.} Given a set of geometric objects $\mathcal{S}$, the \textit{ply} of $\mathcal{S}$ is the maximum cardinality of any subset of $\mathcal{S}$ that has non-empty common intersection. A set of objects $\mathcal{S}$ is said to cover a set of points $P$ if for each point $p\in P$ there exists an object $S\in \mathcal{S}$ such that $p$ is contained in $S$. Given a set of objects (for e.g., unit squares) $\mathcal{S}$, and a set of points $P$ on the plane, the goal is to pick a subset $\mathcal{S'}\subseteq \mathcal{S}$ such that every point in $P$ is covered by at least an object in $\mathcal{S'}$ while minimizing the ply.

In this paper we will deal with the minimum ply cover problem where the objects to cover with are unit side length squares.

\subsection{Our contribution} We design a simple greedy heuristic for axis-parallel unit squares on the plane that achieves an approximation factor of $(27+\epsilon)$, where $\epsilon>0$ is a small constant. Our algorithm runs in $O(n^2m^2)$ time where $n$ is the number of input points and $m$ is the number of input squares. Our algorithm is considerably faster when compared to the DP algorithm of \cite{Durocher_Mondal_Minply} that requires $O(n + m^{8}k^{4}\log k + m^{8}\log m\log k)$ time, where $k$ is the optimal ply value. Our algorithm is easy to implement and extend to other covering objects.

\subsection{Related Work}
The minimum ply cover problem is a generalization of the minimum membership set cover (i.e., MMSC) problem. Given a set of points $P$ and a family $\mathcal{S}$ of subsets of $P$, the goal in MMSC is to find a subset $\mathcal{S'}\subseteq \mathcal{S}$ that covers $P$ while minimizing the maximum number of elements of $\mathcal{S'}$ that contain a common point of $P$. \cite{kuhn_mmsc} introduced the general problem and they presented an LP based technique to obtain an approximation factor of $\ln n$ which matches the hardness lower bound. The membership is measured at the points of $P$ in the MMSC problem. \cite{erlebach_soda_minply} proved that the minimum membership set cover of points with unit disks or unit squares is NP-hard and cannot be approximated by a ratio smaller than $2$. For unit squares, they present a $5$-approximation algorithm with the assumption that the minimum ply value is bounded by a constant. In the minimum ply cover (abbr. MPC) problem, the ply (i.e., membership) is measured at any point on the plane. \cite{Biedl_Minply} prove that the MPC problem is NP-hard for both unit squares and unit disks, and does not admit polynomial-time approximation algorithms with ratio smaller than $2$ unless P=NP. They gave polynomial time $2$-approximation algorithms for this problem with respect to unit squares and unit disks, when the minimum ply value is bounded by a constant.

\section{Minimum Ply Covering}
Our approach is to divide the problem into distinct subproblems and solve the subproblems. Then we merge the solutions to obtain the ply cover of the original problem. First, let us define and solve a special case of the minimum ply cover problem. The techniques used here will be useful in the future.

\subsection{Squares are intersected by a horizontal line}
Suppose that the input squares are intersected by a horizontal line and the input points lie in at least one of the input squares. Refer to Figure (\ref{fig:mpcsihl}). For this special case, we give an algorithm that computes the minimum ply cover of ply at most twice the optimal ply. The algorithm is simple to state and analyze. First, we divide the problem into two subproblems. One corresponding to the input points above the horizontal line and the other corresponding to the input points below the horizontal line.

\begin{figure}[ht!]
\centering
\includegraphics[width=8cm]{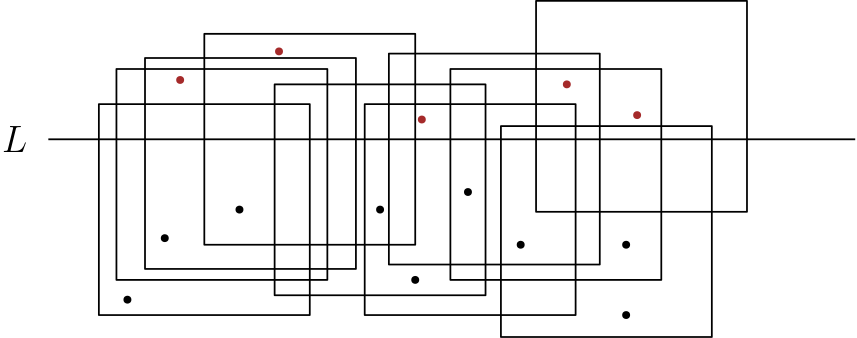}
\caption{An instance of the special case where all the input squares are intersected by a horizontal line $L$ and the points lie on both sides.}
\label{fig:mpcsihl}
\end{figure}



\begin{definition}\label{defn:algo_mpsc_horiline}
Consider a horizontal line $L$. Now consider a set $\mathcal{S}$ of unit squares intersecting $L$. Let $P$ be a set of points located below $L$ such that each point lies inside at least one of the squares in $\mathcal{S}$. Then \textbf{MPCSIHL1} is the problem of computing the minimum ply cover of $P$ with the squares in $\mathcal{S}$.
\end{definition}
\textbf{MPCSIHL1} stands for \textbf{M}inimum \textbf{P}ly \textbf{C}over for unit \textbf{S}quares \textbf{I}ntersected by a \textbf{H}orizontal \textbf{L}ine where points lie on only \textbf{1} side (i.e., above or below the horizontal line). Refer to Figure (\ref{fig:mpcsihl1})

\begin{figure}[ht!]
\centering
\includegraphics[width=8cm]{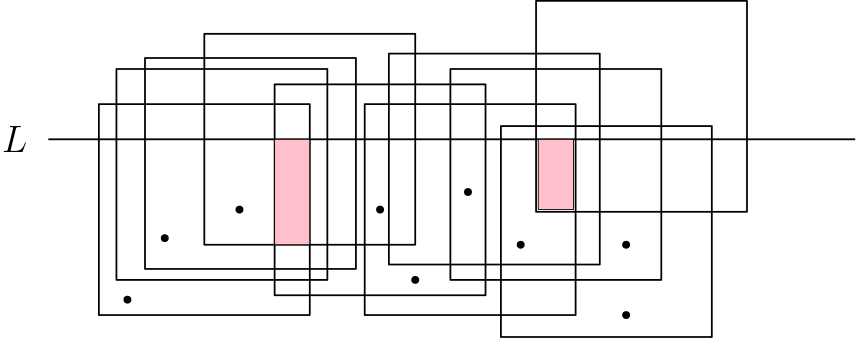}
\caption{An instance of the \textbf{MPCSIHL1} problem where all the input points lie below $L$. The pink regions are ply regions with ply value $5$.}
\label{fig:mpcsihl1}
\end{figure}

For a set of squares $\mathcal{S}$, we call a maximum \textit{depth} region on the plane as the \textit{ply region} and denote it by $clique(\mathcal{S})$. So, the \textit{ply region} is the common intersection region below the line $L$. The corresponding depth is the ply which is denoted by $ply(\mathcal{S})$. Note that the \textit{ply region} of a set of squares $\mathcal{S}$ may not be unique. Refer to Figure \ref{fig:mpcsihl1}. The Algorithm (\ref{algo:exact_dp}) is a heuristic that computes a minimum ply cover incrementally. The heuristic uses a greedy rule. For multiple sets of squares $S_1, S_2, \ldots, S_{k}$, we define the following \textit{greedy criteria} for deciding which one to prefer.

\textit{Greedy criteria}. If a set of squares has multiple ply regions, consider the rightmost one as its representative. Use following preference rules to for selection and breaking ties.
\begin{enumerate}
    \item Prefer the set of squares having the minimum ply value.
    \item If rule (1) leads to a tie, prefer the set of squares with the leftmost \textit{ply region} right side.
    \item If rule (2) also leads to a tie, prefer the set of squares with the narrowest \textit{ply region}. 
    \item If the tie persists, then select a candidate set of squares arbitrarily.
\end{enumerate}

\begin{algorithm}[H]
\SetAlgoLined
\SetKwInOut{Input}{Input}
\SetKwInOut{Output}{Output}
\Input{ A horizontal line $L$, a set $\mathcal{S}$ of $m$ unit squares  intersecting $L$, a set $P$ of $n$ points below $L$ such that each of them lies in at least one square in $\mathcal{S}$.}
\Output{ Returns a set of squares $\mathcal{S'}\subseteq \mathcal{S}$ covering $P$ while minimizing ply.}
\nosemic 
Sort the points in $P$ from left to right.\;
Let the sorted order of points be $p_1, p_2, \ldots, p_n$. \;
$T\leftarrow \infty$\hspace{5mm}//  A 2D array $T$ of dimension $n\times m$ where each entry is  initialized to $\infty$ or \textit{infeasible}.\;
\For{$i\gets1$ \KwTo $n$}{
    \For{$j\gets1$ \KwTo $m$}{
        $T[i, j] = Compute\-Entry(i, j, T)$\;
    }
}
Obtain the best entry in the $n$-th row as per the \textit{greedy criteria}, say $Sol$.\;
\Return $Sol$
\caption{Exact algorithm for MPCSIHL1}
\label{algo:exact_dp}
\end{algorithm}
\vspace{5mm}
The procedure Compute\-Entry($\cdot$) shows the computation of each entry in the table. Observe that the entry $T[i, j]$ takes into account all the feasible solutions in the $(i-1)$-th row of the table $T$. If $T[i, j]$ is formed by taking the union of $T[i-1, k]$ with $s_{j}$, for some $k\in [m]$, then we say that $T[i-1, k]$ is the parent of $T[i, j]$ and $T[i, j]$ is derived from $T[i-1, k]$. Intuitively, $T[i, j]$ represents the minimum ply cover for $P_{i}=\{p_1, \ldots, p_i\}$ as per the \textit{greedy criteria} such that the point $p_{i}$ is covered by the square $s_j\in T[i, j]$. Starting from $Sol$, it is possible to trace a path to an entry in the first row of the table by following the parents. The path has one node from each row of the table $T$.

After the computation is over, for the $i$-th row in $T$ corresponding to the point $p_{i}$, there can be multiple entries/covers that achieve the same ply. In the next iteration, while computing $T[i+1, j]$, we will use the \textit{greedy criteria} to select the best entry from the $i$-th row.

Denote by $P_i$ the set $\{p_1, p_2, \ldots, p_i\}$ of the $i$ leftmost input points. The entry $T[i, j]$ is a feasible ply cover for $P_{i}$, with the constraint that $s_j$ is included in the solution. The algorithm (\ref{algo:exact_dp}) computes a set cover $Sol$ for $P$. We claim that our solution is optimal. We will prove the following \textbf{loop invariant} to argue the correctness of our claim. 

\begin{claim}\label{claim:loop_invariant}
(\textit{loop invariant}) At the end of the $i$-th iteration of the outer \textit{for} loop at line $4$ of Algorithm (\ref{algo:exact_dp}), in each row $l$ such that $1\leq l\leq i$, there exists an entry $T(l, j)$ corresponding to the minimum ply set cover for the set of points $P_l$ such that it is the best as per the \textit{greedy criteria}; $j$ can vary for each row.
\end{claim}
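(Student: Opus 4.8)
The plan is to prove the loop invariant by induction on the iteration count $i$. The base case $i=1$ requires showing that after the first iteration, row $1$ contains an entry $T(1,j)$ that is the best, per the \emph{greedy criteria}, among all single-square covers of $P_1=\{p_1\}$; this should follow directly from the inner loop of \texttt{Compute\-Entry} ranging over all $m$ squares and setting $T(1,j)$ feasible exactly when $p_1 \in s_j$, combined with the fact that the greedy criteria impose a total preorder with an arbitrary tie-break, so a "best" entry exists and is selected. For the inductive step, assume the invariant holds at the end of iteration $i-1$; in particular there is a row-$(i-1)$ entry $T(i-1,k^\*)$ that is the minimum-ply, greedy-best cover of $P_{i-1}$. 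I would then argue that during iteration $i$ the algorithm constructs, for the correct choice of square $s_j$ covering $p_i$, the entry $T(i,j)$ by unioning $s_j$ with a suitable parent entry from row $i-1$, and that this yields the greedy-best minimum-ply cover of $P_i$.

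The key structural step is an exchange argument: take any optimal (minimum-ply, greedy-best) cover $O$ of $P_i$. Some square $s_j \in O$ covers $p_i$. I claim that the restriction of $O$ to the points $P_{i-1}$, or more precisely a cover of $P_{i-1}$ obtainable from $O$ by possibly dropping $s_j$, has ply no larger than that of $O$ and is "compatible" with some row-$(i-1)$ entry. Here the horizontal-line / left-to-right sweep structure is essential: because all squares intersect $L$ and all points lie below $L$, the ply region of any subset lies below $L$, and the squares containing $p_i$ all have their left boundaries at $x$-coordinate $\le x(p_i)$; the greedy criteria (rightmost ply region as representative, then leftmost ply-region right side, then narrowest) are designed so that extending a partial cover of $P_{i-1}$ by $s_j$ cannot be improved by having used a different, "worse" partial cover. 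Formally I would show: if $T(i-1,k)$ is the greedy-best cover of $P_{i-1}$ and $T(i-1,k')$ is any other feasible row-$(i-1)$ entry, then $ply\big(T(i-1,k)\cup\{s_j\}\big) \le ply\big(T(i-1,k')\cup\{s_j\}\big)$, and ties are resolved consistently — so that taking the greedy-best parent and unioning with $s_j$ dominates. This monotonicity is what licenses the dynamic program to keep only the greedy-best entry per row.

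With that lemma in hand, the inductive step closes as follows: for the square $s_j^\*$ that the true optimum uses to cover $p_i$, the entry $T(i,j^\*)$ computed by \texttt{Compute\-Entry} — which considers unioning $s_j^\*$ with every feasible row-$(i-1)$ entry, in particular the greedy-best one guaranteed by the induction hypothesis — has ply at most that of the optimum, hence equal to it, and is at least as good as the optimum under the greedy criteria; since it is also a feasible cover of $P_i$, it realizes the minimum ply. Finally, among all $j$ with $T(i,j)$ a minimum-ply cover of $P_i$, the greedy criteria select a well-defined best one, establishing the invariant for row $i$; rows $l < i$ are untouched in iteration $i$ and so retain the invariant by the induction hypothesis.

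I expect the main obstacle to be the monotonicity/exchange lemma in the second paragraph: one must verify carefully that the three-level greedy tie-breaking is exactly strong enough that a locally greedy-best partial cover of $P_{i-1}$ is never "regretted" after adding $s_j$ to cover $p_i$ — i.e., that no globally optimal cover of $P_i$ is forced to pass through a row-$(i-1)$ entry that is suboptimal under the greedy criteria. Handling the ply-region non-uniqueness (choosing the rightmost representative) and confirming that adding a new square to the right cannot decrease a previously small ply below the new bound will require a short case analysis on the relative $x$-extents of the ply regions and of $s_j$.
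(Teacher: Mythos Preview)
Your inductive skeleton matches the paper's, but the central monotonicity lemma you propose is false, and this is a genuine gap. You claim that if $T(i-1,k)$ is the greedy-best cover of $P_{i-1}$ then $ply\bigl(T(i-1,k)\cup\{s_j\}\bigr)\le ply\bigl(T(i-1,k')\cup\{s_j\}\bigr)$ for every other feasible $k'$. A small counterexample: take unit squares $s_a$ with $x$-extent $[0,1]$, $s_d$ with $x$-extent $[0.3,1.3]$, $s_j$ with $x$-extent $[0.2,1.2]$, all straddling $L$, and points $p_1,p_2,p_3$ at $x=0.1,0.5,1.1$ below $L$. Then $T(3,d)=\{s_a,s_d\}$ and $T(3,j)=\{s_a,s_j\}$ both have ply $2$ and identical ply-region right side $x=1$, but $T(3,d)$ has the narrower ply region, so it is the greedy-best. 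Yet $T(3,d)\cup\{s_j\}$ has ply $3$ while $T(3,j)\cup\{s_j\}=T(3,j)$ has ply $2$. The algorithm is still correct here precisely because \texttt{ComputeEntry} ranges over \emph{all} row-$(i-1)$ parents, so $T(4,j)$ is built from $T(3,j)$, not from the greedy-best $T(3,d)$. Your argument, which routes the bound through the greedy-best parent, cannot recover this.

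The situation the counterexample exposes is exactly the paper's Case~2: $OPT$ for $P_i$ equals its own restriction to $P_{i-1}$, because the square $s_k\in OPT$ covering $p_i$ was already needed for earlier points. The paper does \emph{not} try to compare $Sol_{i-1}\cup\{s_k\}$ to $OPT$ directly. Instead it identifies the earliest row $i'<i$ at which $s_k$ first becomes necessary in $OPT$, and argues about the specific entry $T(i',k)$: it proves that $s_k$ is the rightmost square in $OPT$'s restriction, that every point $p_{i'},\ldots,p_i$ lies in $s_k$ (so $T(i',k)$ is already a feasible cover of $P_i$ and persists through the table to row $i$), and finally that $ply(T(i',k))\le ply(OPT)$, using the greedy tie-breaking on the ply-region position at row $i'-1$ rather than at row $i-1$. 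The point of tracing back to $i'$ is that there $s_k\notin OPT_{i'-1}$, so adding $s_k$ to both sides is a fair comparison; at row $i-1$ it is not, which is why your row-$(i-1)$ monotonicity fails. Your sketch (``a short case analysis on the relative $x$-extents'') does not anticipate this trace-back, and without it the inductive step does not close.
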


\begin{algorithm}[H]
\SetKwInput{KwInput}{Input}                
\SetKwInput{KwOutput}{Output}              
\DontPrintSemicolon
  

  \SetKwFunction{FSum}{Compute\-Entry}
  
  \SetKwProg{Fn}{Function}{:}{}
  \Fn{\FSum{$i$, $j$, $T$}}{
        \If{$i==1$}{
            \If{$p_{i}\in s_j$}{
                \Return $\{s_j\}$
            }
        \Else{
        \Return $\infty$
        }
        }
        Let $\mathcal{F}_{i-1}$ be the set of feasible solutions in the $(i-1)$-th row, i.e., the entries with finite value.\;
        \For{$F\in \mathcal{F}_{i-1}$}{
        Compute the ply region of $F\cup \{s_j\}$ \;
        }
        Select the best entry among the solutions computed in the for loop in line $9$, as per the \textit{greedy criteria}, say $Sol_{ij}$.\;
        \KwRet $Sol_{ij}$ \;
  }
\end{algorithm}

\begin{proof}
First, we consider the base case. For the point $p_1$, our algorithm fills the first row of $T$ with feasible solutions containing one square each, i.e., for each square $s_j$ covering $p_1$, the corresponding table entry $T(1, j)$ is a singleton set $\{s_j\}$. The ply of each feasible solution is $1$, which is trivially optimal. The entry in row $1$ corresponding to the leftmost square containing $p_1$ is the optimal solution for $P_1$ based on our \textit{greedy criteria}.

We fix an $1< i< n$. Our inductive assumption is that the algorithm has already computed the optimal solution for the first $(i-1)$ iterations of the outer \textit{for} loop. Specifically, we assume that our \textit{loop invariant} is true after the first $(i-1)$ iterations.

We need to prove that at the end of the $i$-th iteration of the outer \textit{for} loop, there exists a square $s_j$ containing the point $p_{i}$ such that the entry $T[i, j]$ is the \textit{optimal} ply cover for $P_{i}$ satisfying the \textit{greedy criteria}. We term such a solution as the optimal \textit{parametric} solution.

After the $i$-th iteration of the outer \textit{for} loop, let $Sol_{i}$ denote an optimal set cover (in the sense of the \textit{greedy criteria}) stored at the $i$-th row of the table $T$, for $1\leq i\leq n$. Suppose for the sake of contradiction, there exists a better solution $OPT_{ij^{*}}$ for $P_i$ and $p_{i}\in s_{j^{*}}$ and $s_{j^{*}}\in OPT_{ij^{*}}$. That means either
\begin{align}\label{eqn:1a}
    ply(OPT_{ij^{*}})< ply(Sol_{i})
\end{align}
Or 
\begin{align}\label{eqn:1b}
    ply(OPT_{ij^{*}}) = ply(Sol_{i}),\hspace{5mm} \text{but $OPT_{ij^{*}}$ is better than $Sol_{i}$ as per the \textit{greedy criteria}.}
\end{align}

Let's consider the possibility (\ref{eqn:1a}) first. Our algorithm essentially constructs $Sol_{i}$ by combining a square $s_j$ with a feasible solution $T(i-1, j')$ for $P_{i-1}$. If $s_j\in T(i-1, j')$ or $s_j$ does not intersect the ply region of $T(i-1, j')$, then $T(i-1, j')\cup \{s_j\}$ is a feasible solution for $P_i$ with ply $ply(T(i-1, j'))$.

Let $OPT_{(i-1)j''}\subseteq OPT_{ij^{*}}$ be a set cover (we do not claim that it has minimum ply) for $P_{i-1}$ with no redundant squares. One way to construct $OPT_{(i-1)j''}$ from $OPT_{ij^{*}}$ is to simply remove the square (if any) that covers $p_i$ exclusively  from $OPT_{ij^{*}}$. Let $s_{j''}\in OPT_{(i-1)j''}$ and $p_{i-1}\in s_{j''}$. Because of the subset relationship, we have
\begin{align}\label{eqn:2}
ply(OPT_{(i-1)j''}) \leq ply(OPT_{ij^{*}})
\end{align}

By our inductive hypothesis, we have
\begin{align}\label{eqn:3}
ply(Sol_{i-1}) \leq ply(OPT_{(i-1)j''})
\end{align}
Since our algorithm adds at most one square for every new point, we have
\begin{align}\label{eqn:4}
ply(Sol_{i}) \leq 1 + ply(Sol_{i-1})
\end{align}
Therefore, from (\ref{eqn:1a}), (\ref{eqn:2}), (\ref{eqn:3}), and (\ref{eqn:4}), we conclude that
\[
ply(OPT_{(i-1)j''}) = ply(OPT_{ij^{*}}) = ply(Sol_{i-1}) = ply(Sol_{i}) - 1
\]
Now consider the solutions $OPT_{(i-1)j''}$ for $P_{i-1}$ and $OPT_{ij^{*}}$ for $P_{i}$. There are two ways in which $OPT_{ij^{*}}$ can be related to $OPT_{(i-1)j''}$. 

\textbf{Case 1}: $\exists$ a square $s_{j^{*}}\in OPT_{ij^{*}}\setminus OPT_{(i-1)j''}$ covering $p_{i}$ such that $s_{j^{*}}$ does not intersect the ply region of $OPT_{(i-1)j''}$. 

\textbf{Case 2}: $OPT_{(i-1)j''}$ already covers the point $p_i$ and thus $OPT_{ij^{*}} = OPT_{(i-1)j''}$. 

Figure (\ref{fig:img1}) shows the two cases.
\begin{figure}[ht!]
\centering
\includegraphics[width=13cm]{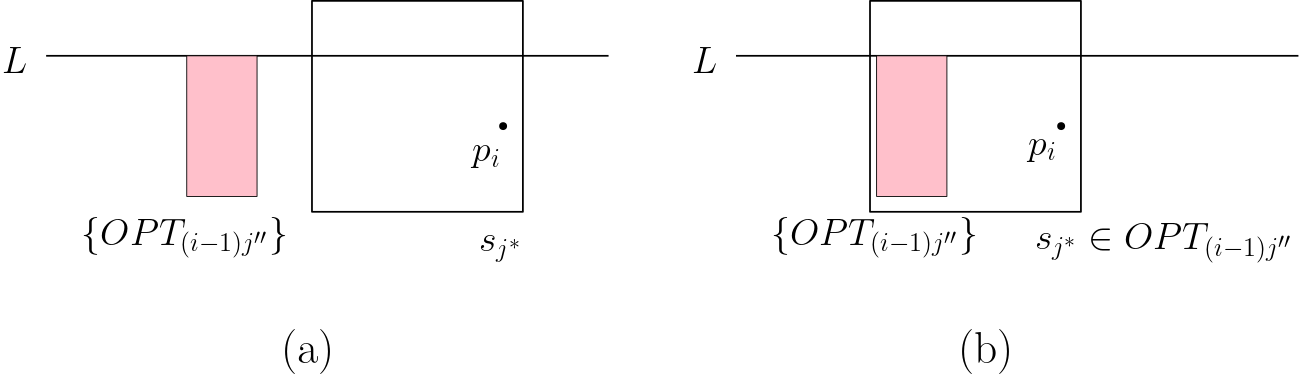}
\caption{The two cases for $OPT_{(i-1)j''}$ with respect to covering the point $p_{i}$. (a) Case 1, (b) Case 2. The pink region $\{OPT_{(i-1)j''}\}$ refers to the ply region of $OPT_{(i-1)j''}$.}
\label{fig:img1}
\end{figure}

First we consider \textbf{Case 1}. Consider a square $s_{j''}\in OPT_{(i-1)j''}$ that covers the point $p_{i-1}$ in $OPT_{(i-1)j''}$. By our inductive hypothesis,
\[
ply(Sol_{i-1}) \leq ply(OPT_{(i-1)j''})
\]
The square $s_{j^{*}}$ cannot lie entirely to the left of the left boundary of the rightmost ply region of $OPT_{(i-1)j''}$. Suppose it does. The left boundary of the ply region of $OPT_{(i-1)j''}$ is determined by the rightmost square, say $s_r$, participating in the ply. Since, no square in $OPT_{(i-1)j''}$ is redundant, $s_r$ must cover a point, say $q\in P_{i-1}$, exclusively. Since, by our assumption, $s_{j^*}$ lies to the left of the left boundary of $s_r$, and $p_{i}$ lies to the right of $q$, hence $s_{j^*}$ cannot cover $p_{i}$. Hence we have a contradiction. Refer to Figure (\ref{fig:contradiction}) for an illustration. 

\begin{figure}[ht!]
\centering
\includegraphics[width=6cm]{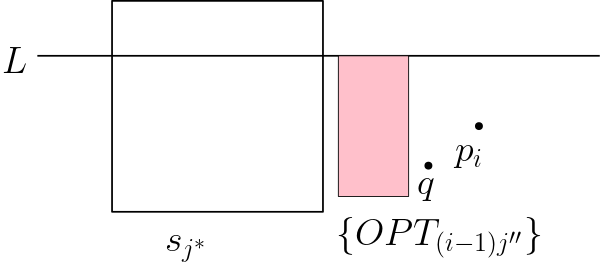}
\caption{The square $s_{j^*} \ni p_i$ lying entirely to the left of the ply region of $OPT_{(i-1)j''}$ is not possible.}
\label{fig:contradiction}
\end{figure}

Therefore, $s_{j^*}$ must lie entirely to the right of the right boundary of the rightmost ply region of $OPT_{(i-1)j''}$. By virtue of our inductive assumption (about the \textit{greedy criteria} obeyed at every iteration), $Sol_{i-1}\cup \{s_j\}$ is a feasible solution of $P_{i}$ with ply equal to $ply(Sol_{i-1})$. Therefore,
\[
ply(Sol_{i}) = ply(Sol_{i-1}) \leq ply(OPT_{(i-1)j''}) = ply(OPT_{ij^{*}})
\]
We have arrived at a contradiction. Therefore, \textbf{Case 1} is ruled out. 

Now we consider \textbf{Case 2} where $p_i$ is already covered by $OPT_{(i-1)j''}$. For a square $s$, we write $x_{L}(s), x_{R}(s)$ to denote the $x$-coordinates of the left and right boundaries of $s$ respectively. Similarly, write $y_{T}(s), y_{B}(s)$ to denote the $y$-coordinates of the top and bottom boundaries of $s$, respectively. We assume that no two squares share the same left boundary. More specifically, the input squares have a left-to-right ordering $\prec$. We write $s_1\prec s_2$ if $s_1$ lies to the left of $s_2$, i.e., $x_{L}(s_1)<x_{L}(s_2)$. If $i<j$, then $p_{i}$ lies to the left of $p_{j}$, i.e., $x(p_{i})<x(p_{j})$.

Let $s_k$ be the rightmost square among the squares covering $p_{i}$ in $OPT_{(i-1)j''}$. Since $s_k\in OPT_{(i-1)j''}$, there must exist at least one point to the left of $p_i$ which is exclusively covered by $s_k$ in $OPT_{(i-1)j''}$.

Let $p_{i'}$, where $i'<i$, be the leftmost point among the points exclusively covered by $s_k$ in $OPT_{(i-1)j''}$. Since none of the leftmost $(i'-1)$ points is exclusive to $s_k$; hence $OPT_{i'}\setminus OPT_{i'-1}=\{s_k\}$. If $s_k$ contributes to the ply, i.e., $s_k$ intersects the ply region of $OPT_{i'-1}$,  then $[OPT_{i'}] = [OPT_{i'-1}]+1$, else $[OPT_{i'}] = [OPT_{i'-1}]$. 

\begin{claim}
The square $s_k$ is the rightmost square in $OPT_{(i-1)j''}$ as per the left to right ordering $\prec$.
\end{claim}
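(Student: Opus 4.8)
The plan is to argue by contradiction. Suppose $s_k$ is not the rightmost square of $OPT_{(i-1)j''}$ with respect to the ordering $\prec$; then there is a square $s'\in OPT_{(i-1)j''}$ with $s_k\prec s'$, i.e. $x_{L}(s_k)<x_{L}(s')$. Since $OPT_{(i-1)j''}$ is a cover of $P_{i-1}$ with no redundant squares, $s'$ has a point $q\in P_{i-1}$ that it covers exclusively, so $x(q)<x(p_i)$ (as $q$ is among the $i-1$ leftmost points), and from $q\in s'$ we get $x_{L}(s')\le x(q)$ and $y_{B}(s')\le y(q)$. Moreover, since $s_k$ is by definition the rightmost square of $OPT_{(i-1)j''}$ containing $p_i$ and $s_k\prec s'$, the square $s'$ does not contain $p_i$ (otherwise $s'$ would be a further-right square containing $p_i$). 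Note also that $s_k\ne s'$.

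Next I would split on the reason why $p_i\notin s'$. Because every square of $\mathcal{S}$ crosses $L$ while $p_i$ lies below $L$, the square $s'$ cannot miss $p_i$ ``from above'', so $p_i\notin s'$ forces either $x(p_i)\notin[x_{L}(s'),x_{R}(s')]$ or $y(p_i)<y_{B}(s')$; and since we already have $x_{L}(s')\le x(q)<x(p_i)$, the only surviving possibilities are \emph{Case A}: $x(p_i)>x_{R}(s')$, and \emph{Case B}: $y(p_i)<y_{B}(s')$. In \emph{Case A}, the unit width of the squares together with $x_{L}(s_k)<x_{L}(s')$ gives $x_{R}(s_k)=x_{L}(s_k)+1<x_{L}(s')+1=x_{R}(s')<x(p_i)$, contradicting $p_i\in s_k$. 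In \emph{Case B}, combining $y(p_i)<y_{B}(s')$ with $y_{B}(s')\le y(q)$ yields $y(p_i)<y(q)$, and then I would check the inequalities defining $q\in s_k$: for the $x$-coordinate, $x_{L}(s_k)<x_{L}(s')\le x(q)<x(p_i)\le x_{R}(s_k)$; for the $y$-coordinate, $y_{B}(s_k)\le y(p_i)<y(q)$ and $y(q)\le y_{T}(s_k)$ because $q$ lies below $L$ while $s_k$ reaches above $L$. Hence $q\in s_k$, which contradicts the fact that $q$ is covered exclusively by $s'$ in $OPT_{(i-1)j''}$.

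Since both cases lead to a contradiction, no such $s'$ exists and $s_k$ is the rightmost square of $OPT_{(i-1)j''}$ under $\prec$. The steps that need genuine care are the reduction to the two cases --- which is exactly where the hypothesis that every square meets $L$ and every point lies below $L$ is essential, as it rules out a vertical mismatch in the upward direction --- and the coordinate check $q\in s_k$ in \emph{Case B}, which is really the crux of the argument; the remainder is routine bookkeeping with the $\prec$-order and the definitions of $s_k$ and $q$.
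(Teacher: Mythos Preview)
Your proof is correct. The route differs from the paper's in the case decomposition: the paper assumes a rightward square $s_o\succ s_k$, first notes $x_L(s_o)\le x(p_{i-1})$ (else $s_o$ would cover no point of $P_{i-1}$), and then splits on whether $y_T(s_o)>y_T(s_k)$ or $y_T(s_o)<y_T(s_k)$; in the former case it argues that the entire below-$L$, left-of-$p_{i-1}$ portion of $s_o$ is contained in $s_k$, making $s_o$ redundant, and in the latter case it checks that $p_i\in s_o$, contradicting the choice of $s_k$. You instead fix one exclusive point $q$ of $s'$ up front and split on the \emph{reason} $p_i\notin s'$ (horizontal miss versus vertical miss from below). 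Your Case~A is handled in the paper only implicitly via $x_R(s_o)>x_R(s_k)\ge x(p_i)$, and your Case~B corresponds to the paper's ``$s_o$ higher'' case, except that you contradict exclusivity of the single witness $q$ rather than arguing full redundancy of $s'$. The coordinate bookkeeping you give is cleaner and more self-contained than the paper's ``relevant area'' phrasing; the paper's height-based split is perhaps more geometrically intuitive. Either way the same two facts drive the argument: non-redundancy of $s'$ and the containment $p_i\in s_k$.
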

\begin{proof}
Suppose there exists a square $s_o\in OPT_{(i-1)j''}$ which is rightwards with respect to $s_k$, i.e., $s_k \prec s_o$. The left boundary of $s_o$ must lie to the left of $p_{i-1}$, otherwise $s_o$ will become redundant. If $y_{T}(s_o) > y_{T}(s_k)$, then $s_o$ becomes redundant in $OPT_{(i-1)j''}$ as $s_k$ covers the relevant area of $s_o$. The other possibility is $y_{T}(s_o) < y_{T}(s_k)$. But the square $s_o$ will contain $p_i$ in this case. This will violate the definition of $s_k$, i.e., $s_k$ being the rightmost among the squares containing $p_i$ in $OPT_{(i-1)j''}$. Hence this is ruled out. Therefore, $s_k$ is the rightmost square in $OPT_{(i-1)j''}$. Refer to Figure (\ref{fig:img2}).
\end{proof}

\begin{figure}[ht!]
\centering
\includegraphics[width=12cm]{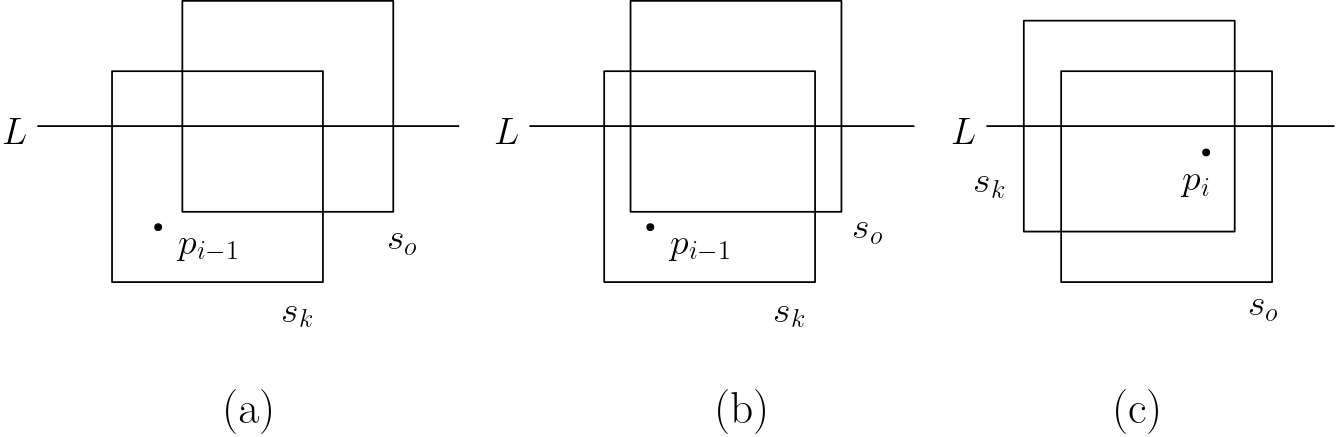}
\caption{Relative positions of $s_{k}$ and $s_{o}$. (a) $s_o$ is above $s_{k}$ but $p_{i-1}$ is to the left of the left boundary of $s_o$. This makes $s_o$ redundant. (b) $s_o$ is above $s_{k}$ but the entire area of $s_o$ to the left of $p_{i-1}$ is covered by $s_k$. This makes $s_o$ redundant. (c) $s_o$ is below $s_{k}$, here $s_o$ also covers $p_i$ contradicting the definition of $s_{k}$. Thus $s_k$ is the rightmost square in $OPT_{(i-1)j''}$.}
\label{fig:img2}
\end{figure}
\begin{claim}\label{claim: contains}
All points between $p_{i'}$ and $p_{i}$ must lie in the square $s_k$.
\end{claim}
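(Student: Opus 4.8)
The plan is to argue directly, exploiting the fact that the geometry is one-sided. The first step I would take is to record a reduction that trivializes point-in-square tests in this setting: writing $h$ for the $y$-coordinate of $L$, every input square $s$ satisfies $y_{B}(s)\le h\le y_{T}(s)$ and every input point $q$ satisfies $y(q)<h$, so $q\in s$ holds if and only if $x_{L}(s)\le x(q)\le x_{R}(s)$ and $y_{B}(s)\le y(q)$ --- the top edge of a square never matters. Hence, to prove $p_{\ell}\in s_{k}$ for an index $\ell$ with $i'\le\ell\le i$, it suffices to check the vertical strip of $s_{k}$ and its bottom edge. The strip condition is immediate from the left-to-right ordering of the points together with $p_{i'},p_{i}\in s_{k}$: indeed $x_{L}(s_{k})\le x(p_{i'})\le x(p_{\ell})\le x(p_{i})\le x_{R}(s_{k})$.

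The endpoints $p_{i'}$ and $p_{i}$ lie in $s_{k}$ by hypothesis, so I would fix $\ell$ with $i'<\ell<i$ and assume for contradiction that $p_{\ell}\notin s_{k}$; by the reduction this forces $y(p_{\ell})<y_{B}(s_{k})$. Since $\ell\le i-1$, the point $p_{\ell}$ is covered in $OPT_{(i-1)j''}$ by some square $s'$, and $s'\ne s_{k}$ because $p_{\ell}\notin s_{k}$. From $y_{B}(s')\le y(p_{\ell})<y_{B}(s_{k})$ I get $y_{B}(s')<y_{B}(s_{k})$, and from the claim just proved that $s_{k}$ is the rightmost square of $OPT_{(i-1)j''}$ under $\prec$ I get $s'\prec s_{k}$, i.e.\ $x_{L}(s')<x_{L}(s_{k})$ and hence $x_{R}(s')<x_{R}(s_{k})$. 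I would then derive $p_{i'}\in s'$, contradicting that $p_{i'}$ is covered exclusively by $s_{k}$ in $OPT_{(i-1)j''}$: horizontally, $x_{L}(s')<x_{L}(s_{k})\le x(p_{i'})$ and $x(p_{i'})\le x(p_{\ell})\le x_{R}(s')$; vertically, $y_{B}(s')<y_{B}(s_{k})\le y(p_{i'})$ (using $p_{i'}\in s_{k}$) and $y(p_{i'})<h\le y_{T}(s')$ since $s'$ meets $L$. As $s'\ne s_{k}$, this contradicts the exclusivity of $p_{i'}$, so $p_{\ell}\in s_{k}$ after all.

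The one step that I expect to need care is the horizontal placement of $s'$ relative to $p_{i'}$: a priori $s'$ could begin to the right of $p_{i'}$, in which case it would not contain $p_{i'}$ and the whole argument would stall. The way out is exactly the preceding claim that $s_{k}$ is the \emph{rightmost} square of $OPT_{(i-1)j''}$, which pins $x_{L}(s')<x_{L}(s_{k})\le x(p_{i'})$; so that claim is the essential prerequisite, and once it is in place the rest is a short chain of inequalities, with the $L$-crossing property of the squares doing the real work by removing all the top edges from the analysis.
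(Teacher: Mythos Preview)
Your proof is correct and follows essentially the same route as the paper: assume some intermediate point $p_{\ell}$ escapes $s_{k}$ below its bottom edge, pick a square $s'\in OPT_{(i-1)j''}$ covering $p_{\ell}$, use the preceding claim that $s_{k}$ is rightmost to force $s'\prec s_{k}$, and conclude that $s'$ also contains $p_{i'}$, contradicting exclusivity. Your version is in fact more explicit than the paper's, which asserts ``therefore, $s$ must also cover $p_{i'}$'' without writing out the coordinate inequalities; your chain $x_{L}(s')<x_{L}(s_{k})\le x(p_{i'})\le x(p_{\ell})\le x_{R}(s')$ and $y_{B}(s')\le y(p_{\ell})<y_{B}(s_{k})\le y(p_{i'})<h\le y_{T}(s')$ makes that step airtight.
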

\begin{proof}
Let there be a point $p_r$ such that $i'<r<i$ and $p_r \notin s_k$. Essentially, $p_r$ lies below the bottom boundary of $s_k$. There must exist a square $s\in OPT_{(i-1)j''}$ covering $p_r$. By our previous claim, $s_k$ is the rightmost square in $OPT_{(i-1)j''}$. Hence $s\prec s_k$. Therefore, $s$ must also cover $p_{i'}$ which is a contradiction since $p_{i'}$ is exclusively covered by $s_k$ in $OPT_{(i-1)j''}$. 
\end{proof}
\begin{claim}
The entry $T(i', k)$ is a feasible set cover for $P_{i}$ and is as good as the solution $OPT_{ij^{*}}$. Mathematically, $ply(T(i', k)) = ply(OPT_{ij^{*}})$.
\end{claim}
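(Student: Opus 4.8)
The plan is to prove the two halves of the statement separately, getting feasibility directly from Claim~\ref{claim: contains} and the ply equality from the induction hypothesis (Claim~\ref{claim:loop_invariant}) together with the \emph{greedy criteria}. \emph{Feasibility.} By construction $T(i',k)$ is a cover of $P_{i'}$ that contains $s_k$ (the entry is well-defined since $p_{i'}\in s_k$, as $p_{i'}$ is exclusively covered by $s_k$ in $OPT_{(i-1)j''}$). Claim~\ref{claim: contains} places every $p_r$ with $i'<r<i$ inside $s_k$, and $p_i\in s_k$ by the choice of $s_k$; hence $s_k$ covers all of $p_{i'},\ldots,p_i$ while the rest of $T(i',k)$ covers $P_{i'-1}$, so $T(i',k)$ covers $P_i$ (for $i'=1$ the entry is $\{s_k\}$ and the same reasoning applies). \emph{Lower bound on the ply.} In the present subcase, inequality~\ref{eqn:1a}, we may take $OPT_{ij^{*}}$ to be a minimum-ply cover of $P_i$, so $ply(T(i',k))\ge ply(OPT_{ij^{*}})$ because $T(i',k)$ is one such cover; and recall that in this subcase $ply(OPT_{ij^{*}})=ply(OPT_{(i-1)j''})$. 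It therefore remains to prove $ply(T(i',k))\le ply(OPT_{(i-1)j''})$.

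For the upper bound I would point to one candidate examined inside $\mathrm{ComputeEntry}(i',k,T)$, namely $Sol_{i'-1}\cup\{s_k\}$, which is admissible since $p_{i'}\in s_k$ and $Sol_{i'-1}\in\mathcal{F}_{i'-1}$ by the induction hypothesis, so that $ply(T(i',k))\le ply(Sol_{i'-1}\cup\{s_k\})$. Write $OPT_{i'-1}\subseteq OPT_{(i-1)j''}$ for the non-redundant subcover of $P_{i'-1}$ and $OPT_{i'}=OPT_{i'-1}\cup\{s_k\}\subseteq OPT_{(i-1)j''}$, so $ply(OPT_{i'})\le ply(OPT_{(i-1)j''})$, and note $ply(Sol_{i'-1})\le ply(OPT_{i'-1})$ since $Sol_{i'-1}$ is a minimum-ply cover of $P_{i'-1}$. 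If $ply(Sol_{i'-1})<ply(OPT_{i'-1})$, or if $s_k$ meets the ply region of $OPT_{i'-1}$ (so $ply(OPT_{i'})=ply(OPT_{i'-1})+1$), then $ply(Sol_{i'-1}\cup\{s_k\})\le ply(Sol_{i'-1})+1\le ply(OPT_{i'})$ and we are done. The only remaining case is $ply(Sol_{i'-1})=ply(OPT_{i'-1})$ together with $s_k$ disjoint from the ply region of $OPT_{i'-1}$; there I must show that $s_k$ is disjoint from every ply region of $Sol_{i'-1}$ as well, which yields $ply(Sol_{i'-1}\cup\{s_k\})=ply(Sol_{i'-1})\le ply(OPT_{i'-1})=ply(OPT_{i'})\le ply(OPT_{(i-1)j''})$, and hence $ply(T(i',k))=ply(OPT_{ij^{*}})$.

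That last case is the main obstacle, and it is handled much like \textbf{Case 1}. First, $s_k$ cannot lie entirely to the left of the rightmost ply region of $OPT_{i'-1}$: the left side of that region is pinned by the rightmost clique square, which must exclusively cover some $q\in P_{i'-1}$, and $x(q)\le x(p_{i'-1})<x(p_{i'})$ forbids a square containing $p_{i'}$ from lying entirely to the left of it. Since $s_k$ meets $L$ while the ply region reaches up to $L$, the two have overlapping $y$-ranges, so disjointness forces a vertical separation; hence $s_k$ lies entirely to the right of the rightmost ply region of $OPT_{i'-1}$. Now $OPT_{i'-1}$ and $Sol_{i'-1}$ are both minimum-ply covers of $P_{i'-1}$, so by the induction hypothesis $Sol_{i'-1}$ is no worse under the \emph{greedy criteria}: rule~(2) forces the right side of the representative (rightmost) ply region of $Sol_{i'-1}$ to be no farther right than that of $OPT_{i'-1}$, so $x_{L}(s_k)$ exceeds the right side of every ply region of $Sol_{i'-1}$ and $s_k$ meets none of them. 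The delicate point I would write most carefully is precisely this transfer of the ``$s_k$ lies to the right of the ply region'' property from $OPT_{i'-1}$ to $Sol_{i'-1}$ --- it leans on the representative convention, on rule~(2), and on the fact that all objects straddle $L$ --- together with a check that the degenerate situations ($i'=1$, $OPT_{i'-1}=\emptyset$, or a tie under rule~(2)) do not break the chain of inequalities.
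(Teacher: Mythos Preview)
Your proof is correct and takes essentially the same approach as the paper: feasibility via Claim~\ref{claim: contains}, and the ply bound by comparing the candidate $Sol_{i'-1}\cup\{s_k\}$ against $OPT_{i'-1}\cup\{s_k\}$, with the greedy-criteria positioning of the ply regions as the linchpin. The only difference is organizational: you split on whether $s_k$ meets $\{OPT_{i'-1}\}$ and, when it does not, show $s_k$ also misses $\{Sol_{i'-1}\}$, whereas the paper splits on whether $[T(i',k)]=[Sol_{i'}]$ or $[Sol_{i'}]+1$ and, in the latter case, shows $s_k$ must also hit $\{OPT_{i'-1}\}$ --- these are contrapositives of the same implication, and your handling of the ``$s_k$ cannot lie to the left'' geometry (borrowed from Case~1) is if anything a bit more explicit than the paper's.
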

\begin{proof}
By the inductive hypothesis, $ply(Sol_{i'}) \leq ply(OPT_{i'})$. Since $[Sol_{i'}]$ is the optimal ply for covering $P_{i'}$ and due to the definition of our function Compute\_Entry($\cdot$), we have 
\begin{align}\label{eq:6}
    [Sol_{i'}] \leq [T(i',k)]\leq [Sol_{i'}]+1
\end{align}

Hence we need to consider only two possible cases: i) $[T(i', k)] = [Sol_{i'}]$ and ii) $[T(i', k)] = [Sol_{i'}] + 1$.

i) $[T(i', k)] = [Sol_{i'}]$: Since all points between $p_{i'}$ and $p_i$ lie in $s_k$ due to Claim (\ref{claim: contains}), $T(i', k)$ is a feasible solution for $P_i$. If we picked $s_k$, i.e., $T(i', k)$ for $p_{i'}$ then we are done. If we did not pick $s_k$, then say we picked $s_{k'}$ that does not contain $p_{i}$. The entry $T(i', k)$ will continue to be present in the table till the $(i-1)$-th row. Eventually, in some $(i'')$-th iteration, the algorithm will select $T(i', k)$ since it covers $p_i$ and does not lead to an increase in ply, where $i'\leq i''\leq i$. Thus we have a feasible solution $T(i', k)$ in the $(i-1)$-th row of our table till the $i$-th row and $[T(i', k)]\leq [OPT_{i}]$. Hence $[Sol_{i}] > [OPT_{ij^{*}}]$ is ruled out. 

ii) $[T(i', k)] = [Sol_{i'}] + 1$: Since $[T(i', k)] \leq [Sol_{i'-1}] + 1$, we have $[Sol_{i'}] = [Sol_{i'-1}]$. In other words, picking $s_k$ for covering $p_{i'}$ will lead to an increase in the current ply value of our solution. Thus $s_k$ participates in the ply of $T(i', k)$. In other words, $s_k$ intersects the ply region of $Sol_{i'-1}$. By our inductive assumption, $\{Sol_{i'-1}\}$ lies relatively to the left of $\{OPT_{i'-1}\}$. Since $s_k$ covers $p_{i'}$ and $p_{i}$, therefore, $s_k$ also intersects $\{OPT_{i'-1}\}$. Since $OPT_{i'} = OPT_{i'-1}\cup \{s_k\}$, we have $[OPT_{i'}] = [OPT_{i'-1}] + 1$.  \\
So, $[T(i', k)] = [Sol_{i'-1}]+1\leq [OPT_{i'-1}]+1 = [OPT_{i'}]$. If we picked $s_k$, i.e., $T(i', k)$ for $p_{i'}$ then we are done. If we did not pick $s_k$, then suppose we picked $s_{k'}$ that does not contain $p_{i}$. The entry $T(i', k)$ will continue to be present in the table till the $i$-th row. Eventually, in the $i$-th iteration our algorithm will pick it since it does not lead to an increase in ply. Thus we have $[T(i', k)]\leq [OPT_{i}]$ in our table. Hence $[Sol_{i}] > [OPT_{ij^{*}}]$ is ruled out.\\
\end{proof}
This completes the proof of the maintenance of the loop invariant.
\end{proof}
The argument for the possibility (\ref{eqn:1b}) is similar. Now consider the running time of Algorithm (\ref{algo:exact_dp}). The sorting of points takes $O(n\log n)$ time. There are $n\cdot m$ entries in $T$ to be computed. For computing the entry $T(i, j)$, all the feasible entries (at most $m$) in row $(i-1)$ need to be considered. Given a set of squares $F$, computing the ply region of $F\cup \{s_j\}$ takes $O(1)$ time. Hence the overall running time is $O(nm^{2})$, where $n$ is the number of points and $m$ is the number of squares.
\begin{theorem}
Algorithm (\ref{algo:exact_dp}) is a polynomial time exact algorithm for the \textbf{MPCSIHL1} problem (\ref{defn:algo_mpsc_horiline}).
\end{theorem}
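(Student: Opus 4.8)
The plan is to obtain the theorem by assembling two already-available pieces: the loop invariant of Claim~\ref{claim:loop_invariant}, which gives exactness, and the running-time accounting indicated just before the statement. For correctness I would instantiate Claim~\ref{claim:loop_invariant} with $i = n$: at the end of the $n$-th (final) iteration of the outer \textit{for} loop at line~4, row $n$ of the table $T$ contains an entry $T(n, j)$ that is a minimum-ply set cover of $P_n = P$ and is optimal among all minimum-ply covers under the \textit{greedy criteria}. Line~8 of Algorithm~(\ref{algo:exact_dp}) then selects exactly the best entry of row $n$ under the \textit{greedy criteria} and returns it as $Sol$, so $Sol$ is a feasible cover of $P$ with $ply(Sol) = ply(OPT)$ for every optimal cover $OPT$; hence the algorithm is exact. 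I would also note in passing that the preconditions of the claim are met by construction: every table entry is initialized to \textit{infeasible}, and Compute\-Entry either produces a singleton $\{s_j\}$ (row $1$, when $p_1 \in s_j$) or the best of the combinations $F \cup \{s_j\}$ over feasible parents $F$ in row $i-1$ — which is precisely the construction $T(i-1, j') \cup \{s_j\}$ used inside the invariant's proof — so the pseudocode faithfully realizes the inductive step, and no $\infty$ entry is ever overwritten unless a feasible parent exists.

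For the polynomial-time claim I would do the direct count: sorting the $n$ points at line~1 costs $O(n \log n)$; the double loop at lines~4--6 fills $n \cdot m$ entries; computing one entry $T(i,j)$ via Compute\-Entry scans the at most $m$ feasible solutions $F$ of row $i-1$, and for each such $F$, forming $F \cup \{s_j\}$ and extracting its ply region below $L$ (a maximum-depth region of axis-parallel unit squares cut by a horizontal line, which behaves like an interval/stabbing structure) can be maintained in $O(1)$ incremental work, while each pairwise comparison under the \textit{greedy criteria} is also $O(1)$. Thus each entry costs $O(m)$, the table costs $O(nm^2)$, and the overall bound is $O(n\log n + nm^2) = O(nm^2)$, polynomial in the input size $n + m$.

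The substantive obstacle is not in this wrapper argument but in Claim~\ref{claim:loop_invariant} itself, and within it the hardest part is \textbf{Case~2}, where the hypothetical strictly-better solution $OPT_{ij^*}$ already covers $p_i$ without introducing a new square: there one must exhibit an entry already present in the table that is at least as good, via the chain of auxiliary claims (identifying the rightmost square $s_k$ covering $p_i$ in $OPT_{(i-1)j''}$, showing $s_k$ is the $\prec$-rightmost square of $OPT_{(i-1)j''}$, showing every point between $p_{i'}$ and $p_i$ lies in $s_k$, and then splitting on whether $s_k$ raises the ply of $Sol_{i'}$) — together with the relative-leftness of $\{Sol_{\ell-1}\}$ versus $\{OPT_{\ell-1}\}$ guaranteed inductively by the \textit{greedy criteria}. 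Since Claim~\ref{claim:loop_invariant} has already been proved and the running time bounded, the theorem follows immediately by combining the exactness conclusion at $i = n$ with the $O(nm^2)$ bound; the write-up is a short paragraph invoking these two facts.
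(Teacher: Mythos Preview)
Your proposal is correct and follows essentially the same approach as the paper: invoke Claim~\ref{claim:loop_invariant} at $i=n$ to conclude that row~$n$ contains an optimal cover of $P=P_n$, note that line~8 returns it, and cite the $O(nm^2)$ running-time accounting already given. The paper's own proof is the same two-sentence wrapper, just stated more tersely.
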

\begin{proof}
We have already seen that Algorithm (\ref{algo:exact_dp}) runs in polynomial time. Due to the Claim (\ref{claim:loop_invariant}), the $i$-th row in $T$ contains an optimal solution for $P_i$. Now consider the $n$-th row of the table $T$ corresponding to the point $p_{n}$. The entry achieving minimum ply is the optimum solution for our input point set $P=P_{n}$.
\end{proof}

\begin{theorem}
The minimum ply cover for a given set of points lying within the span of a given set of squares intersected by a horizontal line can be approximated within a $2$ factor in polynomial time.
\end{theorem}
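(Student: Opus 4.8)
The plan is to reduce to the exact algorithm we already have for \textbf{MPCSIHL1}. First I would split the input: let $L$ be the given horizontal line, $\mathcal{S}$ the given unit squares (each crossing $L$), and $P$ the input points, each lying in some square of $\mathcal{S}$. Partition $P$ into $P^{-}$, the points lying on or below $L$, and $P^{+}$, the points lying strictly above $L$. Then $(L,\mathcal{S},P^{-})$ is exactly an instance of \textbf{MPCSIHL1} (Definition~\ref{defn:algo_mpsc_horiline}) — the squares cross $L$, the points are below $L$, and each is still covered (its containing square's part below $L$ contains it) — and reflecting the instance across $L$ shows $(L,\mathcal{S},P^{+})$ is (the mirror image of) one too. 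I would run Algorithm~(\ref{algo:exact_dp}) on $(L,\mathcal{S},P^{-})$ to get an optimal ply cover $\mathcal{S}^{-}\subseteq\mathcal{S}$ of $P^{-}$, run the mirrored algorithm on $(L,\mathcal{S},P^{+})$ to get an optimal ply cover $\mathcal{S}^{+}\subseteq\mathcal{S}$ of $P^{+}$, and return $\mathcal{S}'=\mathcal{S}^{-}\cup\mathcal{S}^{+}$. Feasibility is immediate since $\mathcal{S}^{-}$ covers $P^{-}$ and $\mathcal{S}^{+}$ covers $P^{+}$, and the running time is two calls to Algorithm~(\ref{algo:exact_dp}), i.e.\ $O(nm^{2})$.

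For the approximation bound I would argue as follows. Let $\mathcal{O}\subseteq\mathcal{S}$ be an optimal ply cover of $P$ with $k^{*}=ply(\mathcal{O})$. Since $\mathcal{O}$ covers $P\supseteq P^{-}$, it is in particular a feasible ply cover of $P^{-}$ whose ply is at most $k^{*}$; by optimality of $\mathcal{S}^{-}$ this gives $ply(\mathcal{S}^{-})\le k^{*}$, and symmetrically $ply(\mathcal{S}^{+})\le k^{*}$. Then for any point $q$ of the plane, the number of squares of $\mathcal{S}'$ containing $q$ is at most the number from $\mathcal{S}^{-}$ plus the number from $\mathcal{S}^{+}$, hence at most $ply(\mathcal{S}^{-})+ply(\mathcal{S}^{+})\le 2k^{*}$; therefore $ply(\mathcal{S}')\le 2\,ply(\mathcal{O})$, which is the claimed guarantee.

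The one step I expect to need genuine care with — everything else above is bookkeeping — is reconciling the ``ply region below $L$'' that \textbf{MPCSIHL1} and Algorithm~(\ref{algo:exact_dp}) work with against the true global ply, so that the two optimal-subproblem ply values really each bound $k^{*}$ and so that the union bound of the second paragraph concerns the genuine ply. The key observation is that for any subfamily $\mathcal{T}\subseteq\mathcal{S}$, every square of $\mathcal{T}$ crosses $L$, so as a test point is moved upward toward $L$ the set of squares of $\mathcal{T}$ containing it only grows; hence the maximum depth of $\mathcal{T}$ is attained on the underside of $L$ and equals the below-$L$ ply that the algorithm optimizes (and by symmetry the same value bounds the depth above $L$ as well). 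I would also remark that points of $P$ lying exactly on $L$ may be assigned to either side without affecting anything, and that the factor of $2$ could even be shaved slightly by accounting for squares in $\mathcal{S}^{-}\cap\mathcal{S}^{+}$, though this is not needed.
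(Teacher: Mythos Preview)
Your proposal is correct and follows essentially the same approach as the paper: split the points by the line $L$, solve each side exactly via Algorithm~(\ref{algo:exact_dp}), take the union, and bound the ply of the union by the sum of the two subproblem plies, each of which is at most $ply(OPT)$. Your third paragraph's observation---that since every square crosses $L$ the maximum depth of any subfamily is attained on $L$, so the below-$L$ ply the algorithm optimizes coincides with the true ply---is a point the paper leaves implicit, and it is good that you flag it.
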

\begin{proof}
We run the Algorithm (\ref{algo:exact_dp}) twice. Once for the subproblem corresponding to the points lying below the intersecting horizontal line and once for the subproblem corresponding to the points lying above the intersecting horizontal line. Then we return the union of the two solutions, say $Sol_{a}\cup Sol_{b}$. Clearly, our solution covers all the input points and hence is a feasible set cover. Now consider an arbitrary point on the plane. It is covered by at most $[Sol_{a}]+[Sol_{b}]$ squares, where $ply(Sol_{a})$ (resp. $ply(Sol_{b})$) is the ply of the problem defined for points above (resp. below) the horizontal line. Since $ply(Sol_a)\leq ply(OPT)$ and $ply(Sol_b)\leq ply(OPT)$, hence our solution is a $2$-approximation.
\end{proof}

\subsection{Points are in a unit height horizontal slab}
We consider the special case where the input points lie within a horizontal slab of height $1$ and the input squares intersect either the top boundary or the bottom boundary of the slab. Refer to Figure (\ref{fig:horizontal_slab}). For this case, we give a $(9+\epsilon)$-factor approximation algorithm for computing the minimum ply cover, where $\epsilon>0$ is a small constant. The algorithm is simple to state and analyze. We define our problem formally.
\begin{definition}\label{defn:minply_slab}
Consider two horizontal lines $L_1$ and $L_2$ unit distance apart, where $L_2$ is above $L_1$. Now consider a set $\mathcal{S}$ of unit squares where each square intersects either $L_1$ or $L_2$. Let $P$ be a set of points located below $L_2$ but above $L_1$, each point lying inside at least one of the squares in $\mathcal{S}$. The goal is to compute the minimum ply cover of $P$ with the squares in $\mathcal{S}$.
\end{definition}
\begin{theorem}\label{thm:3times}
For unit squares, if there exists a $c$-approximation for the Problem \ref{defn:minply_slab}, then there exists a $3c$-approximation for the minimum ply cover problem.
\end{theorem}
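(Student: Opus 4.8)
The plan is to use a \emph{horizontal slab decomposition} combined with a shifting (mod $3$) argument. Partition the plane into the half-open slabs $\sigma_t=\{(x,y): t\le y<t+1\}$ for $t\in\mathbb{Z}$, whose boundaries are the horizontal lines $y=t$. Every input point lies in exactly one slab. The first thing I would verify is that each slab induces a legitimate instance of Problem \ref{defn:minply_slab}: any unit square containing a point of $\sigma_t$ of course intersects $\sigma_t$, and a unit square whose bottom edge has $y$-coordinate $b$ intersects $\sigma_t$ precisely when $b\in[t-1,t+1)$; in that whole range one of the lines $y=t$, $y=t+1$ meets the square (if $b\le t$ the line $y=t=L_1$ cuts the vertical extent $[b,b+1]$, and if $b>t$ the line $y=t+1=L_2$ does). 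Hence the points $P\cap\sigma_t$ together with the squares intersecting $\sigma_t$, taking $L_1:y=t$ and $L_2:y=t+1$, form a valid instance of Problem \ref{defn:minply_slab}.

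Next I would bound the per-slab optimum against the global one. Let $OPT$ be a minimum-ply cover of the whole instance and $k=ply(OPT)$. Fixing $t$, the sub-collection of $OPT$ consisting of the squares that intersect $\sigma_t$ still covers $P\cap\sigma_t$ (each such point is covered by some square of $OPT$, which then meets $\sigma_t$) and has ply at most $k$; therefore the optimal ply of the slab-$t$ instance is at most $k$. Running the assumed $c$-approximation on the slab-$t$ instance thus yields a cover $Sol_t\subseteq\mathcal{S}$ of $P\cap\sigma_t$ with $ply(Sol_t)\le c\,k$.

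Finally I would assemble $\mathcal{A}=\bigcup_t Sol_t$ and bound its ply pointwise. Since every square has height exactly $1$, a square chosen in $Sol_t$ has vertical extent inside $[t-1,t+2)$; consequently, for any point $q$ of height $y_0$, the only slab solutions that can contribute a square covering $q$ are $Sol_t$ with $t\in\{\lfloor y_0\rfloor-1,\ \lfloor y_0\rfloor,\ \lfloor y_0\rfloor+1\}$, three consecutive indices. Hence the number of squares of $\mathcal{A}$ through $q$ is at most $ply(Sol_{\lfloor y_0\rfloor-1})+ply(Sol_{\lfloor y_0\rfloor})+ply(Sol_{\lfloor y_0\rfloor+1})\le 3ck$. (Equivalently: color the slabs by $t\bmod 3$; no unit square meets two slabs of the same color, so the union of solutions within one color class has ply at most $ck$, and three color classes give $3ck$.) Since $\mathcal{A}$ covers $P=\bigcup_t(P\cap\sigma_t)$, it is a $3c$-approximate minimum ply cover, as claimed.

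I do not expect a serious obstacle: the decomposition is clean, and the constant $3$ is forced by the fact that a unit square meets at most two consecutive unit-height slabs. The one point that needs care is the ply accounting across slab boundaries — making sure that squares of $Sol_t$ protruding into the neighboring slabs are still counted correctly, which is exactly what the "three consecutive slabs / mod $3$" observation handles — and the degenerate configurations where a point or a square edge lies exactly on a slab line, which the half-open convention for $\sigma_t$ is chosen to absorb.
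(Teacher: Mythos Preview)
Your proposal is correct and follows essentially the same approach as the paper: partition the plane into unit-height horizontal slabs, solve each slab subproblem with the assumed $c$-approximation, take the union, and use the fact that any point can be touched by squares from at most three consecutive slab solutions to get the $3c$ bound. Your write-up is in fact more careful than the paper's in verifying that each slab is a valid instance of Problem~\ref{defn:minply_slab} and in handling boundary degeneracies via the half-open convention; the mod~$3$ coloring is just a rephrasing of the paper's ``three consecutive slabs'' sum.
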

\begin{proof}
We partition the plane into horizontal slabs of unit height. If there are $n$ input points then there are at most $n$ horizontal slabs containing at least $1$ point. Suppose we denote the slabs from bottom to top as $H_{1}, H_{2}, \ldots, H_{n}$. We solve for each slab and return the union of the solutions as the final output. Our solution is a feasible solution for all the input points. Consider an arbitrary point on the plane. This point lies within some horizontal slab, say $H_{i}$. This point may lie in some max clique, i.e., ply region of $H_{i}$. Simultaneously, this point may lie within the max cliques of the solution for $H_{i-1}$ and $H_{i+1}$. Let $Sol_{i}$ denote the solution for the slab $H_{i}$ returned by our algorithm. Let $OPT_{i}$ denote the optimal solution for the slab $H_{i}$. Let $Sol = \cup_{i} Sol_{i}$. Clearly, $ply(OPT_{i})\leq ply(OPT)$ for all $i$, where $OPT$ is the minimum ply for the entire input. Hence
\begin{align*}
ply(Sol) &= \max_{i}(ply(Sol_{i-1}) + ply(Sol_{i}) + ply(Sol_{i+1})\\
&\leq c\cdot (ply(OPT_{j-1}) + ply(OPT_{j}) + ply(OPT_{j+1}))\hspace{5mm}\text{[$j:$ index at which the sum is max.]}   \\
&\leq 3c\cdot ply(OPT)
\end{align*}
\end{proof}

We use a similar greedy algorithm as in our previous section. Here the greedy tie-breaking rule is slightly modified as shown below.
\begin{itemize}
    \item Select the cover with the minimum ply value.
    \item In case of a tie, prefer a \textit{floating} ply region over an \textit{anchored} ply region.
    \item In case of a further tie, select the cover with the leftmost ply region right side.
    \item In case of a further tie, select the cover with the narrowest ply region.
    \item In case of a further tie, select a cover arbitrarily from the tied covers.
\end{itemize}

\begin{figure}[ht!]
\centering
\includegraphics[width=8cm]{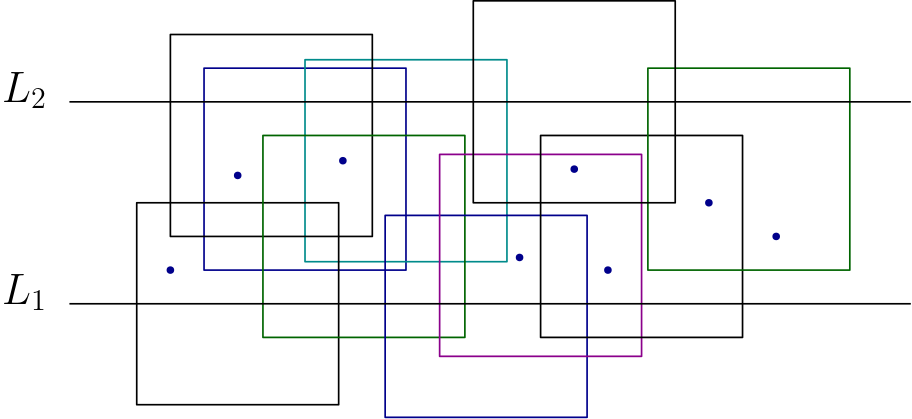}
\caption{The unit height horizontal slab subproblem.}
\label{fig:horizontal_slab}
\end{figure}
For $1\leq i\leq n$, $1\leq j\leq m$, we denote the $(i, j)$-th entry of the table as $T(i, j)$. We denote the ply of a minimum ply solution in the $i$-th row by $ply_{i}$. Recall that by $P_{i}$, we denote the $i$ leftmost input points. In this section, we will prove the following lemma.
\begin{lemma}\label{lem:3_approx}
The greedy algorithm is an $(9+\epsilon)$-approximation algorithm for the problem (\ref{defn:minply_slab}).
\end{lemma}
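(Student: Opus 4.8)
The plan is to mirror the structure of the \textbf{MPCSIHL1} analysis (Claim \ref{claim:loop_invariant}) but now with a weaker, multiplicative guarantee. The greedy algorithm again fills a table $T(i,j)$ row by row, where row $i$ corresponds to the $i$ leftmost points $P_i$ and column $j$ to the square $s_j$ that is required to cover $p_i$. The key structural difference from the previous section is that squares now come in two types — those anchored to $L_1$ (the bottom of the slab) and those anchored to $L_2$ (the top) — so a ply region inside the slab can be \emph{anchored} (touching $L_1$ or $L_2$, equivalently determined on one side only by squares of a single type) or \emph{floating} (bounded above and below inside the slab, hence formed by at least one top-square and one bottom-square). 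I would first establish the combinatorial lemma that, restricted to squares of a single type, the situation degenerates to the \textbf{MPCSIHL1} setting: the greedy rule there computes an \emph{optimal} single-type subcover. Then the job is to show that an arbitrary optimal cover $OPT_i$ of $P_i$ can be split as $OPT_i = OPT_i^{\mathrm{top}}\,\cup\,OPT_i^{\mathrm{bot}}$, and that the greedy cover $Sol_i$ has ply bounded by a constant times $\max(ply(OPT_i^{\mathrm{top}}),ply(OPT_i^{\mathrm{bot}}))\le ply(OPT_i)$. The constant $9$ (before the $\epsilon$) should come out as a product of a factor $3$ from needing to simultaneously dominate the two single-type pieces and account for the ``middle'' floating interactions, times the factor $3$ already folded into Theorem \ref{thm:3times}; but since Lemma \ref{lem:3_approx} is a statement purely about Problem \ref{defn:minply_slab}, the $9$ here must be argued directly as: greedy's ply $\le 3\cdot ply(OPT)$ from the two-type decomposition, and then a further loss of up to $3$ because a floating region the greedy builds can be stabbed by up to three of greedy's own squares that each looked locally cheap. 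I would make this precise with a charging argument: assign each square in $Sol_i$ that passes through a fixed witness point to one of three ``slots'' (leftmost-type-1, leftmost-type-2, and a floating bridge), and bound the number of squares in each slot using the optimality-within-a-type lemma together with the tie-breaking preference for floating over anchored regions.

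Concretely, the steps in order: (1) Define anchored vs.\ floating ply regions formally and restate the modified greedy criteria; prove that the ``prefer floating'' rule is well-defined and that among equal-ply covers the algorithm's choice is canonical. (2) Prove the single-type optimality lemma by invoking the proof of Claim \ref{claim:loop_invariant} verbatim on the sub-instance consisting of only top-squares (resp.\ only bottom-squares) — here one must check that the earlier proof used \emph{only} the fact that all squares cross one horizontal line, which is true for each type separately. (3) State the invariant for this section: after iteration $i$, for every $\ell\le i$ the row $\ell$ contains a cover $Sol_\ell$ of $P_\ell$ with $ply(Sol_\ell)\le 3\,ply(OPT_\ell)$ and with its ply region chosen per the greedy criteria; additionally the floating part of $Sol_\ell$ lies no further right than the floating part of any optimal cover of the same ply class (the monotonicity property analogous to ``$\{Sol_{i'-1}\}$ lies relatively to the left of $\{OPT_{i'-1}\}$''). (4) Induct: in the step, take an optimal $OPT_i$, restrict to $P_{i-1}$ removing a redundant square, split by type, apply (2) to each type to get that greedy's per-type cost is within the optimum per-type cost, and then bound the ply at a witness point by summing the contributions of the $\le 3$ relevant greedy squares, using the tie-break to prevent greedy from paying for a spurious anchored region when a floating one of the same ply exists. (5) Conclude $ply(Sol_n)\le (9+\epsilon)\,ply(OPT)$; the $\epsilon$ and the extra factor $3$ absorb the overlap between a floating region and the two anchored regions at the slab boundaries that a naive sum of three would not separate, together with rounding in how ``within a type'' optimality transfers across the type split.

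The main obstacle, I expect, is step (4)'s charging at the witness point: unlike the exact analysis of the previous subsection, here greedy may legitimately build a ply region that no single-type optimum contains, and I must argue that at most a bounded number — three — of greedy's squares coincide at any point. The danger is a cascade where a floating region is stabbed by one top-square and one bottom-square that greedy added at different iterations, \emph{plus} a straggler anchored square, and showing this cannot compound to more than $3$ per type (hence $9$ overall) requires carefully using the exclusivity-of-coverage argument (every non-redundant square owns a point, as in Claim \ref{claim: contains}) applied within each type, plus the ``prefer floating'' tie-break to rule out greedy carrying both an anchored and a floating region of equal ply when one would have sufficed. A secondary subtlety is verifying that Theorem \ref{thm:3times}'s reduction composes correctly: since the final global bound is $3\cdot(9+\epsilon)=27+\epsilon'$ matching the abstract's claim, I need the $\epsilon$ here to be chosen so that $3(9+\epsilon)\le 27+\epsilon_{\text{target}}$, which is a triviality but worth stating so the constants in the introduction and here are consistent.
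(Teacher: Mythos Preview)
Your proposal takes a fundamentally different route from the paper, and it has a genuine gap at its core step. Your plan is a \emph{forward} bound: split $OPT_i$ by type into $OPT_i^{\mathrm{top}}\cup OPT_i^{\mathrm{bot}}$, invoke the \textbf{MPCSIHL1} exactness on each piece, and then charge the squares of $Sol_i$ through a witness point to three ``slots''. The problem is step~(2): the single-type sub-instance is not well-posed for a comparison with greedy. The set of points that $Sol$ happens to cover with top-squares need not be coverable at all by $OPT^{\mathrm{top}}$ (those points may be handled in $OPT$ by bottom-squares), so there is no common ground set on which the \textbf{MPCSIHL1} optimality of Claim~\ref{claim:loop_invariant} can be invoked ``verbatim''. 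Moreover, the slab greedy is not running two independent one-line greedies; its choice at $p_i$ depends on squares of both types simultaneously (this is exactly why the tie-break rule was changed to prefer floating regions), so you cannot conclude that $Sol^{\mathrm{top}}$ is optimal, or even near-optimal, for any \textbf{MPCSIHL1} instance. Your three-slot charging and the second factor~$3$ from ``floating interactions'' are then left without a foundation; as written they are assertions rather than arguments.

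The paper argues in the opposite direction. It fixes a maximum clique $K$ of size $k$ in $Sol$, classifies all possible shapes such a clique can take (top-/bottom-anchored or floating; ASC, DESC, and their concatenations), and proves a sequence of structural facts: no square in $Sol$ is redundant, each square owns exclusive points, and crucially (Claim~\ref{claim:excl}) no input square can simultaneously cover the exclusive points of two \emph{consecutive} squares of $K$. From this it follows that any feasible cover, in particular $OPT$, must use at least $\lfloor k/3\rfloor$ squares just to hit those exclusive points (Lemma~2.17). A geometric pigeonhole over the monotone pieces of $K$ (Lemma~\ref{lemma:9approx}) then shows that at least $\lfloor k/9\rfloor-1$ of those $OPT$-squares themselves share a common point, so $ply(OPT)\ge \lfloor k/9\rfloor-1$, which rearranges to $k\le (9+\epsilon)\,ply(OPT)$. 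The factor~$9$ thus comes from $3$ (consecutive-pair exclusivity) times $3$ (at most three monotone sub-cliques among the $OPT$-squares covering a given clique shape), not from any top/bottom split of $OPT$. If you want to salvage a type-decomposition proof, you would need an entirely new lemma relating the top-square depth of $Sol$ at a point to something $OPT$ must pay \emph{at the same point}, and nothing in the \textbf{MPCSIHL1} analysis supplies that.
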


\begin{claim}\label{claim:bounds}
    For any $j$, if the solution $T(i+1, j)$ is feasible for $P_{i+1}$, then the ply of $T(i+1, j)$ lies between $ply_{i}$ and $ply_{i}+1$. Notationally,
    \begin{align}\label{eqn:rowwise_ply}
        ply_{i} \leq ply(T(i+1, j)) \leq ply_{i} + 1
    \end{align}
\end{claim}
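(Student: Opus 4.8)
The plan is to prove the two inequalities separately, each reducing to a one-line monotonicity property of the function $ply(\cdot)$ together with the way \texttt{Compute-Entry} builds the table entry $T(i+1,j)$. Up front I would record two elementary facts. \emph{Monotonicity}: if $\mathcal{A}\subseteq\mathcal{B}$ then $ply(\mathcal{A})\le ply(\mathcal{B})$, since any subfamily of $\mathcal{A}$ with a common point is also such a subfamily of $\mathcal{B}$. \emph{Unit increment}: for any family $\mathcal{A}$ and any single square $s$ we have $ply(\mathcal{A}\cup\{s\})\le ply(\mathcal{A})+1$; indeed a maximum-depth clique $C$ of $\mathcal{A}\cup\{s\}$ either avoids $s$, in which case $C\subseteq\mathcal{A}$ and $|C|\le ply(\mathcal{A})$, or contains $s$, in which case $C\setminus\{s\}\subseteq\mathcal{A}$ is still a clique so $|C|\le ply(\mathcal{A})+1$.

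For the lower bound, I would use that by construction every finite entry in row $i$ is a feasible ply cover of $P_i$, and that $ply_i$ is the minimum ply among these row-$i$ entries; in particular $ply_i\le ply(T(i,k))$ for every feasible $T(i,k)$. Since $T(i+1,j)$ is produced as $T(i,k)\cup\{s_j\}$ for some feasible parent $T(i,k)$ in row $i$ (or $T(i,k)$ already contains $s_j$), monotonicity gives $ply(T(i+1,j))\ge ply(T(i,k))\ge ply_i$, which is the left inequality of \eqref{eqn:rowwise_ply}.

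For the upper bound, let $Sol_i$ be a minimum-ply feasible entry in row $i$, so $ply(Sol_i)=ply_i$. Because $T(i+1,j)$ is assumed feasible for $P_{i+1}$, the square $s_j$ must contain $p_{i+1}$; hence $Sol_i\cup\{s_j\}$ is a feasible cover of $P_{i+1}$ with $p_{i+1}\in s_j$, i.e., it is one of the candidate covers that \texttt{Compute-Entry}$(i+1,j,T)$ examines when forming $T(i+1,j)$. By the greedy selection rule the cover actually chosen has ply no larger than that of any candidate, so $ply(T(i+1,j))\le ply(Sol_i\cup\{s_j\})\le ply_i+1$ by the unit-increment fact, giving the right inequality of \eqref{eqn:rowwise_ply}.

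I do not expect a genuine obstacle here: the proof only needs the two structural properties of the algorithm already established, namely that each finite row-$i$ entry is feasible for $P_i$ with $ply_i$ its minimum ply, and that $T(i+1,j)$ is obtained by adjoining $s_j$ to some feasible row-$i$ entry. The one place to be careful is to invoke the feasibility hypothesis at exactly the right moment in the upper-bound argument (to conclude $p_{i+1}\in s_j$, hence that $Sol_i\cup\{s_j\}$ is a legitimate candidate). The refined tie-breaking rules, such as preferring a \emph{floating} ply region to an \emph{anchored} one, are irrelevant to this claim since they never alter the ply value of the selected cover.
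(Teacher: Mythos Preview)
Your proof is correct, and the upper-bound half is essentially identical to the paper's. For the lower bound, however, you take a cleaner route than the paper does. You observe that \texttt{Compute-Entry} builds $T(i+1,j)$ as $T(i,k)\cup\{s_j\}$ for some feasible row-$i$ parent, so $T(i,k)\subseteq T(i+1,j)$ and monotonicity of $ply(\cdot)$ yields $ply(T(i+1,j))\ge ply(T(i,k))\ge ply_i$ directly. The paper instead argues by contradiction: it assumes $ply(T(i+1,j))<ply_i$, splits into cases according to whether $s_j$ meets the ply region of the parent and how many squares of that ply region would have to be ``discarded'' to realise the drop, and then invokes the greedy rule to conclude that $s_j$ would already have been selected in an earlier row. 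Your argument is more elementary and avoids the somewhat murky discussion of discarded squares; the paper's argument, on the other hand, is phrased so as to remain applicable even under a variant of the algorithm that prunes redundant squares after taking the union (a scenario in which the simple containment $T(i,k)\subseteq T(i+1,j)$ could fail).
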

\begin{proof}
    The upper bound is straightforward to prove. Since the solution $T(i+1, j)$ can be composed as the union of the $j$-th square and the minimum ply solution in the $i$-th row; hence the ply of $T(i+1, j)$ can be at most one more than $ply_{i}$.\\
    We prove the lower bound by contradiction. Suppose, for some $j$, the ply of $T(i+1, j)$ is strictly less than $ply_{i}$. So, $s_j$ does not belong to the optimum solution in the $i$-th row of the table computed by our algorithm. Two cases are possible.\\ 
    Case $1$: While processing the point $p_{i+1}$, $s_{j}$ can be combined with a minimum ply solution in the $i$-th row, say, $T(i, k)$ such that $T(i, k)\cup \{s_{j}\}$ is a feasible solution for $P_{i+1}$ and $s_{j}$ does not intersect the ply region of $T(i, k)$ and at least one square participating in the ply region of $T(i, k)$ is discarded. Thus $ply(T(i+1, j))<ply_{i}$. \\
    Case $2$: While processing the point $p_{i+1}$, $s_{j}$ can be combined with a minimum ply solution in the $i$-th row, say, $T(i, k)$ such that $T(i, k)\cup \{s_{j}\}$ is a feasible solution for $P_{i+1}$ and $s_{j}$ intersects the ply region of $T(i, k)$ and at least two squares participating in the ply region of $T(i, k)$ are discarded.  Thus $ply(T(i+1, j))<ply_{i}$. \\
    The ply region of any solution is bound to be a subset of the ply region of its parent solution in the previous row. Hence, in both the cases above, if $s_{j}$ were a better pick in the $(i+1)$-th row it would have been picked earlier by our greedy algorithm. Hence, we have arrived at a contradiction.
\end{proof}
If a set of squares have a common intersection, we say that they form a clique. We call the common intersection region of the clique as the \textit{ply region}. First, we will classify the cliques in a solution into some distinct types. Let the size of the clique under consideration be $l$, i.e., $l$ squares have a common intersection. If the top side of a square $s_1$ lies above the top side of another square $s_2$, we say that $s_1$ lies above $s_2$. Equivalently, we can say that $s_2$ lies below $s_1$. A set of squares $s_1, s_2, \ldots, s_k$ such that $s_i$ is above $s_{i+1}$ for all $1\leq i\leq k-1$, is termed as a set of \textit{descending} squares. A set of squares $s_1, s_2, \ldots, s_k$ such that $s_i$ is below $s_{i+1}$ for all $1\leq i\leq k-1$, is termed as a set of \textit{ascending} squares. If the left side of a square $s_1$ lies to the left of the left side of another square $s_2$, we say that $s_1$ lies to the left of $s_2$. We denote this by $s_1\prec s_2$. The following types of clique are possible.
\begin{itemize}
    \item \textbf{Top-anchored}: Here all the constituent squares of the clique intersect the top line $L_2$. There are three subtypes as shown in the Figure (\ref{fig:top_all}).
    \begin{itemize}
        \item \textbf{Top-anchored ASC}: Here the squares from left to right are ascending. In other words, for any two squares $s_1, s_2$ in the clique such that $s_1\prec s_2$, the square $s_2$ is above $s_1$.
        \item \textbf{Top-anchored DESC}: Here the squares from left to right are descending. In other words, for any two squares $s_1, s_2$ in the clique such that $s_1\prec s_2$, we have $s_2$ below $s_1$.
        \item \textbf{Top-anchored DESC$|$ASC}: There is a $k\geq 2$ such that the squares constituting the clique are initially descending from $s_1$ to $s_k$. Then the square $s_{k+1}$ lies above $s_k$. Then the squares $s_{k+1}$ to $s_{l}$ are ascending. We term the square $s_{k+1}$ as the \textit{transition square}. This type of clique can be viewed as the merger of a descending clique with an ascending clique, in that order.
    \end{itemize}
\end{itemize}

\begin{figure}[ht!]
\centering
\includegraphics[width=10cm]{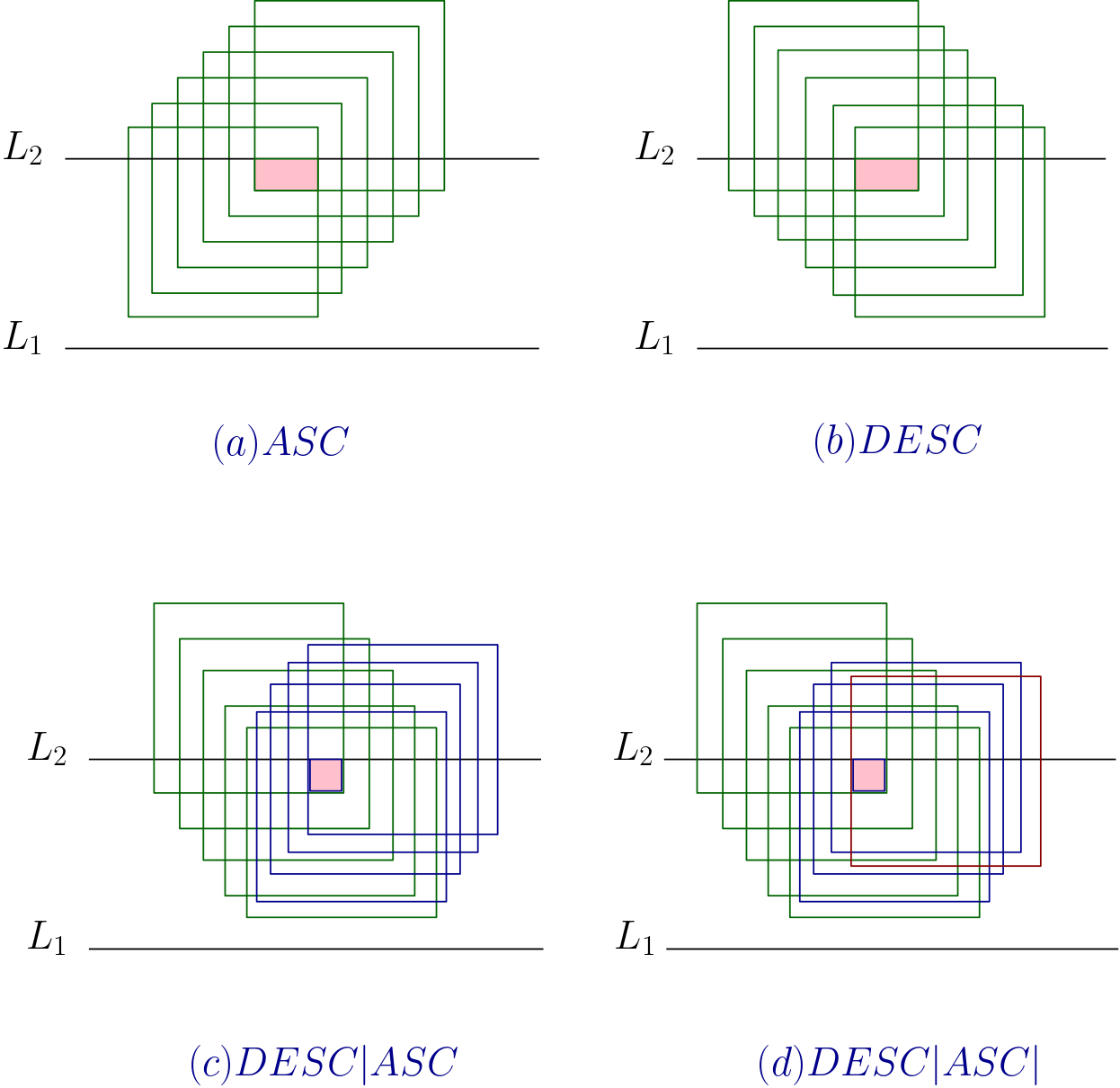}
\caption{(a), (b) and (c) shows three different types of top-anchored cliques. (d) Shows an invalid clique where an DESC$|$ASC top-anchored clique is followed by a transition square.}
\label{fig:top_all}
\end{figure}

\begin{claim}\label{claim:forbidden_1}
    A top anchored clique of type ASC$|$ASC is forbidden.
\end{claim}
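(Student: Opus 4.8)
The plan is to derive a contradiction from the hypothesis that the cover under consideration contains no redundant square (a square $s$ of a cover is redundant if every input point lying in $s$ also lies in some other square of the cover, so $s$ can be deleted). I will show that an ASC$|$ASC top-anchored clique must contain a square that is redundant. Normalize coordinates so that $L_1$ is $y=0$ and $L_2$ is $y=1$, so every input point has $y$-coordinate in $[0,1]$ and every top-anchored unit square $s$ satisfies $y_B(s)\in[0,1]$ and $y_T(s)=y_B(s)+1\ge 1$; hence, within the slab, $s$ covers exactly the rectangle with $x$-range $[x_L(s),x_R(s)]$ and $y$-range $[y_B(s),1]$, and any top-anchored square whose $x$-range contains a given $x$ covers all slab points at that $x$ above its bottom side. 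Order the squares of the clique left to right by their (distinct) left sides as $s_1\prec s_2\prec\cdots\prec s_l$. Under the classification, the ASC$|$ASC shape means there is an index $k$ with $2\le k\le l-1$ for which $s_1,\dots,s_k$ are ascending (so $s_{k-1}$ lies below $s_k$) while the transition square $s_{k+1}$ lies below $s_k$ and $s_{k+1},\dots,s_l$ are ascending; the degenerate case $k=1$ is not a genuinely new type, being a DESC$|$ASC clique with descending part $\{s_1,s_2\}$. The crucial structural fact is that $s_k$ is a strict local peak in the left-to-right order: both of its clique-neighbours $s_{k-1}$ and $s_{k+1}$ lie strictly below it.

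The heart of the argument is a short geometric lemma: if $s$ is a square of a top-anchored clique $C$ whose immediate left neighbour $s'$ and immediate right neighbour $s''$ in the left-to-right order on $C$ both lie below $s$, then the slab region covered by $s$ is contained in $s'\cup s''$. Since $s'$ lies below $s$, we have $y_B(s')<y_B(s)$ and $y_T(s')\ge 1$, so $s'$ covers every slab point whose $x$-coordinate lies in $[x_L(s'),x_R(s')]$; symmetrically for $s''$. Because $C$ has a nonempty common intersection, all left sides of $C$ lie in an interval of length less than $1$, so $x_L(s'')-x_L(s')<1$, i.e.\ $x_R(s')=x_L(s')+1>x_L(s'')$; combined with $x_L(s')<x_L(s)<x_L(s'')$ (whence $x_R(s'')>x_R(s)$), the two intervals $[x_L(s'),x_R(s')]$ and $[x_L(s''),x_R(s'')]$ overlap and together cover $[x_L(s),x_R(s)]$. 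Hence every slab point in $s$, in particular every input point in $s$, lies in $s'$ or $s''$.

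Applying this lemma to $s=s_k$, with $s'=s_{k-1}$ and $s''=s_{k+1}$ (both below $s_k$ by the ASC$|$ASC structure), shows that $s_k$ is redundant in the solution, a contradiction; therefore no top-anchored clique of type ASC$|$ASC can occur. The only delicate point is the interface with the classification: one must make sure the ASC$|$ASC shape really forces $2\le k\le l-1$ (so that $s_k$ genuinely has both neighbours $s_{k-1},s_{k+1}$ in the clique) and that "lies below" is strict, so that $y_B(s_{k-1})<y_B(s_k)$ and $y_B(s_{k+1})<y_B(s_k)$; both are immediate from the definitions of ascending squares and of the transition square. I expect this bookkeeping to be the main — and only minor — obstacle, the geometry being a one-line unit-interval estimate.
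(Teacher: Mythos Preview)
Your proposal is correct and follows essentially the same approach as the paper: both arguments observe that in a top-anchored ASC$|$ASC clique the rightmost square $s_k$ of the first ascending run is a local peak (its neighbours $s_{k-1}$ and $s_{k+1}$ are both strictly lower), and hence its slab-relevant area is covered by $s_{k-1}\cup s_{k+1}$, making $s_k$ redundant. Your version simply spells out the interval arithmetic that the paper leaves implicit.
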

\begin{proof}
    Suppose not. Suppose, the left ascending sequence consists of $k$ squares. Then the rightmost square $s_k$ of the first ASC sequence will become redundant since the second last square $s_{k-1}$ of the first ASC sequence and the transition square $s_{k+1}$ will fully cover the relevant area of $s_{k}$. This is a contradiction since $s_k$ is redundant. Refer to Figure (\ref{fig:top_asc_asc}) for an example. 
\end{proof}

\begin{figure}[ht!]
\centering
\includegraphics[width=5cm]{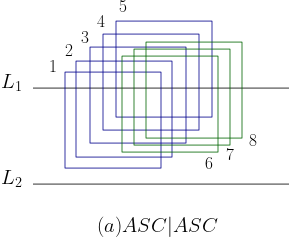}
\caption{A forbidden clique of type top-anchored ASC$|$ASC. Here the square $5$ is redundant since $4$ and $6$ fully cover the relevant area of $5$.}
\label{fig:top_asc_asc}
\end{figure}

\begin{claim}\label{claim:forbidden_2}
    A top anchored clique of type DESC$|$DESC is forbidden.
\end{claim}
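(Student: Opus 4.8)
The plan is to argue exactly as in Claim~\ref{claim:forbidden_1}, the difference being that here the \emph{transition square} is the one forced to be redundant. Recall that a top-anchored DESC$|$DESC clique is a left-to-right ordered sequence $s_1 \prec s_2 \prec \cdots \prec s_l$ of unit squares, all intersecting $L_2$ and having a nonempty common intersection, in which $s_1$ is above $s_2$ is above $\cdots$ is above $s_k$ (the first descending run, $k \ge 2$), the transition square $s_{k+1}$ lies above $s_k$, and $s_{k+1}$ is above $s_{k+2}$ is above $\cdots$ is above $s_l$ (the second descending run). As with the DESC$|$ASC definition, I take ``second descending run'' to be a genuine descending sequence, so that $l \ge k+2$ and in particular the square $s_{k+2}$ exists. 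The goal is to show that $s_{k+1}$ is redundant, which contradicts that a clique arising in a solution has no redundant square (the same assumption used in Claim~\ref{claim:forbidden_1}).

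The two things to establish are that (i) $s_k$ and $s_{k+2}$ each cover, within their own horizontal spans, the whole portion of $s_{k+1}$ lying inside the unit slab, and (ii) $s_k$ and $s_{k+2}$ together span $s_{k+1}$ horizontally without a gap. For (i): since $s_{k+1}$ lies above both of its neighbours $s_k$ and $s_{k+2}$, its bottom edge $y_{B}(s_{k+1})$ lies strictly above the bottom edges of $s_k$ and of $s_{k+2}$; because all three squares intersect $L_2$, the slab portion of $s_{k+1}$ is the strip between $y_{B}(s_{k+1})$ and $L_2$, which is contained (vertically) in the corresponding strips of $s_k$ and of $s_{k+2}$. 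For (ii): since the squares have unit width and $s_k \prec s_{k+1} \prec s_{k+2}$, we have $x_{L}(s_k) \le x_{L}(s_{k+1})$ and $x_{R}(s_{k+1}) \le x_{R}(s_{k+2})$; and since the clique has a nonempty common intersection, $x_{L}(s_{k+2}) \le \min_i x_{R}(s_i) \le x_{R}(s_k)$, so $[x_{L}(s_k), x_{R}(s_k)]$ and $[x_{L}(s_{k+2}), x_{R}(s_{k+2})]$ overlap and together cover $[x_{L}(s_{k+1}), x_{R}(s_{k+1})]$. Combining (i) and (ii), the relevant area (the part inside the slab) of $s_{k+1}$ is contained in $s_k \cup s_{k+2}$, hence every input point covered by $s_{k+1}$ is covered by $s_k$ or $s_{k+2}$, i.e.\ $s_{k+1}$ is redundant. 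A figure analogous to Figure~\ref{fig:top_asc_asc} --- the transition square sandwiched between two lower squares whose union swallows its relevant area --- would accompany the argument.

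The one place that needs care is the existence of $s_{k+2}$: if the second descending run degenerated to the single square $s_{k+1}$, the configuration would be a descending clique with one extra square appended on the right (whose rightmost sliver need not be covered by anyone, so it need not be redundant) rather than a DESC$|$DESC clique; this is excluded by the definition, exactly as the first run is required to have $k \ge 2$ squares in the DESC$|$ASC definition. I do not anticipate any other obstacle: once $s_{k+2}$ is available, the proof is the same short containment check as in Claim~\ref{claim:forbidden_1}, applied to the transition square rather than to the last square of the first run --- the unifying statement being that in a top-anchored clique a square strictly higher than both of its immediate neighbours is redundant, since those two neighbours reach at least as low inside the slab and, by nonemptiness of the ply region, leave no horizontal gap.
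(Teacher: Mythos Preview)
Your proposal is correct and follows essentially the same approach as the paper: both identify the transition square $s_{k+1}$ as the redundant one and argue that its two neighbours $s_k$ and $s_{k+2}$ cover its relevant area. Your write-up simply makes explicit the vertical and horizontal containment checks (your items (i) and (ii)) and the need for $s_{k+2}$ to exist, which the paper leaves implicit.
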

\begin{proof}
Suppose not. Suppose, the left descending sequence consists of $k$ squares. Then the leftmost square $s_{k+1}$ of the second DESC sequence will become redundant since the ast square $s_{k}$ of the first DESC sequence and the square $s_{k+2}$ will fully cover the relevant area of $s_{k+1}$. This is a contradiction since $s_{k+1}$ is redundant. Refer to Figure \ref{fig:top_aaa_desc}(a) for an example.
\end{proof}

\begin{figure}[ht!]
\centering
\includegraphics[width=10cm]{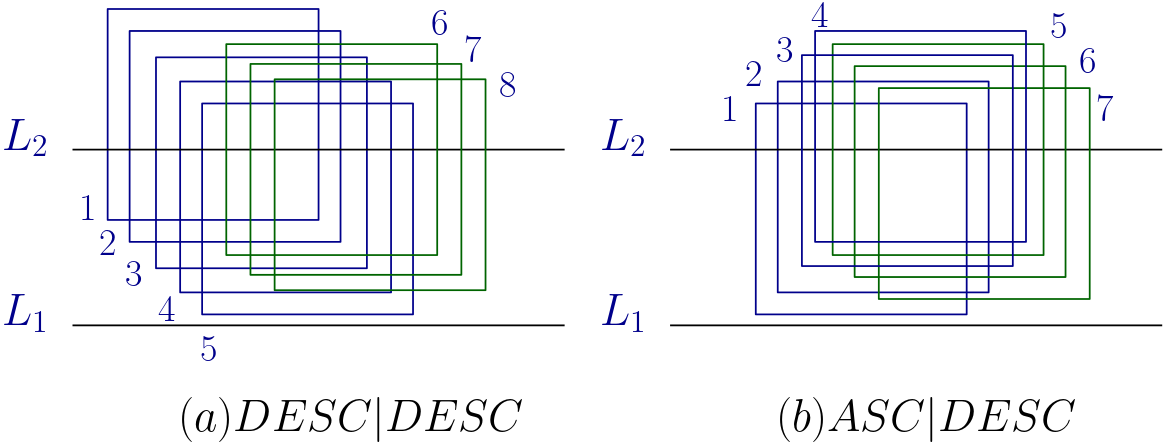}
\caption{(a) A forbidden clique of type top-anchored DESC$|$DESC. Here the square $6$ is redundant since the squares $5$ and $7$ fully cover the relevant area of $6$. (b) A forbidden clique of type top-anchored ASC$|$DESC. Here the square $4$ is redundant since the squares $3$ and $5$ fully cover the relevant area of $4$.}
\label{fig:top_aaa_desc}
\end{figure}
\begin{claim}
    A top anchored clique of type ASC$|$DESC is forbidden.
\end{claim}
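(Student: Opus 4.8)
The plan is to follow the same template as the proofs of Claim~\ref{claim:forbidden_1} and Claim~\ref{claim:forbidden_2}: assume such a clique sits inside a cover with no redundant squares, single out the one square it can do without, and exhibit two of its clique-mates whose relevant areas already contain it. Concretely, suppose a top-anchored clique of type ASC$|$DESC is present, and write its squares in left-to-right order as $s_1 \prec s_2 \prec \cdots \prec s_l$, where $s_1,\ldots,s_k$ is the ascending part and $s_{k+1}$ is the transition square, which by definition of this type lies below $s_k$; note $k\geq 2$, since otherwise the ``ascending'' part is a single square and the clique is really just a DESC clique. Thus $s_k$ is the highest square of the whole clique: the ascending condition gives $y_{B}(s_{k-1}) < y_{B}(s_k)$, and the transition condition gives $y_{B}(s_{k+1}) < y_{B}(s_k)$, while $x_{L}(s_{k-1}) < x_{L}(s_k) < x_{L}(s_{k+1})$ by the $\prec$ order.

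Next I would pass to the part of each square lying in the unit-height slab between $L_1$ and $L_2$. Since every square of the clique intersects $L_2$ and has unit height, its bottom side lies at or above $L_1$, so this restriction is an axis-parallel rectangle whose top side is on $L_2$; a ``higher'' square yields a \emph{shorter} such rectangle. Let $R_{k-1}, R_k, R_{k+1}$ denote these rectangles for $s_{k-1}, s_k, s_{k+1}$, so $R_k$ has $x$-range $[x_{L}(s_k), x_{R}(s_k)]$. Because $s_{k-1}$ and $s_{k+1}$ are both lower than $s_k$, the rectangles $R_{k-1}$ and $R_{k+1}$ extend strictly below the bottom side of $R_k$, hence cover $R_k$ over its full vertical extent wherever they meet it. Horizontally, $R_{k-1}$ covers the portion of $R_k$ with $x \leq x_{R}(s_{k-1})$ (as $x_{L}(s_{k-1}) < x_{L}(s_k)$), and $R_{k+1}$ covers the portion with $x \geq x_{L}(s_{k+1})$ (as $x_{L}(s_{k+1}) > x_{L}(s_k)$, its right side is past $x_{R}(s_k)$). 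It then suffices to check $x_{L}(s_{k+1}) \leq x_{R}(s_{k-1})$, and this is exactly where the clique property enters: $s_{k-1}$ and $s_{k+1}$ share a common point with all squares of the clique, so $s_{k-1}\cap s_{k+1}\neq\emptyset$, which forces $x_{L}(s_{k+1}) \leq x_{R}(s_{k-1})$. Hence $R_{k-1}\cup R_{k+1}\supseteq R_k$, every input point inside $s_k$ lies in $s_{k-1}$ or $s_{k+1}$, and $s_k$ is redundant — a contradiction. Figure~\ref{fig:top_aaa_desc}(b) is the generic picture with $k=4$.

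I expect the only real subtlety to be the bookkeeping about relative positions: verifying that in an ASC$|$DESC clique the transition square $s_{k+1}$ genuinely lies both to the right of and below the peak $s_k$, that $s_{k-1}$ exists, and that the top sides of all these slab-rectangles coincide (on $L_2$) so that ``lower bottom side'' really does imply full vertical coverage of $R_k$. Once those are pinned down, the geometry reduces to the one-dimensional fact that two unit-length intervals whose left endpoints are within distance one together contain the unit interval lying between them, which is immediate. So I do not anticipate a genuine obstacle here — the claim is a direct structural twin of Claims~\ref{claim:forbidden_1} and~\ref{claim:forbidden_2}, differing only in which of the three middle squares plays the role of the redundant one.
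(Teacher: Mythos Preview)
Your proposal is correct and takes essentially the same approach as the paper: identify the peak square $s_k$ of the ASC$|$DESC clique and show it is redundant because its two neighbours $s_{k-1}$ and $s_{k+1}$, both lower and overlapping horizontally by the clique property, jointly cover its relevant area below $L_2$. The paper's own proof is a one-line reference to Claim~\ref{claim:forbidden_1} and Figure~\ref{fig:top_aaa_desc}(b) (where the redundant square is indeed the peak, covered by its two neighbours), so you have simply spelled out the geometry that the paper leaves implicit.
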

\begin{proof}
The proof is similar to the proof of Claim (\ref{claim:forbidden_1}). Refer to Figure \ref{fig:top_aaa_desc}(b) for an example.
\end{proof}

\begin{figure}[ht!]
\centering
\includegraphics[width=10cm]{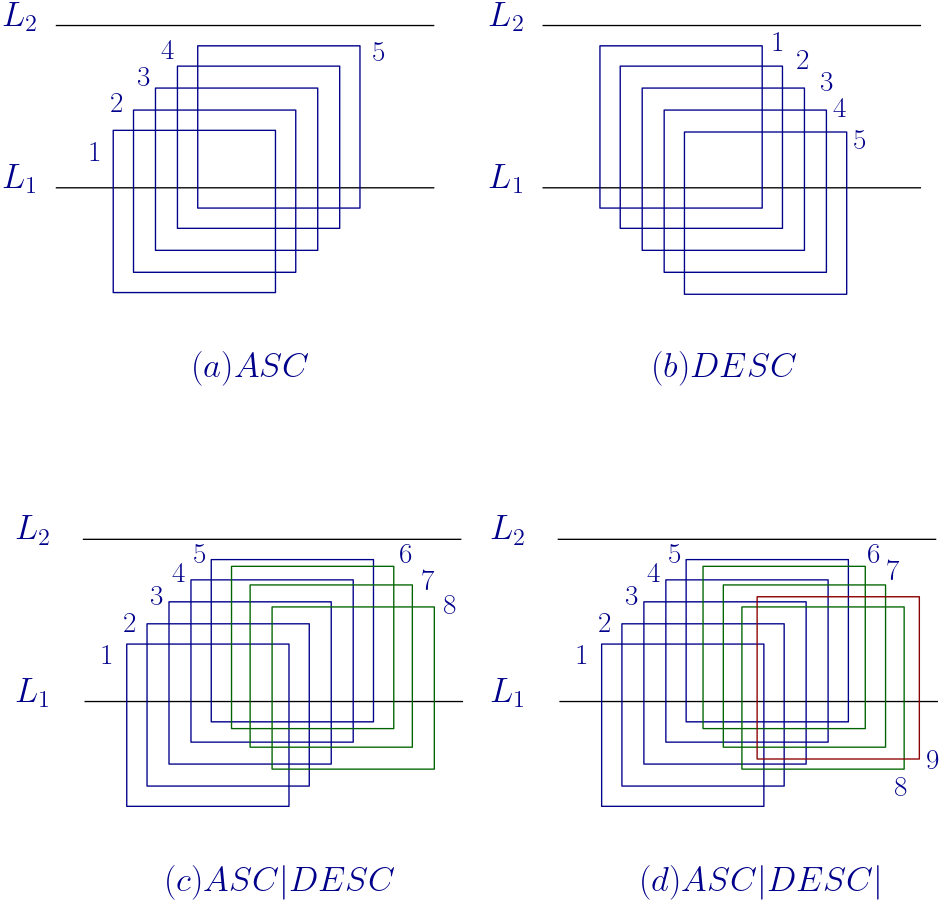}
\caption{(a), (b) and (c) shows three different types of bottom-anchored cliques. (d) Shows an invalid clique where an ASC$|$DESC bottom-anchored clique is followed by a transition square.}
\label{fig:bottom_all}
\end{figure}

\begin{itemize}
    \item \textbf{Bottom-anchored}: Here all the constituent squares of the clique intersect the bottom line $L_1$. There are three subtypes as shown in the Figure (\ref{fig:bottom_all}).
    \begin{itemize}
         \item \textbf{Bottom-anchored ASC}: Here the squares from left to right are ascending. In other words, for any two squares $s_1, s_2$ in the clique such that $s_1\prec s_2$, the square $s_2$ is above $s_1$.
         \item \textbf{Bottom anchored DESC}: Here the squares from left to right are descending. In other words, for any two squares $s_1, s_2$ in the clique such that $s_1\prec s_2$, the square $s_2$ is below $s_1$.
        \item \textbf{Bottom anchored ASC$|$DESC}: There is a $k\geq 2$ such that the squares constituting the clique are initially ascending from $s_1$ to $s_k$. Then the square $s_{k+1}$ lies below $s_k$. Then the squares $s_{k+1}$ to $s_{l}$ are descending.
    \end{itemize}
\end{itemize}

\begin{claim}\label{claim:bot_forbid}
    A bottom anchored clique of type ASC$|$ASC, DESC$|$ASC or DESC$|$DESC is forbidden.
\end{claim}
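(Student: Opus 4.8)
The plan is to obtain Claim \ref{claim:bot_forbid} from the forbidden-configuration results already proved for \emph{top}-anchored cliques, namely Claim \ref{claim:forbidden_1} (top-anchored ASC$|$ASC), Claim \ref{claim:forbidden_2} (top-anchored DESC$|$DESC), and the claim immediately preceding that a top-anchored ASC$|$DESC clique is forbidden, via a single reflection argument. Let $\rho$ be the reflection of the plane about the horizontal line $\ell$ lying halfway between $L_1$ and $L_2$. Because $L_1$ and $L_2$ are at unit distance, $\rho$ swaps $L_1$ and $L_2$, maps the closed unit-height slab bounded by $L_1,L_2$ onto itself, and maps every unit square to a unit square. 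Hence $\rho$ carries any instance of Problem \ref{defn:minply_slab} to another such instance: every point of $P$ stays inside the slab, and a square intersecting $L_1$ (resp.\ $L_2$) is sent to a square intersecting $L_2$ (resp.\ $L_1$). In particular $\rho$ sends a bottom-anchored clique to a top-anchored clique, and since it is an isometry it preserves all point–square incidences, hence common intersections (so cliques and ply regions are preserved), and hence it also preserves whether a given square is redundant with respect to the points it is meant to help cover.

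The only feature that changes under $\rho$ is the vertical order: if $s_1$ lies above $s_2$ then $\rho(s_1)$ lies below $\rho(s_2)$, while the left-to-right order $\prec$ is untouched. Therefore $\rho$ turns an \emph{ascending} run of squares into a \emph{descending} run and vice versa, keeping the left-to-right location of the transition square and the bound $k\ge 2$. Consequently $\rho$ sends a bottom-anchored ASC$|$ASC clique to a top-anchored DESC$|$DESC clique, a bottom-anchored DESC$|$ASC clique to a top-anchored ASC$|$DESC clique, and a bottom-anchored DESC$|$DESC clique to a top-anchored ASC$|$ASC clique. By Claim \ref{claim:forbidden_2}, the ASC$|$DESC claim above, and Claim \ref{claim:forbidden_1} respectively, each of these top-anchored configurations contains a redundant square; pulling that square back through $\rho^{-1}=\rho$ produces a redundant square in the original bottom-anchored clique, contradicting the standing assumption that the clique comes from a solution with no redundant squares. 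This handles all three cases simultaneously.

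If a self-contained argument is preferred, the alternative is to repeat, case by case, the redundancy computations of Claims \ref{claim:forbidden_1} and \ref{claim:forbidden_2} in mirror image: for a bottom-anchored ASC$|$ASC clique one shows the topmost square of the first ascending run is covered over its relevant area by its left neighbour in that run together with the transition square, and for DESC$|$ASC and DESC$|$DESC one argues analogously on the square adjacent to the transition square. I expect the reflection route to be the cleaner one, and the only real care-point there — the thing I would double-check carefully — is that redundancy of a square and membership of the points of $P$ in the slab are genuinely invariant under $\rho$; both hold because $\rho$ is an isometry fixing the slab setwise, so no geometric case analysis is needed at all. (In the direct approach the corresponding care-point is that bottom-anchored squares protrude below $L_1$, so ``relevant area'' must be read as the portion of a square lying inside the slab; with that reading the computations are verbatim mirrors of the top-anchored ones.)
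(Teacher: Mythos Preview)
Your proof is correct, and it takes a genuinely different route from the paper. The paper simply says that in each of the three cases some square becomes redundant, that the argument is ``similar to the proof of Claim~\ref{claim:forbidden_1}'', and points to figures illustrating which square is covered by which neighbours; in other words, it re-runs the same local redundancy computation in mirror image without spelling it out.

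Your reflection argument is cleaner: by reflecting about the midline of the slab you reduce all three bottom-anchored cases simultaneously to the already-established top-anchored forbidden configurations (ASC$|$ASC $\leftrightarrow$ DESC$|$DESC and DESC$|$ASC $\leftrightarrow$ ASC$|$DESC), and the only thing to verify---that $\rho$ preserves the slab, point--square incidences, and hence the notion of ``relevant area'' and redundancy---is immediate since $\rho$ is an isometry fixing the slab setwise. This buys you a uniform one-line reduction instead of three separate (if routine) geometric case checks. The paper's approach has the minor advantage of being self-contained at the level of each claim and of exhibiting explicitly which square is redundant (useful for the accompanying figures), but logically your symmetry argument is at least as strong and arguably what one should write. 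Your parenthetical remark about reading ``relevant area'' as the in-slab portion in the direct approach is exactly the right caveat and matches how the paper implicitly uses the term.
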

\begin{proof}
    In all the three cases some squares will become redundant leading to a contradiction. The proof is similar to the proof of Claim (\ref{claim:forbidden_1}). Refer to Figures (\ref{fig:bottom_asc_asc}) and (\ref{fig:bottom_desc_aaa}) for examples.
\end{proof}

\begin{figure}[ht!]
\centering
\includegraphics[width=5cm]{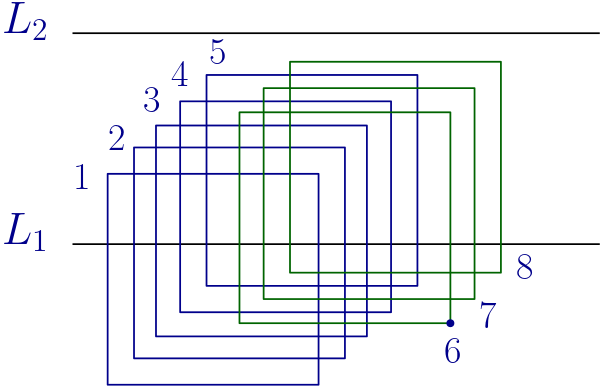}
\caption{A forbidden clique of type bottom-anchored ASC$|$ASC. Here the square $6$ is redundant since the squares $5$ and $7$ fully cover the relevant area of $6$.}
\label{fig:bottom_asc_asc}
\end{figure}

\begin{figure}[ht!]
\centering
\includegraphics[width=9cm]{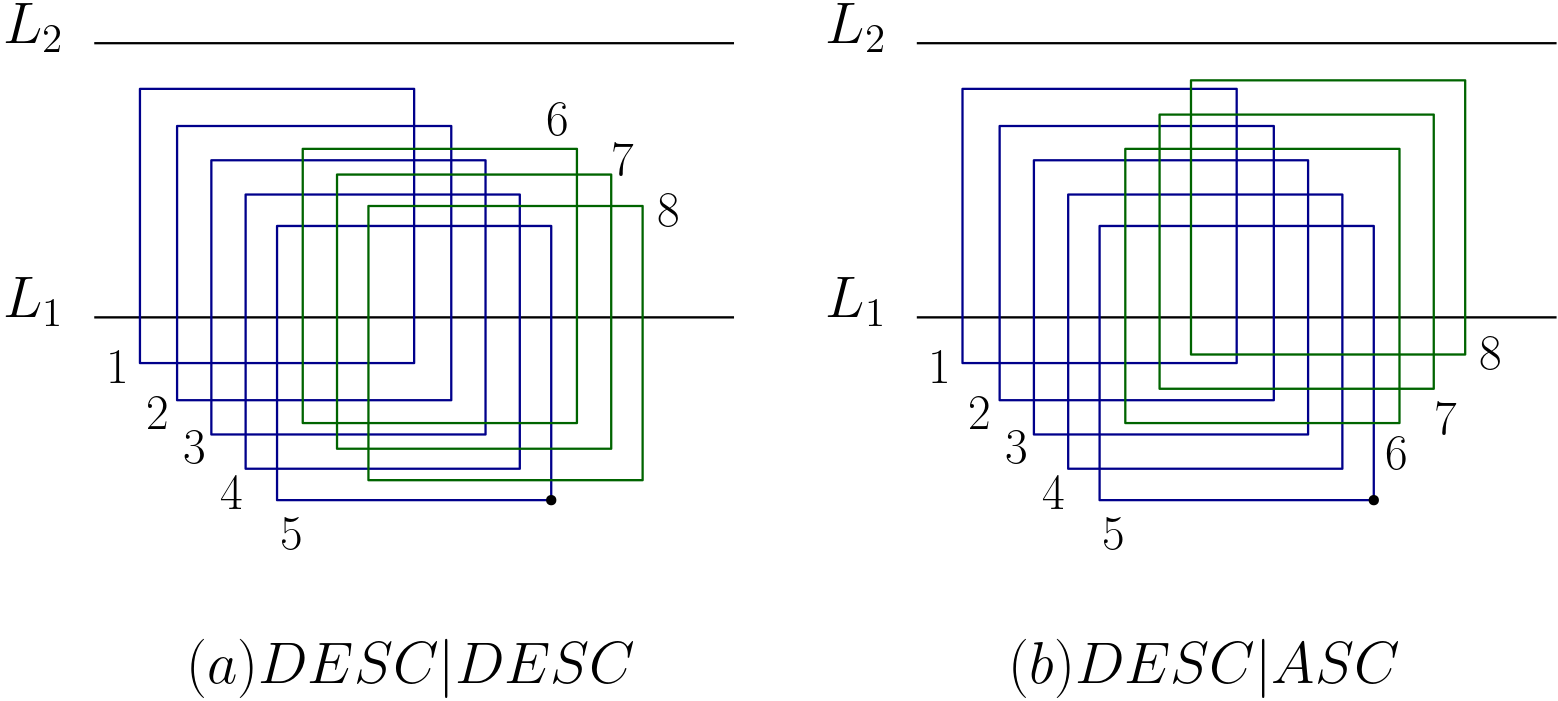}
\caption{(a) A forbidden clique of type bottom-anchored DESC$|$DESC. Here the square $5$ is redundant since the squares $4$ and $6$ fully cover the relevant area of $5$. (b) A forbidden clique of type bottom-anchored DESC$|$ASC. Here the square $5$ is redundant since the squares $4$ and $6$ fully cover the relevant area of $5$.}
\label{fig:bottom_desc_aaa}
\end{figure}

\begin{itemize}
    \item \textbf{Floating}: If a set of squares have a common intersection and some of the squares intersect the top line $L_2$ while others intersect the bottom line $L_1$, we call the common intersection as a \textit{floating} clique. In the subtypes below, at least one of the squares intersects the bottom line $L_1$ and at least one of the squares intersects the top line $L_2$.
    \begin{itemize}
        \item \textbf{Floating ASC}: Here, the leftmost square $s_1$ intersects the bottom line $L_1$. The squares from left to right are ascending. In other words, for any two squares $s_1, s_2$ in the clique such that $s_1\prec s_2$, the square $s_2$ is above $s_1$. The rightmost square $s_{l}$ must intersect the top line $L_2$. Refer to Figure (\ref{fig:floating_monotonic}(a)).
        \item \textbf{Floating DESC}: Here, the leftmost square $s_1$ intersects the top line $L_2$. the squares from left to right are descending. In other words, for any two squares $s_1, s_2$ in the clique such that $s_1\prec s_2$, the square $s_2$ is below $s_1$.  The rightmost square $s_{l}$ must intersect the bottom line $L_1$. Refer to Figure (\ref{fig:floating_monotonic}(b)).
        
\begin{figure}[ht!]
\centering
\includegraphics[width=10cm]{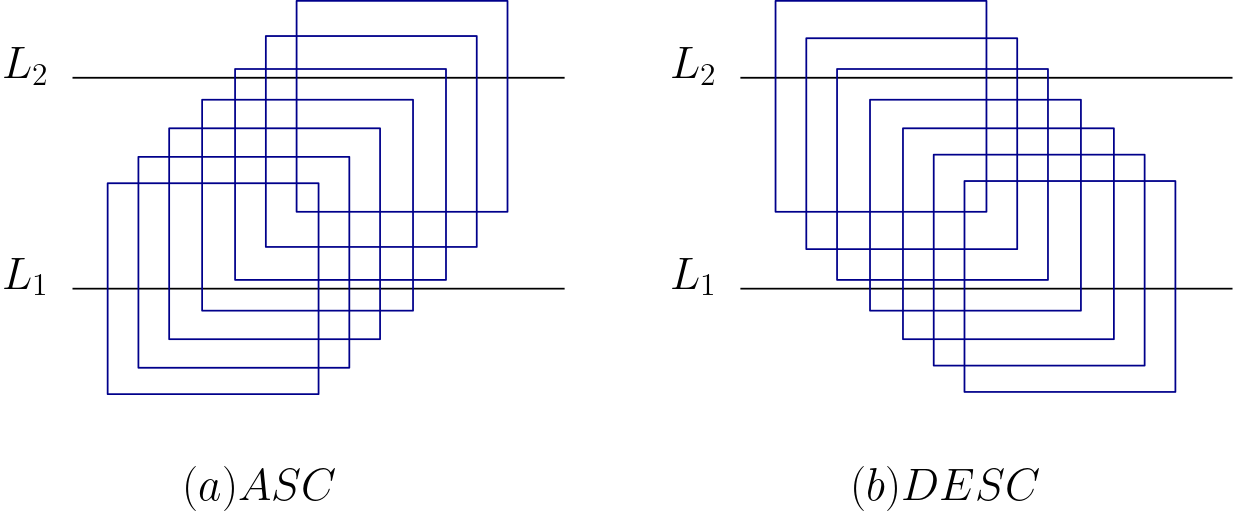}
\caption{Types of monotonic floating cliques.}
\label{fig:floating_monotonic}
\end{figure}

        \item \textbf{Floating ASC$|$ASC}: There is a $k\geq 2$ such that the squares constituting the clique are initially ascending from $s_1$ to $s_k$. Then the square $s_{k+1}$ lies below $s_k$. Then the squares $s_{k+1}$ to $s_{l}$ are again ascending. Refer to Figure (\ref{fig:floating_asc_asc}). 
        A clique of this type can be thought of as the merger of two monotonic ascending cliques, where the first clique is composed of the squares $s_1$ through $s_k$ and the second clique is composed of the squares $s_{k+1}$ through $s_l$. The structure of such a clique follows certain rules as specified by the following claim.
        \begin{claim}\label{claim:float_forbid_asc_asc}
            In a floating clique of type ASC$|$ASC, at most one square of the first ascending sequence can intersect the top line $L_2$ and, at most one square of the second ascending sequence can intersect the bottom line $L_1$.
        \end{claim}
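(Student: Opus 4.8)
My plan is to argue by contradiction in the style of Claims~\ref{claim:forbidden_1}--\ref{claim:bot_forbid}: if the stated restriction is violated I exhibit a square of the clique that is forced to be redundant, contradicting the fact that the clique comes from a cover with no redundant square. Throughout I assume general position, so that each square intersects exactly one of $L_1,L_2$; hence inside any maximal ascending sub-run the bottom-anchored squares form a prefix (they are the low-and-left ones) and the top-anchored squares a suffix. I also use freely that any two squares of the clique overlap, both in $x$ (so all their left sides lie in one common unit-length interval) and in $y$; these pairwise-overlap inequalities are what make the coverage checks below mechanical.

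\textbf{First assertion (first ascending run, line $L_2$).} Suppose two squares of $s_1,\dots,s_k$ intersect $L_2$; by the prefix/suffix structure they form a suffix, so in particular the two rightmost squares $s_{k-1},s_k$ of the first run are top-anchored and both reach up to $L_2$. I would show $s_k$ is redundant by covering its relevant (slab-contained) part, $[x_L(s_k),x_R(s_k)]$ crossed with the band from $y_B(s_k)$ up to the level of $L_2$, with at most three squares: (i)~$s_{k-1}$, which is below $s_k$, is top-anchored, and horizontally overlaps $s_k$, covers the left column $x\in[x_L(s_k),x_R(s_{k-1})]$ all the way up to $L_2$; (ii)~the transition square $s_{k+1}$, which is below $s_k$, lies to its right, and horizontally overlaps both $s_{k-1}$ and $s_k$, covers the right column $x\in[x_R(s_{k-1}),x_R(s_k)]$ from below $y_B(s_k)$ up to $y_T(s_{k+1})$; and (iii)~if $s_{k+1}$ is bottom-anchored and so stops short of $L_2$, the left-most top-anchored square $s_g$ of the second run --- which lies above $s_{k+1}$, reaches $L_2$, and still horizontally overlaps $s_{k-1}$ --- covers the remaining upper-right rectangle. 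Each inclusion reduces to a comparison of interval endpoints drawn from the clique's overlap inequalities. The argument stalls only when the second run contains no top-anchored square at all; if that run has $\ge 2$ squares this is exactly the negation of the second assertion, so the two assertions are not independent.

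\textbf{Second assertion (second ascending run, line $L_1$).} This is the mirror statement. If $s_{k+1},\dots,s_l$ contains two bottom-anchored squares they form a prefix, so $s_{k+1},s_{k+2}$ are bottom-anchored, and I would show $s_{k+1}$ is redundant: $s_{k+2}$ (above $s_{k+1}$, bottom-anchored) covers the right column of $s_{k+1}$'s slab part down to $L_1$; the last square $s_k$ of the first run (which lies to the left of $s_{k+1}$ and above it) covers the left column from $y_B(s_k)$ upward; and if $s_k$ is top-anchored and hence leaves a thin strip just above $L_1$ uncovered, the right-most bottom-anchored square of the first run patches it. This fails only if the first run has no bottom-anchored square, i.e.\ is entirely top-anchored --- again the negation of the first assertion. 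Because of this coupling I would package the whole thing as one proof by contradiction: assuming the claim fails, the case analysis above leaves only the rigid situation in which the first run is entirely top-anchored and the second run begins with at least two bottom-anchored squares. Then the first run covers the entire vertical extent of the slab over its $x$-span (a top-anchored square handles the top, vertical overlap of clique squares handles the bottom), while the first square of the second run is squeezed between that slab-spanning strip on its left and the next, higher square of the second run on its right; since all clique left sides lie in one unit interval these two regions already cover it, so it is redundant.

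The main obstacle is precisely this interdependence of the two assertions, together with the degenerate short runs (a second run that is a single square, $l=k+1$, or a first run with only one top-anchored square): unlike the purely top-anchored setting of Claim~\ref{claim:forbidden_1}, where every square reaches $L_2$ and $s_{k-1}\cup s_{k+1}$ already covers $s_k$, here a bottom-anchored transition square forces one to reach into the other run, and dually for the second assertion. Making the patch-by-patch covering watertight --- and verifying that it still closes when a run is as short as it can be --- is the delicate accounting; the individual geometric inclusions themselves are routine.
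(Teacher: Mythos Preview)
Your patch-by-redundancy strategy matches the paper's, but the paper makes one simplifying choice you miss: it uses the leftmost square $s_1$ of the entire clique as the auxiliary patch, rather than reaching into the other run for a suitably anchored square. For the first assertion with $s_{k+1}$ bottom-anchored, the paper does not insist on making $s_k$ redundant; it splits on $s_{k+2}$. If $s_{k+2}$ intersects $L_2$, then $s_{k-1},s_{k+1},s_{k+2}$ already cover $s_k$ (this is your case with $s_g=s_{k+2}$). If $s_{k+2}$ intersects $L_1$, the paper switches target and declares $s_{k+1}$ redundant, covered by $s_1,s_k,s_{k+2}$. The second assertion is then a one-liner: with $s_{k+1},s_{k+2}$ both bottom-anchored, $s_{k+1}$ is again redundant via $s_1,s_k,s_{k+2}$. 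Because $s_1$ is always present, the two assertions are proved independently and there is no residual ``combined case'' to mop up.

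Your combined-case argument is where the genuine gap sits. You assert that when the first run is entirely top-anchored, ``the first run covers the entire vertical extent of the slab over its $x$-span''. It does not: every first-run square has its bottom edge strictly above $L_1$, so over the left column $[x_L(s_{k+1}), x_L(s_{k+2})]$ the horizontal band from $L_1$ up to $y_B(s_1)$ is met by no first-run square; and among the remaining clique squares only $s_{k+1}$ itself reaches to the left of $x_L(s_{k+2})$. Nothing you have listed covers that strip, so you have not shown $s_{k+1}$ redundant there. The paper's use of $s_1$ is exactly what supplies this missing strip --- in the paper's reading $s_1$ is bottom-anchored (the clique is floating and the first ascending run starts at the bottom; compare the explicit statement in the Floating ASC definition and the figures for this claim). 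Once you use $s_1$ as the low-left patch rather than hunting for a bottom-anchored square inside the first run, the two assertions decouple and the short argument goes through.
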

\begin{proof}
    Suppose not. There are at least two squares 
 in the first ascending sequence intersecting $L_2$. Then the two rightmost squares in the first ascending sequence $s_{k-1}$ and $s_{k}$ definitely intersect $L_2$. By definition, $s_{k+1}$ lies below $s_{k}$. If $s_{k+1}$ intersects the top line $L_{2}$, then $s_{k}$ is redundant as $s_{k-1}$ and $s_{k+1}$ cover the relevant area of $s_k$. Refer to Figure \ref{fig:floating_asc_asc}(a). If $s_{k+1}$ intersects the bottom line $L_{1}$, then there are two cases. Case $1$: $s_{k+2}$ also intersects the bottom line $L_{1}$, then $s_{k+1}$ is redundant as $s_1, s_{k}$ and $s_{k+2}$ cover the relevant area of $s_{k+1}$. Refer to Figure \ref{fig:floating_asc_asc}(b). Case $2$: $s_{k+2}$ intersects the top line $L_{2}$, then $s_{k}$ is redundant as $s_{k-1}, s_{k+1}$ and $s_{k+2}$ cover the relevant area of $s_{k}$. Refer to Figure \ref{fig:floating_asc_asc}(c).

 Now consider the second part of the claim. Suppose there are at least two squares in the second ascending sequence intersecting $L_1$. Then its two leftmost squares $s_{k+1}$ and $s_{k+2}$ definitely intersects $L_1$. The square $s_{k+1}$ is redundant as $s_{1}, s_{k}$ and $s_{k+2}$ cover the relevant area of $s_{k+1}$. Refer to Figure \ref{fig:floating_asc_asc}(d).
\end{proof}

\begin{figure}[ht!]
\centering
\includegraphics[width=10cm]{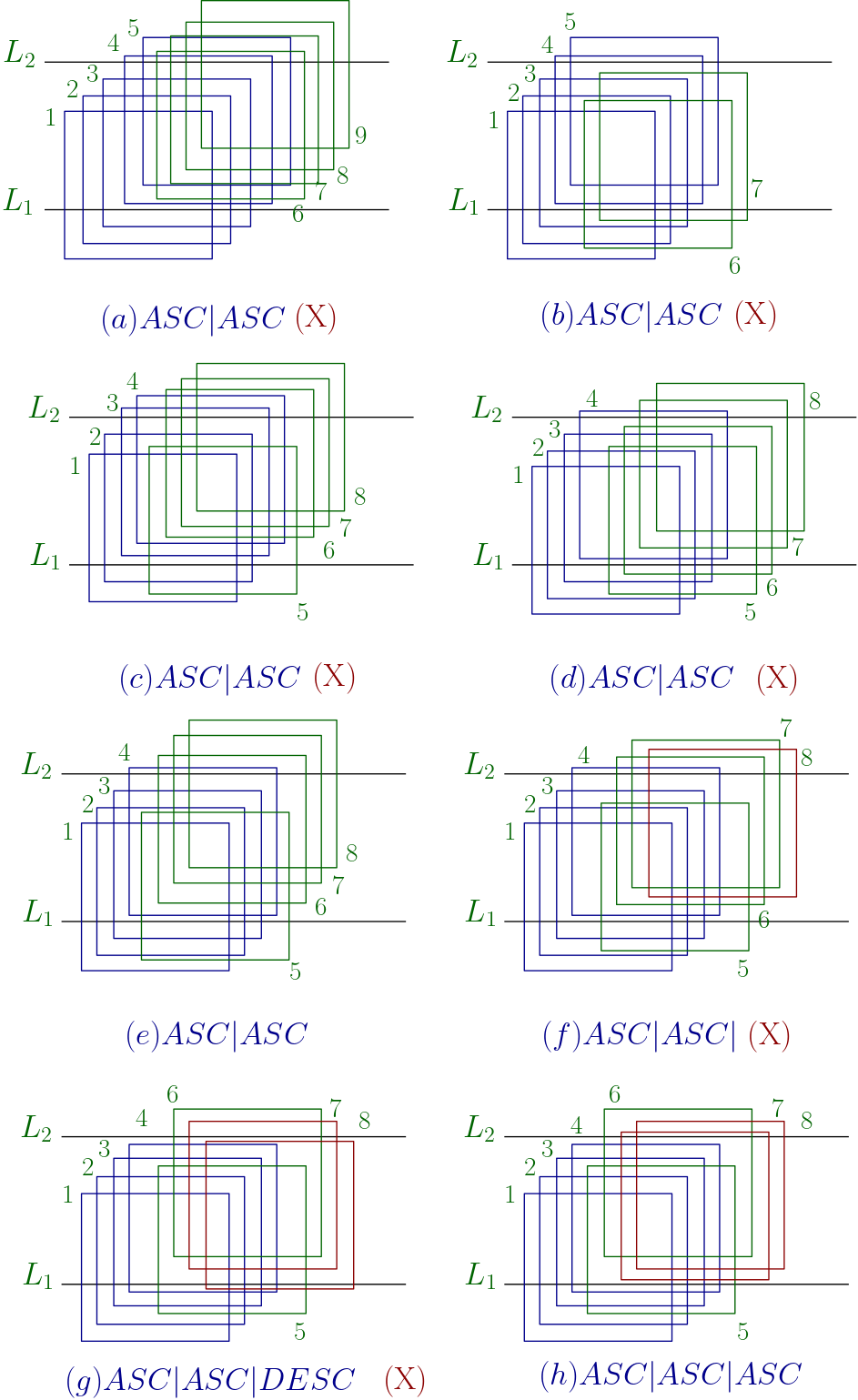}
\caption{Floating ASC$|$ASC cliques. In (a), (b) and (c), the first ASC sequence has more than $1$ square intersecting $L_2$. (a) Shows an invalid clique of ASC$|$ASC type, where the rightmost square of the first ASC sequence, $5$ is covered by $4$ and $6$. The second ASC sequence has more than $1$ squares intersecting $L_1$. (b) Shows an invalid clique of ASC$|$ASC type where, the transition square $6$ is covered by $1, 5$ and $7$. (c) Shows an invalid clique of ASC$|$ASC type where, the rightmost square of the first ASC sequence, $4$ is covered by $3, 5$ and $6$. (d) Shows an invalid clique where an ASC$|$ASC the second ascending sequence has more than $1$ square intersecting $L_1$. Here, the relevant area of $5$ is covered by $1, 4$ and $6$. (e) Shows a valid clique of type ASC$|$ASC. (h) Shows an invalid clique of type ASC$|$ASC followed by a transition square. (g) Shows an invalid clique of type ASC$|$ASC$|$DESC. (h) Shows a valid clique of type ASC$|$ASC$|$ASC.}
\label{fig:floating_asc_asc}
\end{figure}

        \item \textbf{Floating ASC$|$DESC}: There is a $k\geq 2$ such that the squares constituting the clique are initially ascending from $s_1$ to $s_k$. Then the square $s_{k+1}$ lies below $s_k$. Then the squares $s_{k+1}$ to $s_{l}$ are descending. Refer to Figure (\ref{fig:floating_asc_desc}).
        A clique of this type can be thought of as the merger of a monotonic ascending clique followed by a monotonic descending clique, where the first clique is composed of the squares $s_1$ through $s_k$ and the second clique is composed of the squares $s_{k+1}$ through $s_l$. The structure of such a clique follows certain rules as specified by the following claim.
        \begin{claim}
            In a clique of type ASC$|$DESC, at most two squares of the clique can intersect the top line $L_2$.
        \end{claim}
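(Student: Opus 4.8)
The plan is to first pin down the shape of the top‑anchored part of such a clique. Write the clique in left‑to‑right order as $s_1\prec\cdots\prec s_k\prec s_{k+1}\prec\cdots\prec s_l$, where $s_1,\dots,s_k$ are ascending (so $s_k$ is the topmost square of the whole clique) and $s_{k+1},\dots,s_l$ are descending, with $s_{k+1}$ lying below $s_k$. I would record two structural facts. First, in a \emph{floating} clique $s_k$ must intersect $L_2$: otherwise every square of the ascending part has its top below $L_2$, and then $s_{k+1}$ (lying below $s_k$) and, by descent, the entire second part also have tops below $L_2$, so the clique would be bottom‑anchored, not floating. Second, since tops increase along the ascending part and decrease along the descending part, the squares meeting $L_2$ form a contiguous block $s_a,\dots,s_c$ with $a\le k\le c$; and if $s_{k+1}$ meets $L_2$ then so does $s_k$, because a top‑anchored square cannot lie below a bottom‑anchored one.

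Next I would assume for contradiction that this block has at least three squares and exhibit a redundant square of the clique, contradicting the standing assumption that no square of the clique is redundant. Let $\alpha\ge 1$ and $\beta\ge 0$ be the numbers of top‑anchored squares in the ascending and descending parts, respectively; we are assuming $\alpha+\beta\ge 3$. If $\alpha\ge 2$ and $\beta\ge 1$, I claim the peak $s_k$ is redundant: $s_{k-1}$ and $s_{k+1}$ are both top‑anchored, hence reach up to $L_2$, and both lie below $s_k$, hence reach down past $y_B(s_k)$, so within the unit slab each covers its own $x$‑range over all of $s_k$'s vertical extent; the clique condition forces $x_L(s_{k+1})\le x_R(s_{k-1})$ (both bound the common point's $x$‑coordinate), so there is no horizontal gap, and $s_{k-1}\cup s_{k+1}$ covers the slab portion of $s_k$. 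If $\beta=0$ (so $\alpha\ge 3$ and $s_{k+1}$ is bottom‑anchored), I claim $s_{k-1}$ is redundant: $s_{k-2}$ covers its left portion (top‑anchored, below $s_{k-1}$), $s_k$ covers the upper‑right portion, and the bottom‑anchored $s_{k+1}$ covers the remaining lower‑right sliver, the key facts being $y_T(s_{k+1})\ge y_B(s_k)$ (they overlap vertically, being in a clique) and $x_L(s_{k+1})\le x_R(s_1)\le x_R(s_{k-2})\le x_R(s_{k-1})\le x_R(s_{k+1})$ from the clique and the $\prec$‑order. The last case, $\alpha=1$ (so $\beta\ge 2$ and $s_{k-1}$ is bottom‑anchored), is the mirror image: $s_{k+1}$ is redundant, with $s_{k+2}$ covering its right portion, $s_k$ the upper‑left portion, and the bottom‑anchored $s_{k-1}$ the lower‑left sliver. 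These three cases exhaust $\alpha+\beta\ge 3$.

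The main obstacle is the middle case, where the top‑anchored block lies entirely in the ascending part. Unlike the clean ``peak with two top‑anchored neighbours'' argument, here one genuinely needs a bottom‑anchored square from the descending side to fill the sliver underneath $s_k$, and one must check, using only the clique condition and the left‑to‑right order, that this square is simultaneously low enough (it overlaps $s_k$ vertically) and wide enough (its left edge is at most $x_R(s_1)$). Indeed, a purely ascending floating clique can carry three or more top‑anchored squares with none redundant, so the descending part is essential to the bound and the case split cannot be avoided. Everything else is the same horizontal‑no‑gap / vertical‑containment bookkeeping already used in Claim~\ref{claim:float_forbid_asc_asc} and the forbidden‑clique claims.
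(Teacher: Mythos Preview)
Your proof is correct and follows essentially the same three-case argument as the paper. The paper also observes that the peak $s_k$ must meet $L_2$ and then splits on which consecutive triple among $\{s_{k-2},s_{k-1},s_k\}$, $\{s_{k-1},s_k,s_{k+1}\}$, $\{s_k,s_{k+1},s_{k+2}\}$ is top-anchored, declaring $s_{k-1}$, $s_k$, $s_{k+1}$ redundant respectively; your parameterisation by $(\alpha,\beta)$ is just a relabelling of this split, and the only cosmetic difference is that in your mirror case you use the bottom-anchored $s_{k-1}$ to fill the lower-left sliver whereas the paper uses $s_1$ (either works via the clique inequality $x_L(s_{k+2})\le x_R(s_1)\le x_R(s_{k-1})$).
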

    \begin{proof}
    Suppose not. There are at least three squares intersecting $L_2$. The square $s_k$ is the topmost square in the clique, hence $s_k$ definitely intersects $L_2$. There are three cases. \\
    Case $1$: $s_{k-2}, s_{k-1}, s_k$ intersect $L_2$. The square $s_{k-1}$ is redundant as $s_{k-2}, s_{k}$ and $s_{k+1}$ cover the relevant area of $s_{k-1}$. Refer to Figure \ref{fig:floating_asc_desc}(a).\\
    Case $2$: $s_{k-1}, s_k, s_{k+1}$ intersect $L_2$. The square $s_{k}$ is redundant as $s_{k-1}$ and $s_{k+1}$ cover the relevant area of $s_{k}$. Refer to Figure \ref{fig:floating_asc_desc}(b).\\
    Case $3$: $s_k, s_{k+1}, s_{k+2}$ intersect $L_2$.  The square $s_{k+1}$ is redundant as $s_{1}, s_{k}$ and $s_{k+2}$ cover the relevant area of $s_{k+1}$. Refer to Figure \ref{fig:floating_asc_desc}(c).
    
\begin{figure}[ht!]
\centering
\includegraphics[width=10cm]{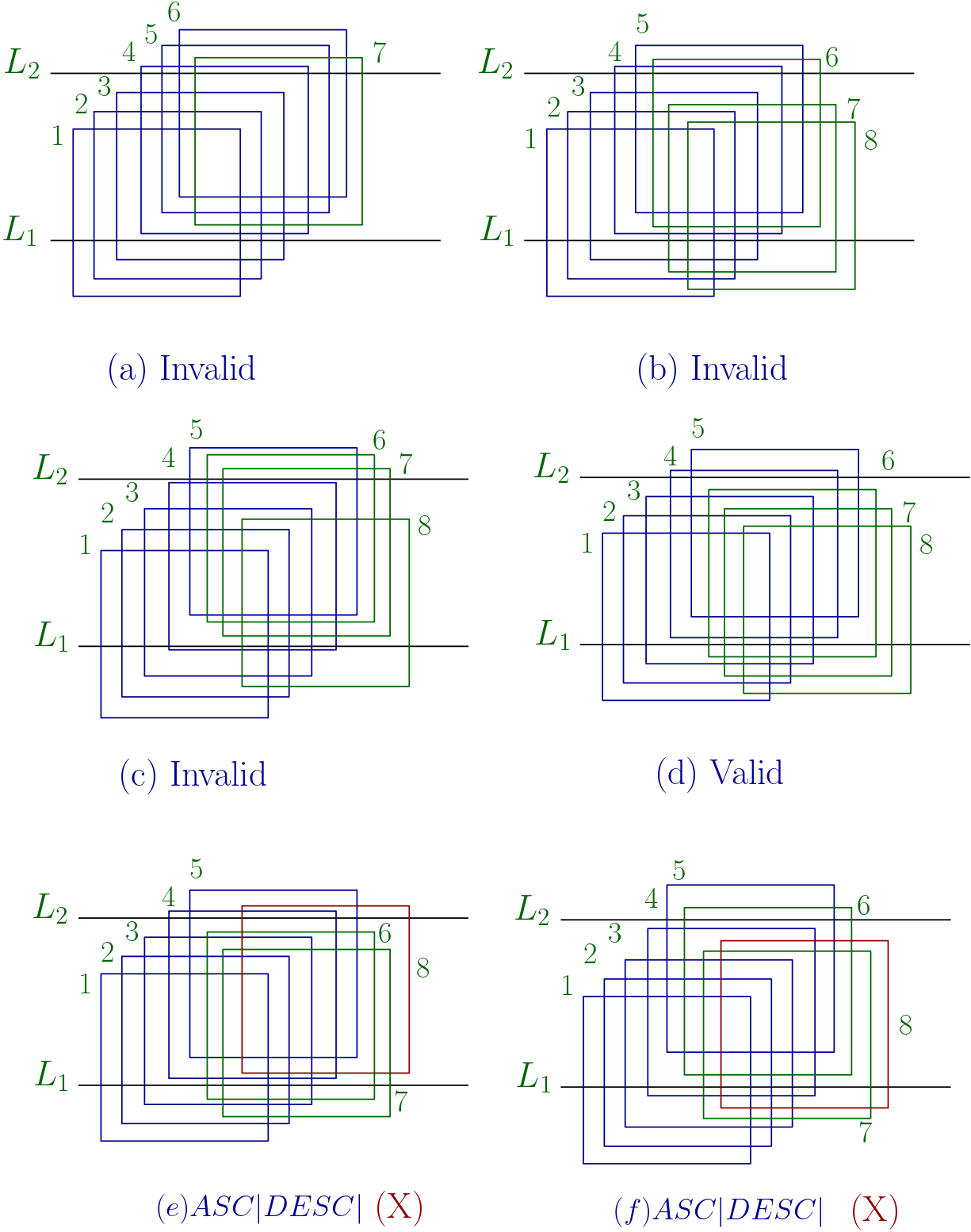}
\caption{(a), (b) and (c) show invalid cliques where more than $2$ squares intersect the top line. (a) Here the relevant area of $5$ is covered by $4, 6$ and $7$. (b) Here the relevant area of $5$ is covered by $4$ and $6$. (c) Here the relevant area of $6$ is covered by $1, 5$ and $7$. (d) A valid floating ASC$|$DESC clique where $2$ squares intersect the top line. (e) Shows an invalid clique where an ASC$|$DESC clique is followed by a transition square. The transition square $8$ intersects $L_2$. Here the relevant area of $6$ is covered by $1, 5, 7$ and $8$. (f) Shows an invalid clique where an ASC$|$DESC clique is followed by a transition square. The transition square $8$ intersects $L_1$. Here the relevant area of $7$ is covered by $1, 6$ and $8$.}
\label{fig:floating_asc_desc}
\end{figure}
Thus we have derived a contradiction in each of the cases. Hence proved.
\end{proof}
        \item \textbf{Floating DESC$|$ASC}: There is a $k\geq 2$ such that the squares constituting the clique are initially descending from $s_1$ to $s_k$. Then the square $s_{k+1}$ lies above $s_k$. Then the squares $s_{k+1}$ to $s_{l}$ are ascending. Refer to Figure (\ref{fig:floating_desc_asc}). A clique of this type can be thought of as the merger of a monotonic descending clique with another monotonic ascending clique, where the first clique is composed of the squares $s_1$ through $s_k$ and the second clique is composed of the squares $s_{k+1}$ through $s_l$. The structure of such a clique follows certain rules as specified by the following claim.
    \begin{claim}\label{claim:float_forbid_desc_asc}
        In a floating clique of type DESC$|$ASC, at most two squares of the clique can intersect the bottom line $L_1$.
    \end{claim}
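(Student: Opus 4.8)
The plan is to obtain this claim for free from the already-established structural claim about \textbf{Floating ASC$|$DESC} cliques (the one bounding by two the number of squares of such a clique that meet the top line $L_2$), using the obvious vertical-reflection symmetry of the unit slab. Let $\rho$ be the reflection of the plane about the horizontal midline of the slab between $L_1$ and $L_2$. Then $\rho$ swaps $L_1$ and $L_2$, carries the point set $P$ to $\rho(P)$, and maps each axis-parallel unit square to an axis-parallel unit square; point-in-square containment is preserved, so a clique with no redundant square maps to a clique with no redundant square, the ply region maps to the ply region, the left-to-right order $\prec$ is unchanged, and ``above/below'' is interchanged. Hence a \textbf{Floating DESC$|$ASC} clique — descending $s_1,\dots,s_k$, then the transition square $s_{k+1}$ lying above $s_k$, then ascending $s_{k+1},\dots,s_l$ — is carried by $\rho$ to a \textbf{Floating ASC$|$DESC} clique, and ``square intersecting $L_1$'' is carried to ``square intersecting $L_2$''. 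Since the image clique has at most two squares meeting $L_2$, the original DESC$|$ASC clique has at most two squares meeting $L_1$, which is exactly the assertion of the claim.

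The only point needing care is that $\rho$ is genuinely a structure-preserving bijection on valid instances: one must check that after reflecting, the defining inequalities of the DESC$|$ASC configuration (in particular the one placing the transition square $s_{k+1}$ \emph{above} $s_k$) flip correctly into those of an ASC$|$DESC configuration, and that the notions of ``redundant square'' and ``relevant area'' used in the ASC$|$DESC proof are invariant under $\rho$. None of this is deep — it is routine verification — but it is the one place an error could slip in, so I would state it explicitly.

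For readers who prefer a self-contained argument, I would also spell out the direct case analysis, mirroring the ASC$|$DESC proof. Suppose at least three squares of the DESC$|$ASC clique meet $L_1$. Because heights are monotone along each of the two runs, these three ``low'' squares occur consecutively and straddle the valley bottom at $s_k$, giving three cases according to whether the run is $s_{k-2},s_{k-1},s_k$ (inside the descending part), $s_{k-1},s_k,s_{k+1}$ (around the transition), or $s_k,s_{k+1},s_{k+2}$ (inside the ascending part). In each case I would show the middle low square is redundant: the part of it lying inside the common intersection is covered by its two low neighbours, together with the globally leftmost square $s_1$ (respectively the rightmost square $s_l$) whenever the exposed side of that square reaches the far end of the ply region — exactly as in Cases 1--3 of the ASC$|$DESC claim and Figure (\ref{fig:floating_asc_desc}). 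This contradicts the non-redundancy of every square in a clique, so at most two squares of a \textbf{Floating DESC$|$ASC} clique can intersect $L_1$, completing the proof.
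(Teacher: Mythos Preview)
Your primary argument via the vertical reflection $\rho$ is correct and is a genuinely different route from the paper's proof. The paper does the direct three-case analysis you sketch as your secondary argument: it assumes three squares meet $L_1$, observes that $s_k$ (the bottommost) must be among them, and splits into the cases $\{s_{k-2},s_{k-1},s_k\}$, $\{s_{k-1},s_k,s_{k+1}\}$, $\{s_k,s_{k+1},s_{k+2}\}$, in each finding a redundant square (using $s_{k+1}$ in Case~1 and $s_1$ in Case~3 as the ``far'' covering square, rather than $s_l$). Your reflection argument buys you the claim essentially for free once the ASC$|$DESC case is done, and it makes the underlying duality explicit; the paper's approach is self-contained but repeats the case work. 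One small correction to your direct sketch: in Case~1 the auxiliary covering square is $s_{k+1}$ (which must intersect $L_2$), not the global extreme $s_l$, and the paper additionally argues in that case that $s_{k+1}$ cannot meet $L_1$ lest $s_k$ already be redundant --- a detail worth including if you write out the direct version.
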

    \begin{proof}
    Suppose not. There are at least three squares intersecting $L_1$. The square $s_k$ is the bottom-most square in the clique, hence $s_k$ definitely intersects $L_1$. There are three cases. \\
     Case $1$: $s_{k-2}, s_{k-1}, s_k$ intersect $L_1$. $s_{k+1}$ cannot intersect $L_1$ otherwise $s_{k}$ will be rendered redundant. So, $s_{k+1}$ must intersect $L_2$. Now, the square $s_{k-1}$ is redundant as $s_{k-2}, s_{k}$ and $s_{k+1}$ cover the relevant area of $s_{k-1}$. Refer to Figure \ref{fig:floating_desc_asc}(a).\\
    Case $2$: $s_{k-1}, s_{k}, s_{k+1}$ intersect $L_1$. The square $s_{k}$ is redundant as $s_{k-1}$ and $s_{k+1}$ cover the relevant area of $s_{k}$. Refer to Figure \ref{fig:floating_desc_asc}(b).\\
    Case $3$: $s_k, s_{k+1}, s_{k+2}$ intersect $L_1$.  The square $s_{k+1}$ is redundant as $s_{1}, s_{k}$ and $s_{k+2}$ cover the relevant area of $s_{k+1}$. Refer to Figure \ref{fig:floating_desc_asc}(c).\\
    Thus we have derived a contradiction in each of the cases. Hence proved.
   
    \end{proof}
        \item \textbf{Floating DESC$|$DESC}:  There is a $k\geq 2$ such that the squares constituting the clique are initially descending from $s_1$ to $s_k$. Then the square $s_{k+1}$ lies above $s_k$. Then the squares $s_{k+1}$ to $s_{l}$ are again descending. Refer to Figure (\ref{fig:floating_desc_desc}). 
        A clique of this type can be thought of as the merger of two monotonic descending cliques, where the first clique is composed of the squares $s_1$ through $s_k$ and the second clique is composed of the squares $s_{k+1}$ through $s_l$. The structure of such a clique follows certain rules as specified by the following claim.
    \begin{claim}
        In a floating clique of type DESC$|$DESC, at most one square of the first descending sequence can intersect the bottom line $L_1$ and, at most one square of the second descending sequence can intersect the top line $L_2$.
    \end{claim}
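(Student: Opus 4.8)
The plan is to prove this claim exactly as the preceding structural claims are proved: by a redundancy-forcing contradiction. In fact the quickest route is to observe that a floating DESC$|$DESC clique is the mirror image, under reflection of the plane across the horizontal line halfway between $L_1$ and $L_2$, of a floating ASC$|$ASC clique: the reflection swaps $L_1$ with $L_2$, turns every descending sequence of unit squares into an ascending one and conversely, and preserves the left-to-right order $\prec$ and the property of being redundant. Under this bijection the present statement becomes precisely Claim~\ref{claim:float_forbid_asc_asc}, so it follows immediately. To stay in keeping with the paper's style, however, I would also write out the direct argument, which is the ``top/bottom flipped'' transcription of the proof of Claim~\ref{claim:float_forbid_asc_asc}.

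For the first half of the claim I would argue by contradiction: assume at least two squares of the first descending sequence $s_1,\dots,s_k$ (recall $k\ge 2$) meet $L_1$. Since that sequence is descending, the two lowest such squares are the two rightmost, so $s_{k-1}$ and $s_k$ both meet $L_1$, and the transition square $s_{k+1}$ lies above $s_k$. If $s_{k+1}$ also meets $L_1$, then $s_{k-1}$ and $s_{k+1}$ together cover the part of $s_k$ inside the slab, so $s_k$ is redundant. If instead $s_{k+1}$ meets $L_2$, I split on $s_{k+2}$: if $s_{k+2}$ meets $L_2$ then $s_1,s_k,s_{k+2}$ cover the relevant part of $s_{k+1}$, making $s_{k+1}$ redundant; if $s_{k+2}$ meets $L_1$ then $s_{k-1},s_{k+1},s_{k+2}$ cover the relevant part of $s_k$, making $s_k$ redundant. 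In every branch a clique square is redundant, a contradiction, exactly mirroring the cases illustrated in Figure~\ref{fig:floating_asc_asc}(a)--(d). For the second half I would first note that, by the half just proved, $s_1$ must meet $L_2$ (else $s_2$, lying below $s_1$, would meet $L_1$ as well, giving two first-sequence squares on $L_1$). Then, assuming two squares of the second descending sequence $s_{k+1},\dots,s_l$ meet $L_2$, the two highest of them are the two leftmost, $s_{k+1}$ and $s_{k+2}$; now $s_1$ (leftmost, on $L_2$), $s_k$ (just below $s_{k+1}$), and $s_{k+2}$ together cover the part of $s_{k+1}$ inside the slab, so $s_{k+1}$ is redundant, a contradiction.

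The only real work, and the place I expect the argument to need care, is the geometric bookkeeping behind each phrase ``$X$ and $Y$ (and $Z$) cover the relevant area of $W$'': one has to check, using that all squares have unit side, that the slab has unit height, and the monotonicity/ordering of the sequence, that the union of the named squares contains $W\cap(\text{slab})$ (or at least the subregion of $W$ that could be needed to cover an input point). This is the same elementary casework already done pictorially for the ASC$|$ASC, ASC$|$DESC, and DESC$|$ASC cliques, so I would reuse those coverage computations verbatim in the reflected coordinates; once they are granted, all branches above close immediately and the claim follows.
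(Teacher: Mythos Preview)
Your proposal is correct and follows essentially the same approach as the paper: a contradiction argument showing some clique square becomes redundant, with the same case split on whether $s_{k+1}$ (and then $s_{k+2}$) meets $L_1$ or $L_2$, and the same three-square coverage of $s_{k+1}$ in the second half. Your reflection-symmetry observation reducing DESC$|$DESC to Claim~\ref{claim:float_forbid_asc_asc} is a nice extra that the paper does not make explicit; conversely, the paper inserts the remark that $s_1$ must meet $L_2$ (else $s_{k-1}$ is redundant) already in the first half before invoking $s_1$ in a covering triple, whereas you defer this justification to the second half---you may want to move that sentence up so that the use of $s_1$ in the $s_{k+2}\in L_2$ subcase is fully supported.
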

    \begin{proof}
    Suppose not. There are at least two squares 
 in the first descending sequence intersecting $L_1$. Then the two rightmost squares in the first ascending sequence $s_{k-1}$ and $s_{k}$ definitely intersect $L_1$. By definition, $s_{k+1}$ lies above $s_{k}$. Since both $s_{k-1}, s_k$ are intersecting $L_{1}$, hence $s_1$ must intersect $L_2$, otherwise $s_{k-1}$ will become redundant. If $s_{k+1}$ intersects the bottom line $L_{1}$, then $s_{k}$ is redundant as $s_{k-1}$ and $s_{k+1}$ cover the relevant area of $s_k$. Refer to Figure \ref{fig:floating_desc_desc}(a). On the other hand, if $s_{k+1}$ intersects the top line $L_{2}$, then there are two cases. \\
 Case $1$: $s_{k+2}$ also intersects the top line $L_{2}$. Then $s_{k+1}$ is redundant as $s_1, s_{k}$ and $s_{k+2}$ cover the relevant area of $s_{k+1}$. Refer to Figure \ref{fig:floating_desc_desc}(b). \\
 Case $2$: $s_{k+2}$ intersects the bottom line $L_{1}$, then $s_{k}$ is redundant as $s_{k-1}, s_{k+1}$ and $s_{k+2}$ cover the relevant area of $s_{k}$. Refer to Figure \ref{fig:floating_desc_desc}(c).

 Now consider the second part of the claim. Suppose there are at least two squares in the second descending sequence intersecting $L_2$. Then its two leftmost squares $s_{k+1}$ and $s_{k+2}$ definitely intersects $L_2$. The square $s_{k+1}$ is redundant as $s_{1}, s_{k}$ and $s_{k+2}$ cover the relevant area of $s_{k+1}$. Refer to Figure \ref{fig:floating_desc_desc}(d).\\
 Thus we have derived a contradiction in each of the cases. Hence proved.
\end{proof}
    \end{itemize}
\end{itemize}

\begin{figure}[ht!]
\centering
\includegraphics[width=10cm]{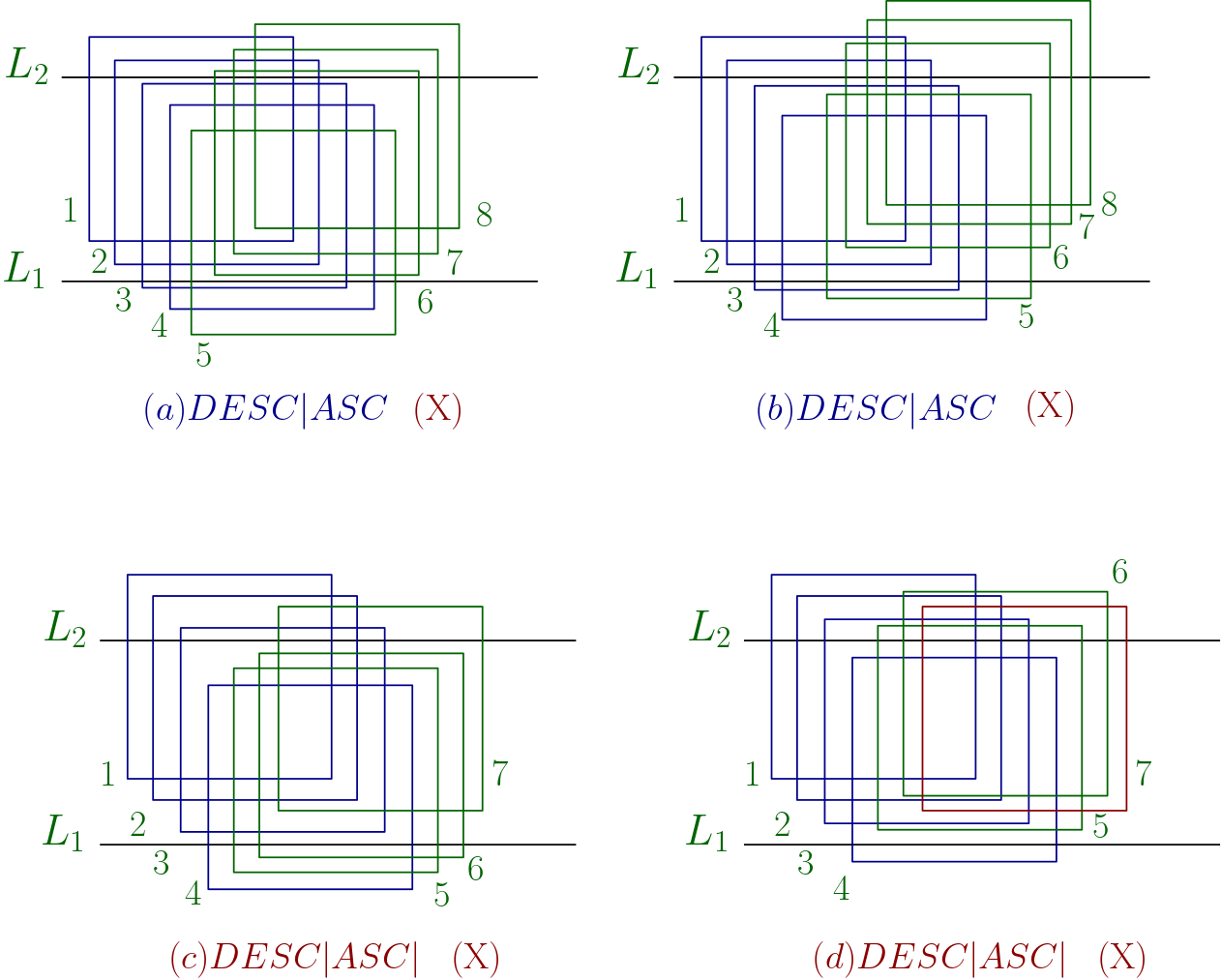}
\caption{(a), (b) and (c) show invalid DESC$|$ASC cliques where more than $2$ squares intersect the bottom line. (a) Here the relevant area of $4$ is covered by $3, 5$ and $6$. (b) Here the relevant area of $4$ is covered by $3$ and $5$. (c) Here the relevant area of $5$ is covered by $1, 4$ and $6$. (d) Shows an invalid clique where a DESC$|$ASC clique is followed by a transition square. The transition square $7$ will render either square $5$ or square $6$ redundant.}
\label{fig:floating_desc_asc}
\end{figure}

\begin{figure}[ht!]
\centering
\includegraphics[width=10cm]{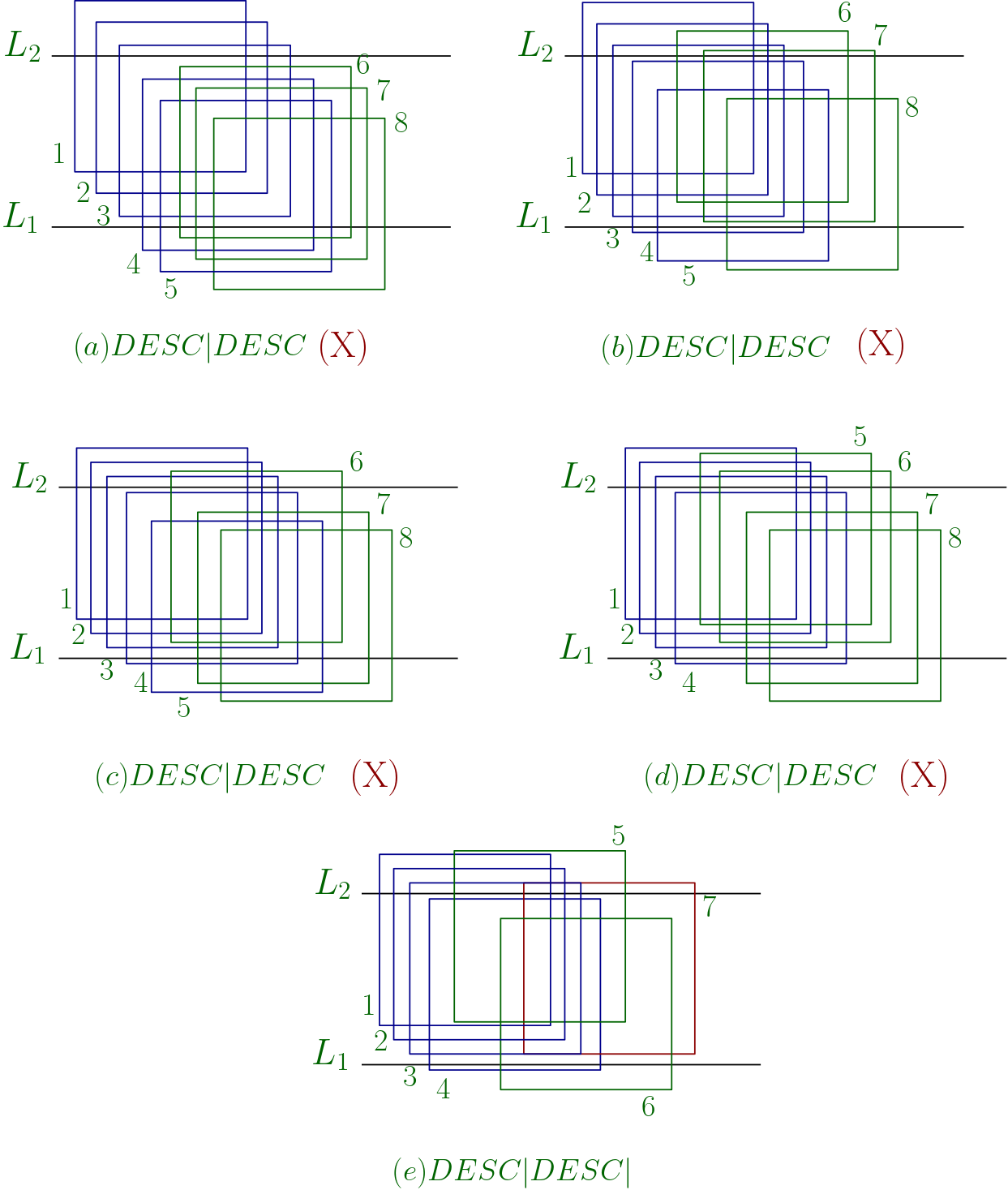}
\caption{(a), (b) and (c) show invalid DESC$|$DESC cliques where more than $1$ square from the first DESC sequence intersect the bottom line $L_1$. (a) Here the relevant area of $5$ is covered by $4$ and $6$. (b) Here the relevant area of $6$ is covered by $1, 5$ and $7$. (c) Here the relevant area of $5$ is covered by $4, 6$ and $7$. (d) Shows an invalid DESC$|$DESC clique where more than $1$ square from the second DESC sequence intersect the top line $L_2$. Here the relevant area of $5$ is covered by $1, 4$ and $6$. (d) Shows an invalid clique where a DESC$|$DESC clique is followed by a transition square. The transition square $7$ will render either square $5$ or square $6$ redundant.}
\label{fig:floating_desc_desc}
\end{figure}

\begin{lemma}
If three monotonic sequences of squares $S_1, S_2, S_3$, from left to right respectively, merge to form a clique then $S_2$ consists of at most $2$ squares.
\end{lemma}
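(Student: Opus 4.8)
The plan is to reduce to a single orientation and then combine the two-run clique bounds already proved with one final redundancy argument. First I would observe that the up--down reflection $(x,y)\mapsto(x,-y)$ of the slab swaps $L_1$ with $L_2$, ascending runs with descending runs, and ``above'' with ``below'', while fixing the left-to-right order and sending each forbidden top-anchored clique to the corresponding forbidden bottom-anchored clique (and conversely). Hence it suffices to treat the case where the three runs, left to right, are ASC, DESC, ASC; call them $S_1,S_2,S_3$ with $S_2=t_1,\dots,t_p$, and set $r:=$ the rightmost (hence highest) square of $S_1$ and $v:=$ the leftmost (hence lowest) square of $S_3$. Since $S_1,S_2,S_3$ are three genuine runs, $|S_1|\ge 2$ and $|S_3|\ge 2$ (otherwise the extreme run would merge into $S_2$, leaving only two runs), and by the transition conditions $t_1$ lies below $r$ and $t_p$ lies below $v$.

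Next I would split $S_2$ by which line its squares meet. Every square meets exactly one of $L_1,L_2$ (the exact-span case being degenerate), and since $S_2$ descends, its ``top'' squares form a prefix $t_1,\dots,t_a$ and its ``bottom'' squares a suffix $t_{a+1},\dots,t_p$, so $p=a+b$ with $b$ the number of bottom squares. To bound $a$: the clique $S_1\cup S_2$ has type ASC$|$DESC, so it cannot be top-anchored (a top-anchored ASC$|$DESC clique is forbidden); if it is bottom-anchored then $a=0$, and if it is floating then it has at most two top squares by the earlier claim on ASC$|$DESC cliques. But whenever $a\ge 1$, $t_1$ is a top square and $r$ (lying above $t_1$) is a top square too, so $a\ge 2$ would place the three distinct top squares $r,t_1,t_2$ inside $S_1\cup S_2$ -- a contradiction; hence $a\le 1$, and when $a=1$ the only top square of $S_1$ is $r$. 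Symmetrically, $S_2\cup S_3$ has type DESC$|$ASC, which cannot be bottom-anchored (forbidden by \cref{claim:bot_forbid}), so either $b=0$ or the earlier claim on DESC$|$ASC cliques gives $b\le 2$; in all cases $b\le 2$. Combining, $p=a+b\le 3$.

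It then remains to rule out $p=3$, where necessarily $a=1$, $b=2$, so $t_1$ is a top square and $t_2,t_3$ are bottom squares. Since $S_2\cup S_3$ already contains the two bottom squares $t_2,t_3$, it can contain no further bottom square, so $S_3$ is entirely top-anchored. I would then show the middle square $t_2$ is redundant, contradicting the non-redundancy of a clique arising in a solution. Because $t_1$ lies up and to the left of $t_2$ it covers the upper-left sub-rectangle of $t_2$, and because $t_3$ lies down and to the right of $t_2$ it covers the lower-right sub-rectangle; since $t_1$ and $t_3$ overlap (both contain the ply region), the only parts of $t_2$ left uncovered by $t_1\cup t_3$ are an upper-right rectangle and a lower-left rectangle. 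Using that every clique square contains the ply region -- which forces $x_L(s)\le x_R(t_1)$ for all $s$ and controls vertical extents -- the square $v\in S_3$ reaches far enough left and, being top-anchored ($y_T(v)\ge 1\ge y_T(t_2)$), high enough to cover the upper-right rectangle; and the leftmost square $r_1$ of $S_1$, which is bottom-anchored because $r$ is the only top square of $S_1$, covers the lower-left rectangle intersected with the slab. As every point of $P$ inside $t_2$ lies in the slab, $t_2$ is redundant -- the desired contradiction -- so $p\le 2$.

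The step I expect to be the main obstacle is the last paragraph: pinning down the four containments in the redundancy argument precisely from the ``contains the ply region'' relations, and in particular handling the case where $S_1$ ascends steeply (so $r_1$ sits well above the bottom of $t_2$), which is exactly why one must restrict to the slab and use that $r_1$ is bottom-anchored. By contrast, the bookkeeping that yields $a\le 1$, $b\le 2$, and $S_3$ all top is routine given the already-established claims.
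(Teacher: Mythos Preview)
Your reduction in the first paragraph is where the argument breaks. The up--down reflection only lets you fix the type of $S_2$, not of $S_1$ and $S_3$: in the paper's setup each $S_i$ may independently be ASC or DESC (the transition between consecutive sequences is a single step against the direction of the \emph{preceding} sequence, so e.g.\ ASC$|$ASC is perfectly legal, and the paper explicitly exhibits a valid ASC$|$ASC$|$ASC clique in Figure~\ref{fig:floating_asc_asc}(h)). Thus, having fixed $S_2$ as DESC by reflection, you still owe the patterns with $S_1$ DESC and/or $S_3$ DESC. Your counting step largely survives those patterns --- for $S_1$ DESC one uses the DESC$|$DESC claim on $S_1\cup S_2$ to still get $a\le 1$, and $S_3$ DESC even gives $b\le 1$ via the first-sequence bound of the DESC$|$DESC claim, which immediately yields $p\le 2$ --- but your final redundancy argument ruling out $p=3$ does not. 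That step needs the leftmost square $r_1$ of $S_1$ to be bottom-anchored so as to cover the lower-left piece of $t_2$, and you obtained this from ``$S_1$ is ASC with $r$ its only top square, and $|S_1|\ge 2$''. When $S_1$ is DESC, $r_1$ is the \emph{highest} square of $S_1$ and may well be top-anchored, so the lower-left rectangle of $t_2$ need not be covered by $r_1$, and a different witness is required.

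For comparison, the paper does not attempt a single-pattern reduction. It fixes only the orientation of $S_2$, first excludes $S_2$ bottom-anchored via the earlier two-run claims, and then performs a direct case analysis on whether $S_2$ is top-anchored or floating, branching further on the type of $S_1$ and on the top/bottom positions of the two transition squares, exhibiting a redundant square in each subcase. Your idea of applying the two-run bounds to $S_1\cup S_2$ and to $S_2\cup S_3$ to obtain $a+b\le 3$ is cleaner and more modular than that sprawling case analysis, and is worth keeping; you just need either to run your step-5 redundancy argument separately for the one remaining pattern ($S_1$ DESC, $S_3$ ASC, since $S_3$ DESC already forces $p\le 2$), or to find a covering square for the lower-left portion of $t_2$ that does not depend on the orientation of $S_1$.
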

\begin{proof}
The sequence of squares $S_2$ is either ASC or DESC. We consider the ASC case first. Suppose for the sake of contradiction that there are at least $3$ squares in the sequence $S_2$. We denote the three leftmost ones from left to right as $s_{k+1}, s_{k+2}$ and $s_{k+3}$. We know from the claims (\ref{claim:bot_forbid}), (\ref{claim:float_forbid_asc_asc}) and (\ref{claim:float_forbid_desc_asc}) that $S_2$ cannot be a bottom-anchored clique. We have the following possibilities,

i) $S_2$ is top-anchored ASC: Suppose, $|S_1| = k$, $|S_1|+|S_2| = l$. If the transition square of $(S_2, S_3)$, i.e., $s_{l+1}$ is top-intersecting then, $s_l$ will be covered by $s_{l-1}$ and $s_{l+1}$. Thus $s_{l}$ will become redundant. A contradiction. If $s_{l+1}$ is bottom-intersecting then, the area of $s_{l-1}$ will be covered $s_{k+1}, s_l$ and $s_{l+1}$. Thus $s_{l-1}$ will become redundant. A contradiction.

ii) $S_2$ is floating ASC: If $S_1$ is bottom-anchored DESC then the rightmost square of $S_1$, i.e., $s_{k}$ will become redundant as its relevant area will be covered by $s_{k-1}$ and $s_{k+1}$. Hence, $S_1$ is either Floating or Bottom-anchored ASC. Now the following cases are possible.

(a) If $s_{k+2}$, $s_{k+3}$ and the transition square of $(S_2, S_3)$, i.e., $s_{l+1}$ are top-intersecting then the relevant area of $s_{k+3}$ will be covered by $s_{k+2}$ and $s_{l+1}$. Thus $s_{k+3}$ will become redundant.\\
(b) If $s_{k+2}, s_{k+3}$ are top-intersecting but the transition square of $(S_2, S_3)$, i.e., $s_{l+1}$ is bottom-intersecting then there are two cases. Case $1$: the transition square of $(S_1, S_2)$, i.e., $s_{k+1}$ is bottom-intersecting, then the relevant area of $s_{k+1}$ will be entirely covered by $s_1$, $s_k, s_{k+2}$ and $s_{l+1}$. Thus $s_1$ will become redundant. Case $2$: the transition square between $S_1, S_2$, i.e., $s_{k+1}$ is top-intersecting, then the area of $s_{k+2}$ will be entirely covered by $s_{k+1}$, $s_{k+3}$ and $s_{l+1}$. Thus $s_{k+2}$ will become redundant.\\
(c) If $s_{k+1}$ and $s_{k+2}$ are bottom-intersecting, $s_{k+3}$ is top-intersecting and the transition square of $(S_2, S_3)$, i.e., $s_{l+1}$) is top-intersecting then there are two cases. \\
Case $1$: The rightmost square of $S_1$, i.e., $s_{k}$ is bottom-intersecting, then the relevant area of $s_{k+1}$ will be entirely covered by $s_{k}$ and $s_{k+2}$. Thus $s_{k+1}$ will become redundant. \\
Case $2$: The rightmost square of $S_1$, i.e., $s_{k}$ is bottom-intersecting, then the relevant area of $s_{k+3}$ will be entirely covered by $s_k, s_{k+2}$ and $s_{l+1}$. Thus $s_{k+3}$ will become redundant.\\
(d) If $s_{k+1}$ and $s_{k+2}$ are bottom-intersecting, $s_{k+3}$ is top-intersecting and the transition square of $(S_2, S_3)$, i.e., $l_{l+1}$ is bottom-intersecting then there are two cases. \\
Case $1$: the transition square between $S_1, S_2$, i.e., $s_{k+1}$ is bottom-intersecting, then the area of $s_{k+2}$ will be entirely covered by $s_{k+1}$ and $s_{k+3}$. Thus $s_{k+2}$ will become redundant. \\
Case $2$: the transition square between $S_1, S_2$, i.e., $s_{k+1}$ is top-intersecting, then the area of $s_{k+3}$ will be entirely covered by $s_{k+1}, s_{k+2}, s_{k+4}$, and $s_{k+5}$. Thus $s_{k+3}$ will become redundant.

We have shown a contradiction for each of the possibilities when $S_2$ is a sequence of ascending type. Similar arguments are applicable when $S_2$ is a sequence of descending squares. 
\end{proof}

\begin{claim}\label{claim:placement}
    Our algorithm irrevocably chooses a square $s$ at a point at or to the left of the rightmost exclusive point of $s$.
\end{claim}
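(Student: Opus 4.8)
The plan is to recast the informal phrase ``irrevocably chooses a square $s$ at a point'' in terms of the committed solutions produced by the greedy and then argue by contradiction. Trace back the parent pointers from the final output: this yields a chain $Sol_{1}\subseteq Sol_{2}\subseteq\cdots\subseteq Sol_{n}=Sol$ in which $Sol_{i}$ is a feasible cover of $P_{i}$ and $Sol_{i}$ is obtained from $Sol_{i-1}$ by adjoining at most one square, namely the square the algorithm uses to cover $p_{i}$ (nothing is adjoined when $p_{i}$ is already covered). With this normalization, ``$s$ is irrevocably chosen at $p_{i}$'' means exactly $i=\min\{\,j:s\in Sol_{j}\,\}$, and in particular $p_{i}\in s$, since the single square added at step $i$ is one that contains $p_{i}$.

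Now suppose, for contradiction, that $s$ is chosen at $p_{i}$ but the rightmost point exclusively covered by $s$ in the final solution, call it $q^{*}$, lies strictly to the left of $p_{i}$; write $q^{*}=p_{i'}$ with $i'<i$. The point $p_{i'}$ is processed before $p_{i}$, and by the chain property together with $i'<i=\min\{\,j:s\in Sol_{j}\,\}$ we have $s\notin Sol_{i'}$. Since $Sol_{i'}$ is a feasible cover of $P_{i'}$, some square $s'\in Sol_{i'}$ with $s'\neq s$ covers $p_{i'}$; and since $Sol_{i'}\subseteq Sol$, this $s'$ is still present in the final solution. Hence $p_{i'}$ is covered by two distinct squares of the final solution, contradicting that $p_{i'}=q^{*}$ is exclusive to $s$. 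Therefore no exclusive point of $s$ lies strictly to the left of $p_{i}$, so the rightmost exclusive point of $s$ is at or to the right of $p_{i}$, which is the claim.

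The main difficulty I anticipate is not the displacement argument above but the tacit assumption it rests on: that $s$ actually has an exclusive point in the final solution, so that ``the rightmost exclusive point of $s$'' is well defined. Although $p_{i}$ is exclusive to $s$ the instant $s$ is committed, a square chosen later (at some $p_{j}$ with $j>i$, in order to cover $p_{j}$) may also contain $p_{i}$, and the late additions together might cover everything $s$ used to cover exclusively; the argument above only shows that whatever exclusive points survive cannot be to the left of $p_{i}$. So the proof should be read either modulo a final pruning step that discards redundant squares, or with the understanding that the surrounding analysis only ever invokes the claim for a square already known to carry an exclusive point -- settling which of these is intended is the point to get right. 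A minor supporting item is to state the chain property of the committed solutions explicitly: each table entry $T(i,j)$ is a feasible cover of $P_{i}$ formed from its row-$(i-1)$ parent by adjoining the single square $s_{j}\ni p_{i}$, so the ancestors of $Sol$ along the parent pointers are nested and each covers its own prefix, which is all the argument needs.
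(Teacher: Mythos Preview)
Your argument is correct and reaches the same contradiction as the paper, but via a cleaner structural observation. The paper argues that if $s$ enters the committed solution only at some $p_{q}$ strictly right of its rightmost exclusive point $p_{r}$, then at step $q$ the algorithm ``must have discarded some square(s)'' that previously covered $p_{r}$, and concludes that $s$ ``would have been picked earlier by our greedy algorithm''---a somewhat informal appeal to the greedy's preferences. You instead make the monotone-chain property $Sol_{1}\subseteq\cdots\subseteq Sol_{n}$ explicit (which follows immediately from the recurrence $T(i,j)=T(i-1,k)\cup\{s_{j}\}$) and derive the contradiction in one line: whatever square of $Sol_{i'}$ covers $p_{i'}$ persists into $Sol$, so $p_{i'}$ is not exclusive. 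Your route sidesteps the vaguer part of the paper's proof entirely.

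Your caveat about whether $s$ even has an exclusive point is well placed. The paper handles this by fiat: just before the claim is used it asserts ``since all the squares in our solution are necessary, hence for every square $s\in Sol$, there exists a set of points $Excl(s)$\ldots'', so the surrounding analysis only invokes the claim for squares already assumed non-redundant. Reading your proof under that same convention (or after a final pruning of redundant squares) is exactly right.
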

\begin{proof}
Consider a square $s\in Sol$. The square $s$ must have been included into $Sol$ during processing some point $p\in s$. Let $p_{r}$ be the rightmost exclusive point of $s$ in $Sol$. Suppose $s$ does not exist in the partial solution obtained for the points till the point $p_{r}$. Then $s$ must have been included at some point $p_q$ to the right of $p_r$. While processing the point $p_q$, the algorithm must have discarded some square(s) so that $p_{r}$ and other points in $Excl(s)$ can become exclusive to $s$. The resulting solution is a feasible solution for $P_{q}$. This means that if $s$ was a better pick for $p_{q}$, it would have been picked earlier by our greedy algorithm. Hence, we have arrived at a contradiction.
\end{proof}
\begin{claim}
    When a clique is considered separately, the exclusive regions of all non-extreme squares in the clique are rectangular or $L$-shaped. All except at most two non-extreme squares may have two different connected exclusive regions.
\end{claim}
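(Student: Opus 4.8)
The plan is to exploit the structural classification of cliques developed above. Fix a clique $C=\{s_1\prec s_2\prec\cdots\prec s_l\}$; its non-extreme squares are $s_2,\dots,s_{l-1}$, and for such an $s_i$ write $E_i=s_i\setminus\bigcup_{j\neq i}s_j$ for its exclusive region inside $C$. From the forbidden-clique claims and the last lemma, $C$ decomposes into maximal monotone (ascending or descending) runs with at most two \emph{turning squares} between them — the peaks or valleys where two consecutive runs meet. Also recall that $C$ has a common point $q$ and that no square of $C$ is redundant.

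First I would establish a reduction: $E_i$ is determined by a bounded number of squares of $C$. The main tool is nestedness along a run: if $s_a\prec s_b\prec s_i$ and $s_a,s_b$ both lie below $s_i$ (resp.\ both above), then, since $s_a,s_b,s_i$ all contain $q$, the intersection $s_b\cap s_i$ is a corner rectangle of $s_i$ and $s_a\cap s_i\subseteq s_b\cap s_i$; the symmetric statement holds on the right. Hence within each monotone run the only square that can clip $E_i$ is the run endpoint nearest $s_i$, and together with the constantly many turning squares and their immediate neighbours this leaves only a bounded set of ``clippers'', each intersecting $s_i$ in a rectangle that lies in one of the four corners of $s_i$. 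Moreover, because every clipper contains $q$, each of these corner rectangles meets the central sub-rectangle of $s_i$ that contains $q$.

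Next comes the case analysis by the position of $s_i$ in $C$. If $s_i$ is interior to a monotone run, its two $\prec$-neighbours occupy \emph{opposite} corners of $s_i$ (below-left and above-right for an ascending run, above-left and below-right for a descending run); since both contain $q$, the union of their corner rectangles is exactly the ``diagonal band'' of $s_i$ through $q$, so $E_i$ is what remains, namely the two \emph{other} corners — two disjoint pieces, each a rectangle, possibly trimmed to an L-shape by a clipper arriving from past a nearby turn. If instead $s_i$ is a turning square, its two relevant neighbours lie in corners on the \emph{same} side of $s_i$ (both above for a peak, both below for a valley); since they overlap in $x$, their union is a single staircase region spanning that side, and $E_i$ is the single connected complement, a rectangle or an L-shape. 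Since there are at most two turning squares, at most two non-extreme squares have a one-piece exclusive region while every other non-extreme square has exactly two connected pieces, and by the corner description each piece is a rectangle or an L-shape.

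The hard part will be the last refinement — the squares adjacent to a turning square — where up to three clippers act on $s_i$ and two of them can land in the \emph{same} corner, so that the removed region there is an L rather than a rectangle; one must then check, using non-redundancy and the common point $q$, that the leftover is still two connected pieces, each no worse than an L-shape, and never collapses to a single piece (which would break the count of two exceptions). The bookkeeping also has to be carried out symmetrically for ascending versus descending runs and for each of the two-run and three-run clique types; I would organize it as a table indexed by (corner of $s_i$)\,$\times$\,(monotone run the clipper belongs to) to keep the enumeration finite and checkable.
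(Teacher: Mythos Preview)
Your proposal and the paper diverge on the \emph{second} sentence of the claim, and in fact you prove its negation. The paper's argument is much shorter than yours: for a non-extreme $s_i$ it looks only at the two immediate $\prec$-neighbours $s_{i-1},s_{i+1}$ and splits into four cases according to whether each lies above or below $s_i$. In the two monotone cases (both ASC or both DESC) the neighbours clip opposite corners of $s_i$ and the exclusive region sits at the remaining two corners; in the peak/valley cases the neighbours clip corners on the same side and the exclusive region is a single rectangle or $L$. That already gives the first sentence. For the second sentence the paper does \emph{not} count turning squares at all; instead it invokes the unit height of the slab: if $s_{i-1},s_i,s_{i+1}$ all intersect the same boundary line (say $L_2$), then one of the two corner pieces of $s_i$ lies entirely outside the slab (above $L_2$), so within the slab the exclusive region is a single piece. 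Only the (at most two) non-extreme squares straddling the $L_2$/$L_1$ transition in a monotone run keep both corner pieces inside the slab. Hence the paper's conclusion is that \emph{at most two} non-extreme squares have two connected exclusive regions.

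You conclude the opposite---``at most two non-extreme squares have a one-piece exclusive region while every other non-extreme square has exactly two connected pieces''---because you never intersect with the slab, and you attribute the ``at most two'' to the number of turning squares rather than to the $L_1$/$L_2$ transition. The elaborate machinery (runs, clippers, turning squares, the table of corners) is unnecessary here; the only structural facts used by the paper are the positions of $s_{i-1},s_{i+1}$ relative to $s_i$ and the fact that the relevant portion of each square lies between $L_1$ and $L_2$. Your nestedness observation (that within a monotone run only the nearest neighbour matters) is correct and is the implicit reason the paper can restrict to $s_{i-1},s_{i+1}$, but it is not the crux of the counting argument.
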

\begin{proof}
Consider any non-extreme square $s_i$. The square $s$ has a square $s_{i-1}$ to its left and a square $s_{i+1}$ to its right. There are $4$ cases.\\
i) All three squares are in ASC order. Then the exclusive regions of $s_{i}$ must lie around its top left corner and/or its bottom right corner.\\
ii) All three squares are in DESC order. Then the exclusive regions of $s_{i}$ must lie around its top right corner and/or its bottom left corner.\\
iii) $s_i$ is below both $s_{i-1}$ and $s_{i+1}$: Then $s_{i}$ has only one exclusive region which is either rectangular or $L$-shaped.\\
iv) $s_i$ is above both $s_{i-1}$ and $s_{i+1}$: Then $s_{i}$ has only one exclusive region which is either rectangular or $L$-shaped.\\
Since the horizontal slab has height $1$, hence there can be at most two squares having two different connected exclusive regions. Specifically, in a monotonic DESC clique, the rightmost square intersecting $L_2$ and the leftmost square intersecting $L_1$. And in a monotonic ASC clique, the rightmost square intersecting $L_1$ and the leftmost square intersecting $L_2$.
\end{proof}
Since all the squares in our solution are necessary, hence for every square $s\in Sol$, there exists a set of points $Excl(s)$ such that the points in $Excl(s)$ are contained exclusively in $s$ and no other square in $Sol$. These points in $Excl(s)$ are called exclusive points to $s$. We make the following crucial claim about exclusive points.
\begin{claim}\label{claim:excl}
Let $s_1, s_2$ be two consecutive squares in a maximum clique of $Sol$ such that $s_1\prec s_2$ and $s_1$ is not the leftmost square in $Sol$. No input square $s$ can contain all the points in $Excl(s_1)\cup Excl(s_2)$.
\end{claim}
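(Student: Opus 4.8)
The plan is to argue by contradiction. Suppose some input square $s$ contains every point of $Excl(s_1)\cup Excl(s_2)$. Since all squares of $Sol$ are necessary, $Excl(s_1)$ and $Excl(s_2)$ are both nonempty; moreover $s\notin Sol$, because by the definition of exclusivity no square of $Sol$ other than $s_1$ can contain a point of $Excl(s_1)$ and none other than $s_2$ can contain a point of $Excl(s_2)$, while $s$ cannot be both $s_1$ and $s_2$.

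The first step is to extract the geometry of the two exclusive sets from the clique classification established above. Since $s_1,s_2$ are consecutive in a maximum clique with $s_1\prec s_2$, one of them lies (weakly) above the other, and every clique square contains the rectangle $s_1\cap s_2$. Let $s_0$ be the square of $Sol$ immediately left of $s_1$ (it exists since $s_1$ is not leftmost) and, when it exists, $s_3$ the square immediately right of $s_2$. Using that the clique members among $s_0,s_1,s_2,s_3$ pairwise overlap --- so, for instance, $x_L(s_2)\le x_R(s_0)$ and $x_L(s_3)\le x_R(s_1)$ --- I would show that $Excl(s_1)$ is confined to a narrow axis-parallel rectangle $R_1$ abutting the side of $s_1$ that faces $s_2$, that $Excl(s_2)$ is confined to a disjoint rectangle $R_2$ abutting the matching side of $s_2$, and that $R_1$ and $R_2$ are \emph{diagonally} placed around a common corner point $c\in s_1\cap s_2$: $Excl(s_1)$ lies strictly to the left of (and on one vertical side of) $c$, while $Excl(s_2)$ lies strictly to the right of (and on the opposite vertical side of) $c$, the left--below versus right--above orientation being dictated by whether the clique is ascending or descending at $s_1,s_2$. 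Since $s$ is an axis-parallel square containing a point $p\in Excl(s_1)$ and a point $q\in Excl(s_2)$, it contains their axis-parallel bounding box, which in turn contains $c$; note also that the greedy processes $p$ strictly before $q$.

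The heart of the proof --- and the step I expect to be the main obstacle --- is to turn this into a contradiction with the greedy rule. By Claim~\ref{claim:placement}, $s_1$ is picked at or to the left of its rightmost exclusive point, hence before $q$; a short exclusivity argument in the style of the proof of Claim~\ref{claim:placement} shows that $s_2$ is likewise already in the partial solution by the time $q$ is processed (otherwise $q$ would be covered in that partial solution --- a subset of $Sol$ --- by a square other than $s_2$, contradicting $q\in Excl(s_2)$). Hence, at the step processing $q$, the greedy had in its table the cover $C$ obtained by adding $s$ to its best cover for the points strictly left of $q$; $C$ is feasible for the processed prefix and already contains all clique squares preceding $q$. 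The goal is to show $C$ is no worse than the cover the greedy actually produced at that step: because $s$ swallows $Excl(s_1)\cup Excl(s_2)$ and contains the corner $c$, replacing $s_2$ (and, in a second round, $s_1$) by $s$ keeps every point covered while --- this is the delicate part --- not \emph{enlarging} the ply region, which is precisely what the catalogue of forbidden clique shapes (ASC$|$ASC, DESC$|$DESC, ASC$|$DESC and their floating/anchored analogues) together with the height-$1$ slab bound are there to enforce. The minimum-ply invariant maintained by the greedy (the analogue, for this section, of Claim~\ref{claim:bounds} and the loop invariant) then forces the greedy to carry $s$ forward in place of $s_2$, so $s\in Sol$, contradicting $s\notin Sol$; if instead the comparison shows $C$ is \emph{strictly} better the contradiction is immediate. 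The main technical work is thus the case analysis certifying that a single unit square reaching from $R_1$ to $R_2$ cannot stack on the remaining clique squares so as to raise the ply, and, relatedly, that it also covers the points exclusive to the pair $\{s_1,s_2\}$ so that the replacement stays feasible.
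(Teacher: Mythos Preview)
Your overall strategy matches the paper: argue by contradiction, invoke Claim~\ref{claim:placement} to pin down when $s_1$ is committed, and finish with a case analysis over the clique types. But the mechanics of your contradiction are off in a way that would not close.

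The pivot you choose is wrong. You process a point $q\in Excl(s_2)$ and then argue that \emph{both} $s_1$ and $s_2$ are already in the partial solution before $q$. That cannot be made to work: if $q$ is the leftmost exclusive point of $s_2$, then $s_2$ is precisely the square being added at that step (nothing else in $Sol$ covers $q$), so it is not ``already'' there; if $q$ is not leftmost, the argument should have been run at the leftmost exclusive point instead. Relatedly, you conflate ``the best cover for the points strictly left of $q$'' (a row-$(q-1)$ table entry) with ``the chain leading to $Sol$'' (which contains $s_1,s_2$); these need not coincide, so your cover $C$ is not well specified. And once $s_1,s_2$ are both present, adjoining $s$ can only keep or raise the ply; the algorithm never removes squares, so your ``replacing $s_2$ (and, in a second round, $s_1$) by $s$'' does not correspond to any table entry.

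The paper's fix is simple and is exactly the comparison you should make: stand at the \emph{leftmost} exclusive point $p$ of $s_2$. By Claim~\ref{claim:placement}, $s_1$ is committed; $s_2$ is not. Now compare the two candidate entries $T[p,s_2]$ and $T[p,s]$ over the same row-$(p-1)$ parent. The six cases (top/bottom anchored ASC/DESC, floating ASC/DESC) with the two subcases for whether $s$ meets $L_1$ or $L_2$ are then dispatched using the specific tie-breaking rules of the slab greedy: if $s$ is on the opposite line from $s_1,s_2$, choosing $s$ yields a \emph{floating} clique, which the greedy prefers to the anchored one created by $s_2$; if $s$ is on the same line, one shows that $s$ covers the relevant portion of $s_1\cap s_2$ so that $s_1$ becomes redundant (or the resulting ply region is narrower), and hence the greedy prefers $s$ to $s_2$. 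Your outline never invokes the ``prefer floating over anchored'' and ``prefer narrower ply region'' rules, yet these are what actually drive half the cases. The geometric localization of $Excl(s_1),Excl(s_2)$ into diagonally placed rectangles is a fine heuristic picture, but is neither necessary nor sufficient on its own; what is needed is the per-case check that choosing $s$ at $p$ beats choosing $s_2$ under the greedy criteria.
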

\begin{proof}
We have already established that any clique in our solution $Sol$ containing no redundant squares can be of only a few types. There are $6$ possibilities for the consecutive squares $s_1$ and $s_2$. We analyze them below. In each of the cases below, assume for the sake of contradiction, that there exists a square $s$ such that $s$ covers all the points in $Excl(s_1)\cup Excl(s_2)$.
\begin{figure}[ht!]
    \centering
    \includegraphics[width=10cm]{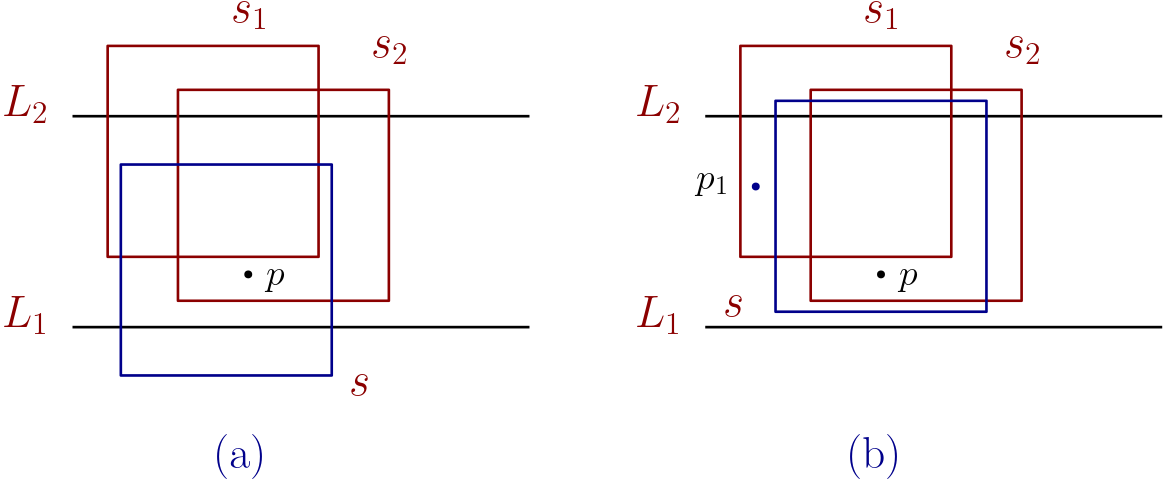}
    \caption{The squares $s_1$ and $s_2$ are top anchored and are in descending order in the clique under consideration.}
    \label{fig:1ab_top_desc}
\end{figure}

\begin{enumerate}
    \item Top Anchored DESC: Here $s_1$ and $s_2$ are intersecting the top line $L_2$ and $s_1\prec s_2$ and $s_1$ lies above $s_2$. Again there are two subcases.
    \begin{enumerate}
        \item If $s$ intersects the bottom line $L_1$: Then at the leftmost exclusive point $p$ of $s_2$, our algorithm has to make a choice between $s$ and $s_2$. Since $s_1$ is already picked due to Claim \ref{claim:placement}, our algorithm will prefer $s$ to $s_2$ since choosing $s_2$ gives a floating clique. And our algorithm would never pick $s_2$ in the future again. Recall that our greedy algorithm prefers floating cliques to anchored cliques. Refer to Figure \ref{fig:1ab_top_desc}(a) for an illustration.
        \item If $s$ intersects the top line $L_2$: Then at the leftmost exclusive point $p$ of $s_2$, our algorithm would have picked $s$ instead of $s_2$ since picking $s$ would render $s_1$ redundant. Since the exclusive region of $s_1$ is rectangular, and the square $s$ covers all the points in $Excl(s_1)$, hence $s$ lies to the left of $s_2$, i.e., $s \prec s_2$. Therefore, $s$ covers every point in $s_1\cap s_2$ lying to the left of $p$. Consider a point $p_1\in s_1\setminus s_2$, which is covered by another square $s_0\in Sol$ but presumably not covered by $s$. Clearly, all the exclusive points of $s_0$ lie to the left of $p$. By Claim \ref{claim:placement}, $s_0$ must have been picked by our solution already. Consider a point $p_2\in s_2\setminus s_1$, which is covered by another square $s_0\in Sol$ but presumably not covered by $s$. If $s_2 \prec s_0$ then all the points of $s_0$ lie to the right of $p$. 
        Hence, we need not worry about covering $p_2$ at this stage. Else if $s_0 \prec s_1$ then $s_0\cap s_2$ will be contained in $s_1\cap s_2$ and such a point $p_2$ cannot exist. 
        Thus picking $s$ during processing $p$ does not cause an increase in the active ply and our algorithm will pick $s$ greedily. Refer to Figure \ref{fig:1ab_top_desc}(b) for an illustration.
    \end{enumerate}
    
    \begin{figure}[ht!]
    \centering
    \includegraphics[width=10cm]{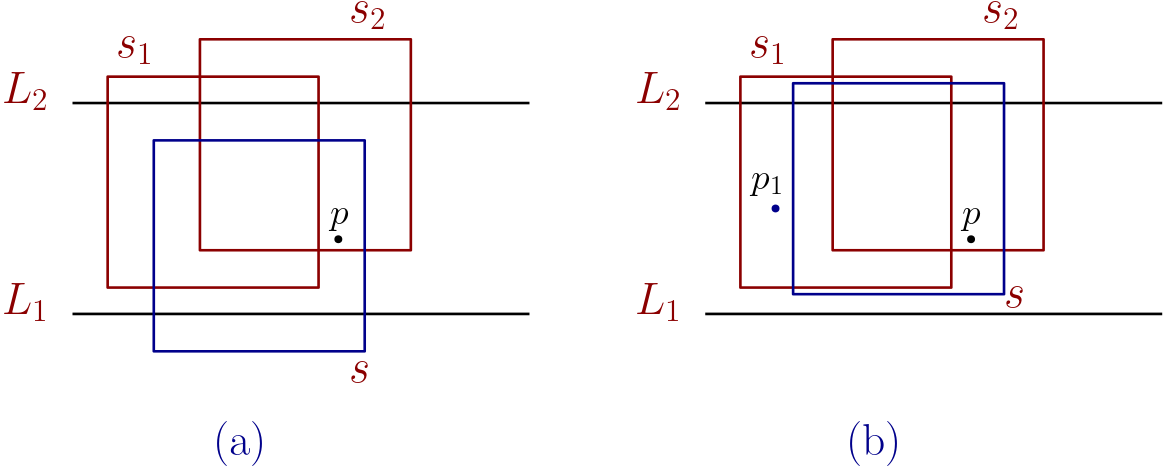}
    \caption{The squares $s_1$ and $s_2$ are top anchored and are in ascending order in the clique under consideration.}
    \label{fig:2ab_top_asc}
    \end{figure}
    
    \item Top Anchored ASC: Here $s_1$ and $s_2$ are intersecting the top line $L_2$ and $s_1\prec s_2$ and $s_1$ lies above $s_2$.  Again there are two subcases.
    \begin{enumerate}
        \item If $s$ intersects the bottom line $L_1$: Then at the leftmost exclusive point $p$ of $s_2$, our algorithm would have picked $s$ instead of $s_2$. The reason is exactly same as the argument for the case 1(a) above.
        \item If $s$ intersects the top line $L_2$: While processing the leftmost exclusive point $p$ of $s_2$, our algorithm has already picked $s_1$ since the rightmost exclusive point of $s_1$ must lie to the left of $p$. At $p$ our algorithm would have picked $s$ instead of $s_2$ since picking $s$ would also render $s_1$ redundant. 
        There are two possibilities. First, if $s_2\prec s$, then our algorithm would prefer $s$ to $s_2$ as it would give a narrower clique of same size. Second, if $s_1\prec s\prec s_2$, then $s$ covers every point in $s_1\cap s_2$. Consider a point $p_1\in s_1\setminus s_2$, which is covered by another square $s_0\in Sol$ but presumably not covered by $s$. If $s_0\prec s_1$, then $s_0$ is already picked by our algorithm when we are processing $p$. If $s_2\prec s_0$, then such a $p$ cannot exist. Refer to Figure \ref{fig:2ab_top_asc}. Consider a point $p_2\in s_2\setminus s_1$, which is covered by another square $s_0\in Sol$ but presumably not covered by $s$. If $s_2 \prec s_0$ then $s$ covers $p_2$ as $s$ covers $p$. Else if $s_0 \prec s_1$ then such a point $p_2$ cannot exist. Thus picking $s$ during processing $p$ does not cause an increase in the active ply and our algorithm will pick $s$ greedily.
    \end{enumerate}
    \item Bottom Anchored DESC: $s_1$ and $s_2$ are intersecting the bottom line $L_1$ and $s_1\prec s_2$ and $s_1$ is above $s_2$. Again there are two subcases.
    \begin{enumerate}
        \item If $s$ intersects the top line $L_2$: Then at the leftmost exclusive point $p$ of $s_2$, our algorithm would have picked $s$ instead of $s_2$. The reason is exactly same as the argument for the case 1(a) above.
        \item If $s$ intersects the bottom line $L_1$: While processing the leftmost exclusive point $p$ of $s_2$, our algorithm has already picked $s_1$ since the rightmost exclusive point of $s_1$ must lie to the left of $p$. At $p$, our algorithm would have picked $s$ instead of $s_2$ since picking $s$ would render $s_1$ redundant. 
        There are two possibilities. First, if $s_2\prec s$, then our algorithm would prefer $s$ to $s_2$ as it would give a narrower clique of same size. 
        Second, if $s_1\prec s\prec s_2$, then $s$ covers every point in $s_1\cap s_2$. Consider a point $p_1\in s_1\setminus s_2$, which is covered by another square $s_0\in Sol$. If $s_0\prec s_1$, then $s_0$ is already picked by our algorithm when we are processing $p$. If $s_2\prec s_0$, then such a $p$ cannot exist. Refer to Figure (). Consider a point $p_2\in s_2\setminus s_1$, which is covered by another square $s_0\in Sol$. If $s_2 \prec s_0$ then such a point $p_2$ lies to the right of $p$. Else if $s_0 \prec s_1$ then such a point $p_2$ cannot exist.        
        Thus picking $s$ for $p$ does not cause an increase in the active ply and this conforms to our greedy choice.
    \end{enumerate}
    
    \begin{figure}[ht!]
    \centering
    \includegraphics[width=10cm]{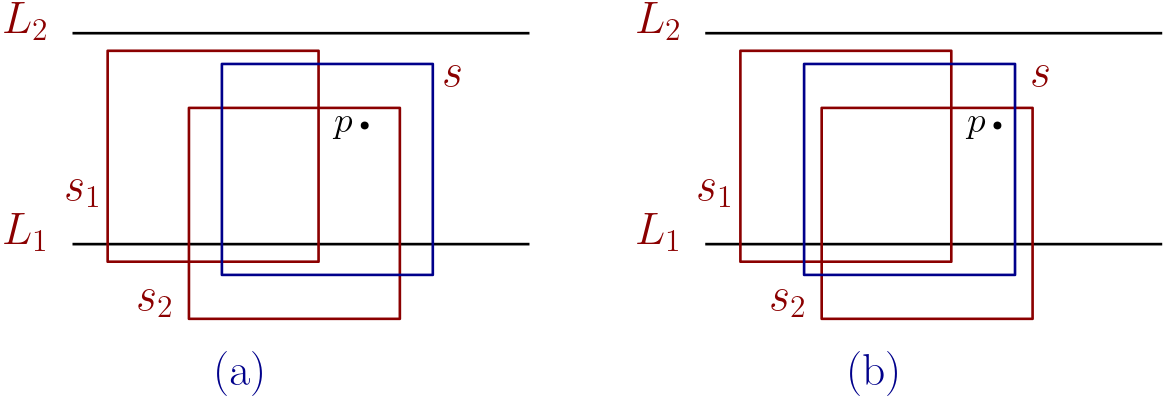}
    \caption{The squares $s_1$ and $s_2$ are bottom anchored and are in descending order in the clique under consideration.}
    \label{fig:3ab_top_asc}
    \end{figure}
    
    \item Bottom Anchored ASC: $s_1$ and $s_2$ are intersecting the bottom line $L_1$ and $s_1\prec s_2$ and $x_{T}(s_1)< x_{T}(s_2)$. 
    Again there are two subcases.
    \begin{enumerate}
        \item If $s$ intersects the top line $L_2$: Then at the leftmost exclusive point $p$ of $s_2$, our algorithm has to make a choice between $s$ and $s_2$. Since $s_1$ is already picked, our algorithm will prefer $s$ to $s_2$ since choosing $s_2$ gives a floating clique. And our algorithm would never pick $s_2$ in the future again. Recall that our greedy algorithm prefers floating cliques to anchored cliques.
        \item If $s$ intersects the bottom line $L_1$: Then at the leftmost exclusive point $p$ of $s_2$, our algorithm would have picked $s$ instead of $s_2$ since picking $s$ would render $s_1$ redundant. Since the exclusive region of $s_1$ is rectangular, and the square $s$ covers all the points in $Excl(s_2)$, hence $s$ lies to the left of $s_2$, i.e., $s \prec s_2$. Therefore, $s$ covers every point in $s_1\cap s_2$ lying to the left of $p$. Consider a point $p_1\in s_1\setminus s_2$, which is covered by another square $s_0\in Sol$. Clearly, all the exclusive points of $s_0$ lie to the left of $p$. By lemma (\ref{claim:placement}), $s_0$ must have been picked by our solution already. Consider a point $p_2\in s_2\setminus s_1$, which is covered by another square $s_0\in Sol$. If $s_2 \prec s_0$ then all the points of $s_0$ lie to the right of $p$. Hence, we need not worry about covering $p_2$ at this stage. Else if $s_0 \prec s_1$ then $s_0\cap s_2$ will be contained in $s_1\cap s_2$ and such a point $p_2$ cannot exist. Thus picking $s$ during processing $p$ does not cause an increase in the active ply and our algorithm will pick $s$ greedily.
    \end{enumerate}
    
    \item Floating DESC: $s_1$ intersects the top line $L_2$ and $s_2$ intersects the bottom line $L_1$ and $s_1\prec s_2$. Again there are two subcases.
    \begin{enumerate}
        \item The square $s$ intersects the top line $L_1$:  If $s_2 \prec s$, then at the leftmost exclusive point $p$ of $s_1$, our algorithm would have picked $s$ instead of $s_1$ since picking $s$ also leads to a narrower clique of same size. If $s_1\prec s \prec s_2$, then at the leftmost exclusive point $p$ of $s_1$, our algorithm would have picked $s$ instead of $s_1$ since picking $s$ leads to a narrower clique of same size. If there exists a point $p_1\in s_1\cap s_2$ which is not covered by any other square in $Sol$, then while processing $p_1$, our algorithm will pick $s_2$ instead of $s_1$ since it leads to a narrower clique. Thus our algorithm never picks $s_1$. If $s\prec s_1 \prec s_2$, then at the leftmost exclusive point $p$ of $s_2$, our algorithm would have picked $s$ instead of $s_2$ and discarded $s_1$. If there exists a point $p_1\in s_1\cap s_2$ which is not covered by any other square in $Sol$ or by $s$, then while processing $p_1$, our algorithm will pick $s_2$ instead of $s_1$. Thus our algorithm never picks $s_1$.
        \item The square $s$ intersects the bottom line $L_1$: Exact same arguments as in Case 5(a) are applicable.
    \end{enumerate}
    \item Floating ASC: $s_1$ intersects the bottom line $L_1$ and $s_2$ intersects the top line $L_2$ and $s_1\prec s_2$. Similar arguments as presented in the Floating DESC case apply to this case.
\end{enumerate}
Since there are no other possibilities for $s_1, s_2$ and $s$, this completes the proof of our claim.
\end{proof}
\begin{lemma}
Consider one of the maximum cliques, say $K$, in our solution $Sol$. To cover the exclusive points of the squares forming $K$, any feasible set cover has to pick $\lfloor k/3 \rfloor$ squares where $|K|=k$.
\end{lemma}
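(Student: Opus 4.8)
Order the squares of $K$ from left to right as $t_1\prec t_2\prec\cdots\prec t_k$. Because $Sol$ has no redundant square, each $t_i$ owns a non-empty set of exclusive input points $Excl(t_i)\subseteq P$, and the sets $Excl(t_1),\dots,Excl(t_k)$ are pairwise disjoint. Fix any feasible cover $C$ of the input; then $C$ covers every point of $\bigcup_{i=1}^{k}Excl(t_i)$. For an input square $s$ let $\mathrm{reach}(s)=\{\,i:\ s\cap Excl(t_i)\neq\emptyset\,\}$. Since the $Excl(t_i)$ are non-empty, disjoint and jointly covered by $C$, we have $\bigcup_{s\in C}\mathrm{reach}(s)=\{1,\dots,k\}$, so $k\le\sum_{s\in C}|\mathrm{reach}(s)|$. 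Hence it is enough to show $|\mathrm{reach}(s)|\le 3$ for every input square $s$; this gives $|C|\ge\lceil k/3\rceil\ge\lfloor k/3\rfloor$, as required.

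The plan for the bound $|\mathrm{reach}(s)|\le 3$ is to argue by contradiction and reduce to Claim~\ref{claim:excl}. Suppose an input square $s$ meets $Excl(t_a)$ and $Excl(t_b)$ with $b\ge a+3$. The exclusive regions of the clique squares occur in the same left-to-right order as the squares, the slab containing $P$ has unit height, and $s$ has unit width and unit height; combining these facts with the structural classification of the cliques of $Sol$ established above (every clique is one of the monotone top-/bottom-anchored ASC or DESC types, or one of the DESC$|$ASC / ASC$|$DESC mergers, and the exclusive region of a non-extreme clique square is rectangular or $L$-shaped), I would show that a unit square containing an exclusive point of $t_a$ and an exclusive point of $t_b$ must contain the whole of $Excl(t_{a+1})\cup Excl(t_{a+2})$. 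Since $a+1\ge 2$, the square $t_{a+1}$ is not the leftmost square of $Sol$, so this contradicts Claim~\ref{claim:excl}. Therefore $b\le a+2$ for all $a,b\in\mathrm{reach}(s)$, i.e.\ $\mathrm{reach}(s)$ lies in three consecutive indices and $|\mathrm{reach}(s)|\le 3$.

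The step I expect to be the main obstacle is exactly this geometric ``enclosure'' claim. A non-extreme clique square may have two disconnected exclusive components -- one near $L_1$ and one near $L_2$ -- and a clique may contain up to two such squares in addition to its two extreme squares, so for those squares the window of $x$-coordinates spanned by $s$ need not swallow the whole exclusive set and the argument can break. The plan is to isolate these $O(1)$ anomalous squares of $K$ and drop them from the counting step: this lowers $k$ only by an additive constant, which is absorbed by the floor in the statement (and, later, by the additive $\epsilon$ in the overall approximation guarantee), while for every remaining square of $K$ the enclosure follows from a short case analysis over the clique types, organised just like the case analysis in the proof of Claim~\ref{claim:excl} and of the redundancy (``relevant area'') claims. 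Putting the pieces together -- $|\mathrm{reach}(s)|\le 3$ for all input squares $s$, and $\bigcup_{s\in C}\mathrm{reach}(s)=\{1,\dots,k\}$ -- yields $|C|\ge\lfloor k/3\rfloor$ for every feasible cover $C$.
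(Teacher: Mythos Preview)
Your proposal is essentially the paper's argument, reformulated. The paper partitions the clique into blocks of three consecutive squares and argues, via Claim~\ref{claim:excl} together with the monotonicity of the exclusive points along a monotone clique, that any single input square can touch exclusive points from at most three (or, at the top/bottom transition, four) consecutive clique squares; your $|\mathrm{reach}(s)|\le 3$ bound and the counting inequality $k\le\sum_{s\in C}|\mathrm{reach}(s)|$ are just the contrapositive and the standard double-counting of that same statement. Your ``enclosure'' step---$s$ hitting $Excl(t_a)$ and $Excl(t_{a+3})$ forces $s\supseteq Excl(t_{a+1})\cup Excl(t_{a+2})$---is exactly what the paper is asserting when it writes ``otherwise such a square $s$ would end up covering all points in $Excl(s_1)\cup Excl(s_2)$'', and both of you handle the $O(1)$ transition squares (where the monotonicity of exclusive regions breaks) by discarding them and absorbing the loss in the final constant. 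One small caution: the additive loss from dropping those squares is \emph{not} literally absorbed by the floor $\lfloor k/3\rfloor$ as you suggest; it is absorbed later by the $\epsilon$ in the $(9+\epsilon)$- and $(27+\epsilon)$-guarantees, which is also how the paper treats it.
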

\begin{proof}
The necessity of $\lfloor k/3\rfloor$ squares to cover the exclusive points of the squares in the clique $K$ of $Sol$ is a direct consequence of Claim (\ref{claim:excl}).
We have already shown that the exclusive points in a ASC clique (resp. DESC clique) are monotonically ascending (resp. descending) from left to right except possibly at $4$ squares. First we argue for the case when $K$ is a monotonic clique of ASC type. Consider four consecutive squares in $K$, say, $s_1, s_2, s_3$ and $s_4$, all of which intersect the top line $L_2$. No square $s$ can cover exclusive points from all the $4$ squares. Otherwise such a square $s$ would end up covering all points in $Excl(s_1)\cup Excl(s_2)$. Refer to Figure \ref{fig:a_mandatory_squares}. Similar argument is applicable if the squares are all bottom-intersecting. The only exception can take place if some squares are top-intersecting and some are bottom-intersecting as shown in Figure \ref{fig:mandatory_squares}. In this case, a square $s$ may cover exclusive points from all of $s_1, s_2, s_3, s_4$. But covering exclusive points from $5$ squares will be impossible for similar reasons. \\
Now partition the squares of the maximum clique into groups of $3$ from left to right. For each such group at least $1$ square is necessary except possibly for one group at the transition from top-to-bottom or bottom-to-top. Similar arguments apply for transition squares if any. Hence $\lfloor k/3 \rfloor$ squares are necessary.
\end{proof}
\begin{figure}[ht!]
    \centering
    \includegraphics[width=10cm]{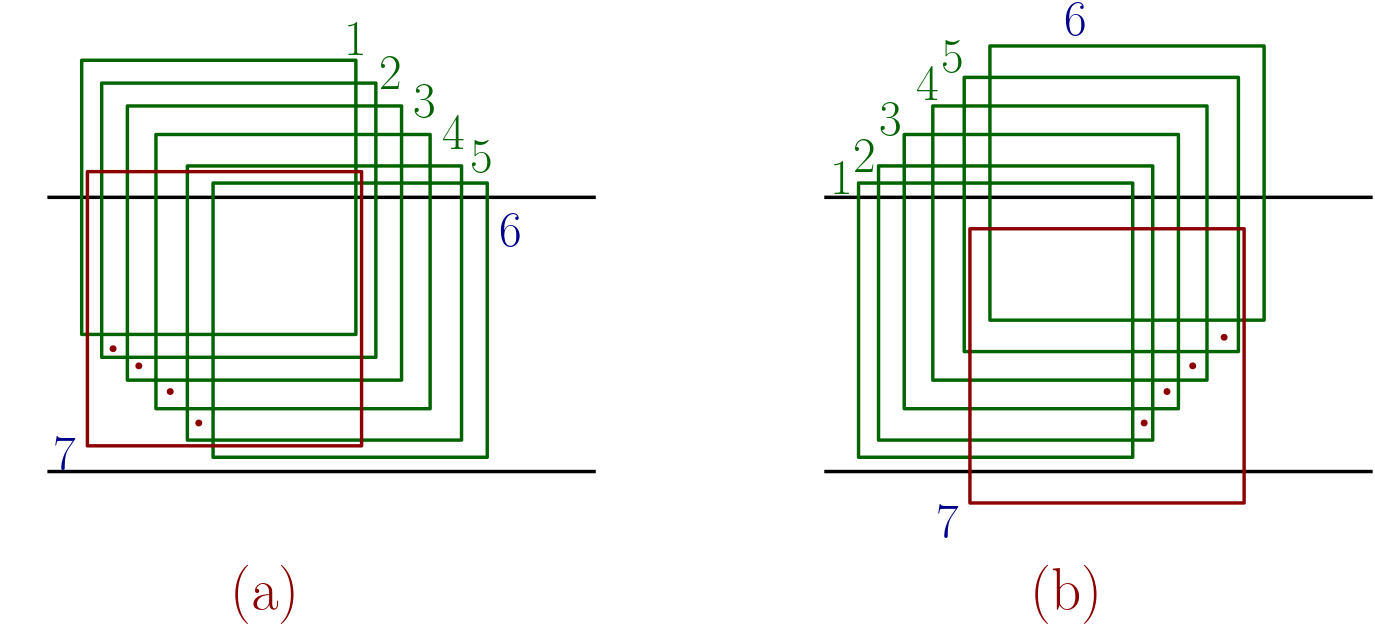}
    \caption{(a) Consider the 4 consecutive squares $2,3,4,5$ in this top-anchored monotonic descending clique. The square $7$ covers exclusive points from all the four squares. Specifically it covers all the points in $Excl(3)\cup Excl(4)$. (b) Consider the 4 consecutive squares $2,3,4,5$ in this top-anchored monotonic ascending clique. The square $7$ covers exclusive points from all the four squares. Specifically it covers all the points in $Excl(3)\cup Excl(4)$.}
    \label{fig:a_mandatory_squares}
\end{figure}

\begin{figure}[ht!]
    \centering
    \includegraphics[width=5cm]{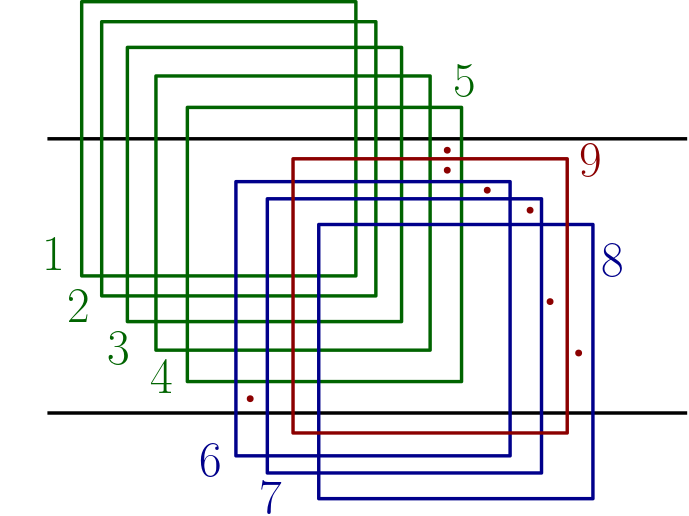}
    \caption{The square $9$ in this floating clique covers exclusive points from the squares $5, 6, 7, 8$ and covers all exclusive points of only one square, i.e., square $7$.}
    \label{fig:mandatory_squares}
\end{figure}

\begin{lemma}\label{lemma:9approx}
    Among the squares required to cover a clique of size $k$ in $Sol$, at least $\lfloor\frac{k}{9}\rfloor - 1$ squares have a common intersection.
\end{lemma}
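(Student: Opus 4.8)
The plan is to derive the claimed clique inside $OPT$ from two ingredients already available: the counting bound of the preceding lemma, and the rigid geometry of cliques inside a unit-height slab. Fix a maximum clique $K$ of $Sol$ with $|K|=k$ and let $q\in clique(K)$ be a point common to all squares of $K$. By the preceding lemma, every feasible cover --- in particular $OPT$ --- must already spend at least $\lfloor k/3\rfloor$ of its squares just to cover the exclusive points $E:=\bigcup_{s\in K}Excl(s)$; call this sub-collection $M$, so $|M|\ge\lfloor k/3\rfloor$. It then suffices to exhibit at least $\lfloor k/9\rfloor-1$ squares of $M$ with a common point, since such a point witnesses a clique of that size in $OPT$ and hence lower-bounds $ply(OPT)$.

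The next step is to localize $E$. Because every square of $K$ contains $q$ and has unit side, all their left edges lie in $[x_L\!-\!1,x_L]$-type intervals inside $[x(q)-1,x(q)]$, so the union of $K$ has horizontal extent at most $2$; since the points of $P$ all lie in the height-$1$ slab, $E$ is contained in an axis-parallel box of width at most $2$ and height at most $1$. Invoking the clique classification and the claims on the shape of exclusive regions, the points of $E$, read in the left-to-right order of their host squares, split into $O(1)$ maximal monotone staircases (the splits occurring only at transition squares), each staircase being $x$-monotone and $y$-monotone and confined to that box. Discarding the $O(1)$ exclusive points at the transitions changes the counts only additively, so I may assume $E$ is a single monotone staircase with $|E|\ge k-O(1)$, keeping this loss inside the ``$-1$''.

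Now the combinatorial heart. Order $M$ from left to right as $u_1\prec\dots\prec u_r$. Since a monotone staircase meets any axis-parallel square in a contiguous sub-chain, each $u_j$ covers a contiguous block of $E$; as these blocks together cover $E$ and are ordered, consecutive blocks overlap or abut, and whenever two blocks overlap the two squares share a staircase point, hence intersect. To handle the abutting case and, more crucially, to upgrade pairwise intersection to a common point, I would use that each $u_j$ is a unit square meeting $L_1$ or $L_2$: splitting $M$ according to which of $L_1,L_2$ a square meets pins the $y$-coordinate of its lower-left corner to a length-$1$ interval, while the staircase confinement pins the $x$-coordinate of that corner to an interval of length $O(1)$. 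A pigeonhole over unit cells for these corners --- sharpened by the forced overlaps of consecutive blocks so that the cells can be taken \emph{along} the staircase rather than independently in $x$ and $y$ --- isolates a sub-collection of at least $\lfloor k/9\rfloor-1$ squares whose lower-left corners fall in one unit cell. Such unit squares pairwise intersect, and by Helly's theorem for axis-parallel rectangles (Helly number $2$) they have a common point, which gives the desired clique in $OPT$.

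The step I expect to be the main obstacle is precisely this last piece of bookkeeping. A naive ``corners in a common unit cell'' pigeonhole loses a factor $3$ in the $x$-direction and a factor $2$ in the $y$-direction and only delivers $\lfloor k/18\rfloor$; reaching $\lfloor k/9\rfloor$ forces one to exploit the monotone-staircase structure --- contiguity of the covered blocks and their forced overlaps --- rather than treating the $x$- and $y$-spreads as independent, and to check carefully that the $O(1)$ transition squares together with the two rounding steps are all absorbed into the single additive ``$-1$''. The monotone-clique classification established above is exactly what makes this sharpening possible.
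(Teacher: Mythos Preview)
Your proposal is not a proof: you explicitly flag the final pigeonhole step as ``the main obstacle'' and do not carry it out. The gap is real. A unit-cell pigeonhole on corner positions, even after splitting by which of $L_1,L_2$ a square meets, gives at best a factor $2\cdot 3=6$ on top of $\lfloor k/3\rfloor$, i.e.\ $\lfloor k/18\rfloor$, and the vague appeal to ``contiguity of covered blocks and their forced overlaps'' does not, as stated, recover the missing factor of $2$. Contiguity along a staircase gives pairwise intersection of \emph{consecutive} covering squares, which is much weaker than a common point for a constant fraction of them; Helly only helps once you already have pairwise intersection among \emph{all} members of a subfamily.

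The paper's argument supplies the missing geometric fact that your pigeonhole is trying to substitute for. For a \emph{single} monotone staircase of exclusive points inside the slab, the squares of $OPT$ covering it split into only two classes --- those meeting $L_1$ and those meeting $L_2$ --- and \emph{within each class all squares have a common point}. The reason is directional: along an ascending staircase, a top-intersecting cover square must approach its point from the left and a bottom-intersecting one from the right, which forces each class to pile up at one end of the staircase. This already yields $\lfloor k/6\rfloor-1$ for monotone cliques. For a clique of $Sol$ built from two monotone sequences (the ASC$|$DESC, DESC$|$ASC, ASC$|$ASC, DESC$|$DESC types), one gets four such classes $Q_1,Q_2,Q_3,Q_4$, but two of them --- the ones anchored to the \emph{same} boundary line and sitting on opposite sides of the transition --- again share a common point, so effectively there are three classes and pigeonhole gives $\lfloor k/9\rfloor-1$. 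Your Helly-plus-unit-cell route does not capture this ``all squares of one class share a point'' phenomenon, which is the crux.
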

\begin{proof}
First, consider a top-anchored ASC clique. The exclusive points are also monotonically ascending except possibly at the leftmost square. In  this clique, consider a non-extreme exclusive point $p$. Either $p$ can be covered by a top intersecting square from the left or it can be covered by a bottom intersecting square from the right as shown in the Figure \ref{fig:cliques_overlap}(b). This implies that all the bottom intersecting squares covering non-extreme exclusive points intersect. Similarly, all the top intersecting squares covering non-extreme exclusive points intersect. Hence, two cliques are formed. Applying the pigeonhole principle, one of the cliques is of size at least $\frac{\lfloor k/3\rfloor - 2}{2} = \lfloor\frac{k}{6}\rfloor -1$.

For a monotonic floating clique or a clique which is made up of two monotonic sequences of squares, i.e., a clique of type ASC$|$DESC or DESC$|$ASC, the exclusive points may have two different monotonic sequences. Therefore we need a different argument. 

Consider a clique $Q$ is of type bottom-anchored ASC$|$DESC. Observe that there are two monotonic sequences of exclusive points in $Q$ - the first sequence is ascending and the second sequence is descending. As before, two cliques are necessary to cover the points of the first monotonic sequence of exclusive points. Denote these cliques by $Q_1$ and $Q_2$ where $Q_1$ is top-anchored and $Q_2$ is bottom-anchored. Also, there are two other cliques in $OPT$ for covering the exclusive points of the second monotonic sequence. Denote these cliques by $Q_3$ and $Q_4$ where $Q_3$ is bottom-anchored and $Q_4$ is top-anchored. All the $L_1$ (i.e., bottom line) intersecting squares, i.e., the squares in $Q_2$ and $Q_3$ have a common intersection region. Refer to the Figure \ref{fig:cliques_overlap}(d). Therefore, in $OPT$ there is exists a clique of size
\[
\max(|Q_1|, |Q_2|+|Q_3|, |Q_4|)
\]
Applying the pigeonhole principle, one of the cliques have size at least $\frac{\lfloor k/3 \rfloor}{3} - 1 = \lfloor\frac{k}{9}\rfloor - 1$.

Consider a clique $Q$ is of type floating DESC$|$DESC. Observe that there are two monotonic sequences of exclusive points in $Q$ - both the second sequences are descending. As before, two cliques are necessary to cover the points of the first monotonic sequence of exclusive points. Denote these cliques by $Q_1$ and $Q_2$ where $Q_1$ is bottom-anchored and $Q_2$ is top-anchored. Also, there are two other cliques in $OPT$ for covering the exclusive points of the second monotonic sequence. Denote these cliques by $Q_3$ and $Q_4$ where $Q_3$ is bottom-anchored and $Q_4$ is top-anchored. Here the geometry is such that all the squares in $Q_2$ which are intersecting $L_2$ intersect the squares in $Q_3$, which intersect $L_1$. Refer to the Figure \ref{fig:cliques_overlap}(e). Therefore, in $OPT$ there is exists a clique of size
\[
\max(|Q_1|, |Q_2|+|Q_3|, |Q_4|)
\]
Applying the pigeonhole principle, one of the cliques have size at least $\frac{\lfloor k/3\rfloor}{3} - 1 = \lfloor\frac{k}{9}\rfloor - 1$.
\end{proof}

\begin{figure}[ht!]
    \centering
    \includegraphics[width=12cm]{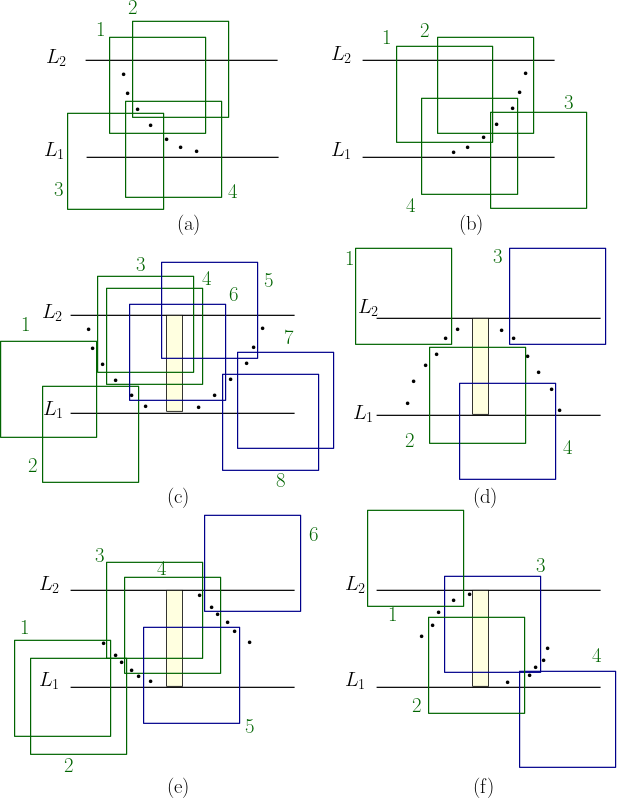}
    \caption{Shows the different cases for the proof of Lemma (\ref{lemma:9approx}).}
    \label{fig:cliques_overlap}
\end{figure}
\begin{theorem}
Given s set of $n$ points and $m$ axis-parallel unit squares on the plane, our algorithm computes a $(27+\epsilon)$-factor approximation of the minimum ply cover in $O((nm)^2)$ time, where $\epsilon>0$ is a small positive constant.
\end{theorem}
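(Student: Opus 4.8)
The plan is to assemble the theorem from two ingredients already in hand. Lemma~\ref{lem:3_approx} gives, for every $\epsilon'>0$, a $(9+\epsilon')$-approximation for the unit-height slab problem of Definition~\ref{defn:minply_slab}, and Theorem~\ref{thm:3times} turns any $c$-approximation for that problem into a $3c$-approximation for unrestricted minimum ply cover. Taking $c=9+\epsilon'$ and then setting $\epsilon'=\epsilon/3$ yields a $(27+\epsilon)$-approximation. So the only items left to pin down are the precise algorithm realizing this composition and its running time.

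First I would describe the algorithm. Sort the $n$ input points and cut the plane into the unit-height horizontal slabs $H_1,\dots,H_t$, $t\le n$, that contain at least one point. For slab $H_i$ let $P_i$ be the points it contains and let $\mathcal{S}_i$ be the squares that contain at least one point of $P_i$; as observed in the proof of Theorem~\ref{thm:3times}, a unit square containing a point of $H_i$ must meet the top or the bottom boundary of $H_i$, so $(P_i,\mathcal{S}_i)$ is a legitimate instance of Definition~\ref{defn:minply_slab}. Run the slab greedy on each instance to obtain $Sol_i$ and output $Sol=\bigcup_i Sol_i$. Feasibility and the bound $ply(Sol)\le 3(9+\epsilon')\cdot ply(OPT)$ are exactly the content of Theorem~\ref{thm:3times}: an interior point of $H_i$ can be covered only by squares that are candidates of $H_{i-1}$, $H_i$, or $H_{i+1}$, which is where the factor $3$ comes from.

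For the running time, write $n_i=|P_i|$ and $m_i=|\mathcal{S}_i|$. The slab greedy fills an $n_i\times m_i$ table; each entry scans the $O(m_i)$ feasible entries of the previous row, and forming and comparing the resulting ply region --- a clique of one of the finitely many types classified earlier, whose relevant description involves $O(n_i)$ combinatorial data --- costs $O(n_i)$, for a per-slab bound of $O(n_i^2 m_i^2)$. Since the slabs partition the plane, $\sum_i n_i=n$; and since a unit square has vertical extent $1$, it is a candidate of at most two consecutive slabs, so $\sum_i m_i\le 2m$, hence $\sum_i n_i m_i\le m\sum_i n_i=nm$ and $\sum_i n_i^2 m_i^2\le\bigl(\sum_i n_i m_i\bigr)^2\le (nm)^2$. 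Adding the $O(n\log n)$ preprocessing gives the claimed $O((nm)^2)$ total.

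I expect no deep obstacle here: the substantive work has already been done inside Lemma~\ref{lem:3_approx} and the clique-classification claims preceding it, and what remains are bookkeeping checks --- that the horizontal slabs cover every point of the plane, that a unit square influences at most two adjacent slab subproblems (the same fact that both feeds the factor $3$ in Theorem~\ref{thm:3times} and bounds $\sum_i m_i$), and that per-entry ply-region maintenance in the slab greedy is cheap enough to keep the aggregate at $O((nm)^2)$. The one place calling for a little care is the $\epsilon$-accounting: the additive slack implicit in Lemma~\ref{lem:3_approx} is harmless exactly in the $\omega(1)$-ply regime that the statement targets, and for $ply(OPT)=O(1)$ one would instead defer to the $2$-approximation of~\cite{Biedl_Minply}.
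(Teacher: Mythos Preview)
Your proposal is correct and follows essentially the same approach as the paper: combine Theorem~\ref{thm:3times} with the $(9+\epsilon)$-approximation for the slab problem to obtain the factor $27+\epsilon$, and then bound the running time by summing over at most $n$ slab subproblems. The paper's accounting is slightly coarser --- it charges $O(m)$ per table entry (taking ply-region updates as $O(1)$), getting $O(nm^2)$ per slab and $O(n^2m^2)$ over $n$ slabs --- whereas you allow $O(n_i)$ per ply-region comparison and then use $\sum_i n_i^2 m_i^2 \le (\sum_i n_i m_i)^2 \le (nm)^2$; both reach the same bound.
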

\begin{proof}
    The approximation factor is a direct consequence of Theorem \ref{thm:3times} and Lemma \ref{lemma:9approx}. The algorithm for the horizontal slab subproblem computes a table having $nm$ entries. Each entry can be computed in $O(m)$ time. Therefore, each subproblem requires $O(nm^2)$ time. There are at most $n$ subproblems. Hence the total time required is $O((nm)^2)$.
\end{proof}

\section{Conclusion}
In this paper we have given an algorithmic technique that runs fast for the minimum ply cover problem with axis-parallel unit squares. We have been able to characterize the structure of any clique in our solution and compare it with the maximum clique of the intersection graph of an optimal solution. It may be possible to improve the approximation ratio further. We believe that our
technique can be generalized to obtain polynomial-time approximation algorithms for broader class of objects.
\vspace{2pt}

\textbf{Acknowledgement}. I would like to thank Sathish Govindarajan for many useful discussions on the minimum ply covering problem and for his valuable comments. I would also like to thank Aniket Basu Roy and Shirish Gosavi for their time discussing the problem with me.

\bibliographystyle{plainnat}
\bibliography{ref}


\end{document}